\newtheorem{thm}{Theorem}
\newtheorem{lem}[thm]{Lemma}
\newtheorem{cor}[thm]{Corollary}
\theoremstyle{definition}
\newtheorem{dfn}[thm]{Definition}
\newtheorem{obs}[thm]{Observation}
\newtheorem*{rem}{Remark}
\newtheorem*{rep@theorem}{\rep@title}
\newcommand{\newreptheorem}[2]{%
	\newenvironment{rep#1}[1]{%
    \def\rep@title{#2 \ref{##1} (restated)}%
		\begin{rep@theorem}\itshape}%
		{\end{rep@theorem}}}
 \let\epsilon=\varepsilon
 \def\ZPM{Z_{\text{PM}}}
 \def\ZNPM{Z_{\text{NPM}}}
\let\op=\oplus  
\let\meet=\wedge  
\let\join=\vee
\let\wh=\widehat
\def\zN{\mathbb{N}}  
\def\zQ{\mathbb{Q}}  
\def\cB{\mathcal{B}}  
\def\cE{\mathcal{E}}
\def\cF{\mathcal{F}}
\def\cG{\mathcal{G}}
\def\cM{\mathcal{M}}
\def\cN{\mathcal{N}}
\def\cP{\mathcal{P}}   
\def\PM{\mathsf{Pin\text{-}MON}}
\def\MON{\mathsf{MON}}
\def\SDP{\mathsf{SDP}}
\newcommand{\IMtwo}{\mbox{$I\!M_2$}\xspace} 
\def\Gud{\ang{\cF,\upf,\downf}_{\om,p}}  
\def\Fzo{\ang{\cF,\delta_0,\delta_1}}  
\def\wte{\mathsf{WtEven3}}
\def\ba{{\bf a}}  
\def\bb{{\bf b}}  
\def\bc{{\bf c}}  
\def\bd{{\bf d}}  
\def\bp{{\bf p}}
\def\bv{{\bf v}}  
\def\bx{{\bf x}}  
\def\by{{\bf y}}
\def\NCSP{{\rm \#CSP}}
\def\NEQ{\mathrm{NEQ}}
\def\LSM{\mathsf{LSM}}
\def\EQ{\mathrm{EQ}}
\def\hol{{\sf Holant}} 
\def\IMP{\mathrm{IMP}}
\def\BIS{{\sf \#BIS}}
\def\SAT{{\sf \#SAT}}
\def\FP{{\sf FP}}
\def\NP{{\sf NP}}
\def\RP{{\sf RP}}
\def\sP{{\sf \#P}}
\let\sse=\subseteq  
\def\ang#1{\langle #1 \rangle}  
\def\vc#1#2{#1 _1\zd #1 _{#2}}  
\def\zd{,\ldots,}  
\def\upf{\mathsf{up}}
\def\downf{\mathsf{down}}
\def\binar#1#2#3#4{
\left(\begin{array}{cc}
#1 & #2\\ #3 & #4
\end{array}\right)
}
\def\ari{\operatorname{arity}}
\def\abs#1{\left| #1 \right|}
\let\al=\alpha  
\let\gm=\gamma  
\let\dl=\delta
\let\ld=\lambda  
\let\om=\omega   
\let\Gm=\Gamma
\title{Boolean approximate counting CSPs with weak conservativity, and implications for ferromagnetic two-spin}
\author{Miriam Backens\thanks{The research leading to these results has received funding from the European Research Council under the European Union's Seventh Framework Programme (FP7/2007--2013) ERC grant agreement no.\ 334828 (Backens and Goldberg) and Horizon 2020 research and innovation programme (grant agreement no.\ 714532, \v{Z}ivn\'{y}). The paper reflects only the authors' views and not the views of the ERC or the European Commission. The European Union is not liable for any use that may be made of the information contained therein.} \and Andrei Bulatov\thanks{Supported by an NSERC Discovery Grant.} \and Leslie Ann Goldberg\footnotemark[1] \and Colin McQuillan \and Stanislav \v{Z}ivn\'{y}\footnotemark[1] \thanks{Supported by a Royal Society University Research Fellowship.}}
\date{15 December 2019}
\begin{document}

\maketitle

\begin{abstract}
 We analyse the complexity of approximate counting constraint satisfactions problems $\NCSP(\cF)$, where $\cF$ is a set of nonnegative rational-valued functions of Boolean variables.
 A complete classification is known in the conservative case, where 
 $\cF$ is assumed to contain arbitrary unary functions. 
 We strengthen this result by fixing any permissive strictly increasing unary function and any permissive strictly decreasing unary function, and adding only those to $\cF$: this is weak conservativity.
 The resulting classification is employed to characterise the complexity of a wide range of two-spin problems, fully classifying the ferromagnetic case.
 In a further weakening of conservativity, we also consider what happens if only the pinning functions are 
assumed to be in~$\cF$ (instead of the two permissive unaries).
 We show that any set of functions 
 for which pinning is not sufficient to recover the two kinds of permissive unaries 
 must either have a very simple range, or must satisfy a certain monotonicity condition. 
 We   exhibit a non-trivial example of a set of functions satisfying the monotonicity condition.
\end{abstract}

\section{Introduction}

A counting constraint satisfaction problem (counting CSP or \NCSP) is parameterised by a finite set $\cF$ of functions taking values in some ring.
An instance $\Omega$ of $\NCSP(\cF)$ consists of a set of variables taking values in some domain $D$, and a set of constraints.
Each constraint is a tuple containing a list of (not necessarily distinct) variables, called the scope, and a constraint function, which is an element of $\cF$ whose arity is equal to the number of variables in the scope.

Any assignment of values to the variables yields a \emph{weight}, which is the product of the resulting values of the constraint functions.
Given the instance $\Omega$, the computational problem is to determine (either exactly or approximately) the 
sum of the weights of all assignments.

Many counting problems can be expressed in the counting CSP framework.
Consider, for example, the problem of counting the number of 2-colourings of a finite graph $G=(V,E)$.
A $2$-colouring is an assignment from $V$ to $\{0,1\}$
which has the property that any two vertices connected by an edge are assigned different values.
This can be expressed as an instance $\Omega$ of $\NCSP(\{f\})$, where $f$ is the symmetric binary function satisfying $f(0,1)=f(1,0)=1$ and $f(0,0)=f(1,1)=0$.
The variables of the instance correspond to the vertices of the graph and the constraints correspond to the edges.
An assignment  of values in $\{0,1\}$ to the variables has weight 1 if it corresponds to a valid 2-colouring, and weight 0 otherwise.
Thus the sum of the weights of all assignments is exactly the number of 2-colourings of $G$.

Counting CSPs are closely related to certain problems arising in statistical physics.
Each variable can be thought of as an object which can be in one of several states.
Adjacent objects interact and the strength of such an interaction is captured by a constraint
function. Thus, an assignment associates states with objects.
The sum of the weights of all assignments, denoted $Z(\Omega)$, is called the  \emph{partition function} of the physical system.

Throughout this paper, we will consider Boolean counting CSPs,
which are counting CSPs in which the variables take values from the domain $D=\{0,1\}$.
Constraint functions will be assumed to take nonnegative rational values.
The set of all arity-$k$ nonnegative rational-valued functions of Boolean inputs is denoted $\cB_k$, and we write $\cB=\bigcup_{k\in\zN}\cB_k$ for the set of nonnegative rational-valued functions of Boolean inputs with arbitrary arity.

The complexity of exactly solving Boolean counting CSPs is fully classified, even 
when the constraint functions are allowed to take algebraic complex values \cite{cai_complexity_2014}.
This classification takes the form of a dichotomy: if $\cF$ is a subset of one of two specific families of functions, the problem is in \FP; otherwise it is \sP-hard.
When the ranges of constraint functions in~$\cF$ are restricted to be nonnegative rational values
(or even algebraic real values),   there is only one tractable family, known as \emph{product-type functions} and denoted $\cN$ (see Definition~\ref{def:cN}).

In this paper, we consider the complexity of \emph{approximately} solving counting CSPs. We 
classify constraint families~$\cF$ according to whether or not there is a fully polynomial-time randomised approximation scheme (FPRAS) for  the problem $\NCSP(\cF)$.

If all the constraints in~$\cF$ are Boolean functions, i.e.\ functions in $\cB$ whose range is $\{0,1\}$, 
then the approximation problem $\NCSP(\cF)$  is fully classified \cite{Dyer10:approximation}. We state the  precise classification of \cite{Dyer10:approximation} as Theorem~\ref{thm:complexity_Boolean} of this paper.
Informally,
the classification takes the form of a trichotomy, separating problems into ones that are in \FP, ones that are equivalent to the problem of counting independent sets in a bipartite graph (denoted \BIS), and ones that do not have an FPRAS unless $\NP=\RP$.
The equivalence is under approximation-preserving reductions (AP-reductions); in the following, we write $A\leq_{AP} B$ if $A$ can be reduced to $B$ under AP-reductions.

There is also a \emph{conservative} classification for nonnegative efficiently-computable real-valued functions, where ``conservative'' means that 
$\cF$ is assumed to  contain  arbitrary unary functions   \cite{bulatov_expressibility_2013}.
This classification straightforwardly restricts to 
the case in which constraint functions take
non-negative rational values \cite{chen_complexity_2015}.
The restriction is stated in this paper as Theorem~\ref{thm:complexity_conservative}.
It uses the class of log-supermodular functions, denoted $\LSM$.
A $k$-ary function $f$ is log-supermodular if $f(\bx\vee\by)f(\bx\wedge\by)\geq f(\bx)f(\by)$ for all $\bx,\by\in\{0,1\}^k$, where $\vee$ and $\wedge$ are applied bit-wise.
The classification shows that $\NCSP(\cF)$ is at least as hard as \BIS\ if $\cF\nsubseteq\cN$.
If $\cF\nsubseteq\cN$
and $\cF\nsubseteq \LSM$ then
it is (presumably) even harder -- it is as hard as counting the satisfying assignments of a Boolean formula (so there is no FPRAS unless $\NP=\RP$).
The paper \cite{bulatov_expressibility_2013}
also implies a \BIS-easiness result for the case when $\cF \subseteq \LSM$
and all constraint functions have arity at most three.

In this paper, we give a complexity classification for counting CSPs under a significantly weaker conservativity assumption: instead of  adding arbitrary unary functions to~$\cF$, we fix \emph{one} strictly increasing 
permissive unary function
and \emph{one} strictly decreasing permissive unary function, and  we add (only) these to~$\cF$.

\newcommand{\stateupdown}[4]{
 Let $\upf$ be a permissive unary strictly increasing function, let $\downf$ be a permissive unary strictly decreasing function, and let $\cF\sse\cB$.
 Then the following properties hold.
 \begin{enumerate}
  \item #1
   If $\cF\sse\cN$, then, for any finite subset $S$ of $\cF$, $\NCSP(S\cup\{\upf,\downf\})$ is in \FP.
  \item
   Otherwise, if $\cF\sse\LSM$, then
    \begin{enumerate}
     \item #2 there is a finite subset $S$ of $\cF$ such that $\BIS \leq_{AP} \NCSP(S\cup\{\upf,\downf\})$, and
     \item #3 for every finite subset $S$ of $\cF$ such that all functions $f\in S$ have arity at most 2, $\NCSP(S\cup\{\upf,\downf\})\leq_{AP}\BIS$.
    \end{enumerate}
  \item #4
   Otherwise, there is a finite subset $S$ of $\cF$ such that $\NCSP(S\cup\{\upf,\downf\})$ does not have an FPRAS unless $\NP=\RP$.
 \end{enumerate}
}
\begin{thm}\label{thm:up-down}
 \stateupdown{\label{p:cN_FP}}{\label{p:lsm_BIS-hard}}{\label{p:lsm_BIS-easy}}{\label{p:non-lsm_hard}}
\end{thm}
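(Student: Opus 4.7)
The plan is to lift the conservative classification (Theorem~\ref{thm:complexity_conservative}) to the weakly-conservative setting by exploiting the fact that $\{\upf,\downf\}$ is, for approximation purposes, almost as powerful as the full set of rational unary functions. Part~\ref{p:cN_FP} is immediate: every unary function lies in $\cN$, so $\cF\sse\cN$ implies $S\cup\{\upf,\downf\}\sse\cN$ for any finite $S\sse\cF$, and tractability follows from the standard polynomial-time algorithm for $\NCSP$ instances whose constraints are all product-type. The easiness half of Part~\ref{p:lsm_BIS-easy} is also immediate, by monotonicity of $\leq_{AP}$ in the constraint set: any instance of $\NCSP(S\cup\{\upf,\downf\})$ is in particular an instance of $\NCSP(S\cup U^\ast)$, where $U^\ast$ denotes the set of all rational unaries, so Theorem~\ref{thm:complexity_conservative} supplies an AP-reduction to $\BIS$ whenever $S\sse\LSM$ has arity at most~$2$.

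The two hardness halves (Parts~\ref{p:lsm_BIS-hard} and~\ref{p:non-lsm_hard}) reduce to a single technical lemma: by applying several copies of $\upf$ and $\downf$ to the same variable one can approximately simulate every positive rational unary function. Writing $\upf=(u_0,u_1)$ with $0<u_0<u_1$ and $\downf=(d_0,d_1)$ with $0<d_1<d_0$, attaching $a$ copies of $\upf$ and $b$ copies of $\downf$ yields the unary $(u_0^a d_0^b,\,u_1^a d_1^b)$, whose ratio $(u_1/u_0)^a(d_1/d_0)^b$ can be driven arbitrarily close to any positive real target by a suitable choice of nonnegative integers $a,b$, since $\log(u_1/u_0)>0>\log(d_1/d_0)$. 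A Diophantine-approximation estimate bounds $a+b$ polynomially in the bit-sizes of the target and of the requested precision; in the degenerate case where $\log(u_1/u_0)$ and $\log(d_1/d_0)$ are rationally dependent, the reachable ratios form only a discrete lattice, but any positive rational target can still be realised up to a precomputable global scale. As a special case, taking $a$ large approximately pins a variable to $1$, and taking $b$ large approximately pins it to $0$. Substituting these approximations into the gadgets provided by Theorem~\ref{thm:complexity_conservative} AP-reduces $\NCSP(S'\cup U)$ to $\NCSP(S\cup\{\upf,\downf\})$ for a suitable $S\supseteq S'$; Parts~\ref{p:lsm_BIS-hard} and~\ref{p:non-lsm_hard} then follow by choosing $S'$ and $U$ to be, respectively, the $\BIS$-hardness and no-FPRAS witnesses delivered by the conservative classification.

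The main obstacle is proving the unary-simulation lemma with the quantitative control needed by an AP-reduction: the Diophantine approximation has to be made explicit, and the multiplicative errors have to be propagated carefully when every occurrence of a simulated unary in a gadget is replaced by its approximation, including in the rationally-dependent case where the approximation is lattice-limited. With that in hand, the rest of the proof is a careful but essentially routine combination of the two permissive unaries with the known conservative constructions.
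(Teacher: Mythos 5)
Your treatment of Part~\ref{p:cN_FP} matches the paper's (both invoke Theorem~\ref{thm:complexity_conservative}), and the easiness half of Part~\ref{p:lsm_BIS-easy} is in the right spirit, though you should cite the BIS-easiness result explicitly (the paper uses \cite[Theorem 6, Part 3]{chen_complexity_2015}; the statement of Theorem~\ref{thm:complexity_conservative} as given has no BIS-easiness clause).

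The hardness arguments, however, rest on a lemma that is false. You propose to approximately simulate every positive rational unary ratio as $(u_1/u_0)^a(d_1/d_0)^b$ with nonnegative integers $a,b$. When $\log(u_1/u_0)$ and $\log(d_0/d_1)$ are rationally dependent --- say $u_1/u_0=\lambda^p$ and $d_0/d_1=\lambda^q$ for coprime $p,q$ and some $\lambda>1$, e.g.\ $\upf=(1,2)$, $\downf=(2,1)$ --- the set of reachable ratios is exactly $\{\lambda^n : n\in\mathbb{Z}\}$, a discrete geometric lattice with multiplicative gap $\lambda$. No choice of $a,b$ gets you within $\epsilon$ of a target like $\lambda^{1/2}$. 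Your patch (``realised up to a precomputable global scale'') does not resolve this: the relevant invariant of a unary in a CSP \emph{is} its ratio (constants wash out in AP-reductions), and rescaling cannot move you off the lattice. So you cannot in general instantiate the specific unaries produced by the conservative hardness gadgets of \cite{bulatov_expressibility_2013}, and the proposed AP-reduction from $\NCSP(S'\cup U)$ simply does not go through.

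The paper sidesteps this entirely. Rather than trying to simulate arbitrary unaries, it shows that $\Gud$ either contains a nontrivial binary (extracted via pinning and symmetrisation, Lemmas~\ref{lem:get-nontrivial-binary} and~\ref{lem:symmetric}) or a scaled parity function (Lemma~\ref{lem:trivial_parity}). In the binary case it then invokes \emph{threshold} hardness results for two-spin with external field: for the ferromagnetic/LSM case, \cite[Theorem 2]{liu_complexity_2014} gives BIS-hardness once the field is \emph{large enough}; for the non-LSM case, \cite{Li11:correlation} together with \cite{sly_computational_2012,galanis_inapproximability_2016} give \#SAT-hardness once the field leaves the universal uniqueness region. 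Because these are one-sided thresholds, they only need $(\downf(0)/\downf(1))^k\to\infty$ (or the bit-flipped analogue), which a single power of $\downf$ (resp.\ $\upf$) always delivers regardless of rational dependence. In the parity case, Lemma~\ref{lem:parity_hard} uses the weight enumerator of a linear code with weight parameter $\upf(1)/\upf(0)>1$, again a one-sided condition. Replacing your simulation lemma with this threshold machinery is not an incremental fix but a different decomposition of the argument; without it, Parts~\ref{p:lsm_BIS-hard} and~\ref{p:non-lsm_hard} remain unproven.
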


Our main application of Theorem~\ref{thm:up-down} 
is a characterisation of  the complexity of two-spin problems.
A two-spin problem corresponds to the problem  $\NCSP(\{f\})$ 
where the constraint function~$f$ is in~$\cB_2$.
These problems arise in statistical physics. For example, the problem of 
computing the partition function of the Ising model 
is a two-spin problem where, for some value~$\beta$,
$f(0,0)=f(1,1)=\beta$ and $f(0,1)=f(1,0)=1$.

Our application   requires the following definitions. We say that a binary function $f$ is monotone if $f(0,0) \leq f(0,1) \leq f(1,1)$ and $f(0,0) \leq f(1,0) \leq f(1,1)$.
The Fourier coefficients of $f$ are 
defined by $\wh{f}_{xy} = \frac{1}{4} \sum_{p,q\in\{0,1\}} (-1)^{px+qy} f(p,q)$ for all $x,y \in \{0,1\}$.

Let $\EQ$ be the binary equality function
defined by $\EQ(0,0)=\EQ(1,1)=1$ and $\EQ(0,1)=\EQ(1,0)=0$.
Let  $\NEQ$ the binary disequality function 
defined by $\NEQ(0,0)=\NEQ(1,1)=0$ and $\NEQ(0,1)=\NEQ(1,0)=1$.
A binary function $f$ is \emph{log-modular} if $f(0,1)f(1,0)=f(0,0)f(1,1)$.
A binary function $f$ is called \emph{trivial} if it is log-modular  or 
there is a unary function $g\in \cB_1$ such that
$f(x,y) = g(x) \EQ(x,y)$ or 
$f(x,y)=g(x)\NEQ(x,y)$. 
A binary function $f$ is  log-supermodular if and only if it is \emph{ferromagnetic}, 
which means that $f(0,0) f(1,1) \geq f(0,1)f(1,0)$.
Our classification theorem is as follows.

\newcommand{\statetwospin}[5]{
 Let $f\in\cB_2$.
 \begin{enumerate}
  \item #1 If $f$ is trivial, then $\NCSP(\{f\})$ is in \FP.
  \item #2 Otherwise, if $f$ is ferromagnetic:
     \begin{enumerate}
    \item #3 If $\wh{f}_{01}\wh{f}_{10}<0$, then $\NCSP(\{f\})$ is equivalent to \BIS\ under AP-reductions.
    \item #4 Otherwise, $\NCSP(\{f\})$ has an FPRAS.
   \end{enumerate}
  \item #5 Otherwise, if both $f(x,y)$ and $f(1-x,1-y)$ are non-monotone, then $\NCSP(\{f\})$ does not have an FPRAS unless $\NP=\RP$. 
 \end{enumerate}
}
\begin{thm}\label{thm:two-spin}
 \statetwospin{\label{p:trivial}}{\label{p:lsm}}{\label{p:lsm_BIS-compl}}{\label{p:lsm_FPRAS}}{\label{p:non-monotone}}
\end{thm}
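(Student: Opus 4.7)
The overall strategy is to reduce each case of Theorem~\ref{thm:two-spin} to the corresponding case of Theorem~\ref{thm:up-down} applied with $\cF=\{f\}$. Since $\{f\}$ is not equipped with the two weakly-conservative unaries, the central technical task in every non-trivial case is to \emph{implement} $\upf$ and $\downf$ as \#CSP-gadgets over $\{f\}$ alone.

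For Part~\ref{p:trivial}, each of the three subcases of triviality admits a direct tractability argument: a log-modular $f$ decouples into a product of independent single-variable contributions; an $f$ of the form $g(x)\EQ(x,y)$ collapses the variables of any instance into connected components of the constraint graph, each of which can then be two-coloured with weights given by products of $g$-values; and the $\NEQ$ form is analogous via bipartite components.

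For Part~\ref{p:lsm} the decisive observation is that the row- and column-sum unaries $u(x)=\sum_{y}f(x,y)$ and $v(y)=\sum_{x}f(x,y)$ are implementable as pendant-variable \#CSP-gadgets and satisfy $u(0)-u(1)=4\wh{f}_{10}$ and $v(0)-v(1)=4\wh{f}_{01}$. Non-triviality of $f$ rules out a row or column of zeros, so $u$ and $v$ are both permissive. When $\wh{f}_{01}\wh{f}_{10}<0$ (subpart~\ref{p:lsm_BIS-compl}), exactly one of $u,v$ is strictly increasing and the other strictly decreasing; these play the roles of $\upf$ and $\downf$, so part~2 of Theorem~\ref{thm:up-down} yields the claimed AP-equivalence with $\BIS$. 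Subpart~\ref{p:lsm_FPRAS} (ferromagnetic $f$ with Fourier coefficients of the same sign) is handled by invoking the existing FPRAS for ferromagnetic two-spin systems with a compatible external field, descended from the Jerrum--Sinclair permanent algorithm via Goldberg--Jerrum and Liu--Lu--Zhang.

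For Part~\ref{p:non-monotone}, antiferromagnetism gives $f\notin\LSM$ and non-triviality gives $f\notin\cN$, so it suffices to implement $\upf,\downf$ and then apply part~3 of Theorem~\ref{thm:up-down}. When the Fourier coefficients have opposite signs, the row/column-sum construction above is already enough. Otherwise I would analyse derived unaries obtained from small gadgets, notably the diagonal $f(x,x)$ and weighted row/column sums such as $\sum_{y}f(x,y)f(y,y)$, and argue case by case that the doubly-non-monotone hypothesis forces at least one derived unary of each monotonicity sign to be strict and permissive. The genuine obstacle is the fully symmetric subcase $f(x,y)=f(1-x,1-y)$: every \#CSP-gadget over $\{f\}$ inherits this $\{0,1\}$-flip symmetry and hence produces only symmetric unaries, so neither $\upf$ nor $\downf$ can be implemented from $\{f\}$ alone. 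This subcase corresponds to the antiferromagnetic Ising model without external field, and I would handle it outside the Theorem~\ref{thm:up-down} framework by a direct AP-reduction from the known hardness of antiferromagnetic two-spin systems in the tree non-uniqueness regime (Sly--Sun; Galanis--\v{S}tefankovi\v{c}--Vigoda).
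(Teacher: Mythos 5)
Your handling of Parts~\ref{p:trivial}, \ref{p:lsm_BIS-compl} and the opposite-sign-Fourier and Ising sub-cases of Part~\ref{p:non-monotone} is essentially the paper's route (the row/column-sum unaries of Lemma~\ref{lem:make_up_down}, then Theorem~\ref{thm:up-down}, then \cite[Theorem~3]{goldberg_computational_2003} for antiferromagnetic Ising). However, there are two genuine gaps.

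First, your treatment of Part~\ref{p:lsm_FPRAS} is incorrect. You claim the FPRAS ``descends from'' Jerrum--Sinclair via Goldberg--Jerrum and Liu--Lu--Zhang, but those results cover symmetric two-spin systems, usually with a uniform external field on a simple graph, and $\NCSP(\{f\})$ here is an arbitrary-multigraph problem with \emph{no} external field and with $f$ possibly asymmetric (in which case each edge can be oriented either way). The existing literature does not supply an FPRAS for this setting; that is exactly the gap the paper fills. The paper first shows (Lemma~\ref{lem:Fourier_non-zero}) that the same-sign hypothesis forces $\wh{f}_{11}\ge 0$, so $f\in\cP$; then proves a new result (Theorem~\ref{thm:Fourier_FPRAS} and Corollary~\ref{cor:Fourier_FPRAS}) that $\NCSP(f)$ has an FPRAS for every $f\in\cP\cap\cB_2$, via lifting $f$ to a self-dual ternary function $f'(x,y,z)=f(x\oplus z,y\oplus z)\in\SDP_3$, a holographic transform to an even-subgraphs holant, and a reduction to counting perfect matchings with a polynomially bounded ratio of near-perfect to perfect matchings. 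Jerrum--Sinclair enters only at the very last step, as the matching FPRAS; the surrounding construction is new and is not an instance of any prior two-spin algorithm.

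Second, for Part~\ref{p:non-monotone} with $\wh{f}_{01}\wh{f}_{10}\ge 0$ and $f$ not Ising, you only sketch (``I would analyse derived unaries\ldots case by case''). This is the case that actually requires a non-obvious idea. After the WLOG normalisations one finds $d<b$ and $a<c$, and (at least) one of $\sum_y f(x,y)$ or $\sum_y f(y,x)$ is a \emph{strictly increasing} permissive unary $\upf$. The non-trivial step, which your gadget catalogue ($f(x,x)$, $\sum_y f(x,y)f(y,y)$, etc.) does not capture, is that once $\upf$ is available, $\delta_1\in\ang{f,\upf}_{\om,p}$ by taking powers of $\upf$ (Observation~\ref{obs:pinning}), and then the pinning $\downf(x):=f(x,1)$ is a permissive strictly decreasing unary because $f(0,1)=b>d=f(1,1)>0$. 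Without this pinning-via-limits step, no family of finite pendant gadgets over $\{f\}$ alone gives you the strictly decreasing unary in general, so your proof of Part~\ref{p:non-monotone} does not go through in this sub-case. Your observation that the $a{=}d,\,b{=}c$ case has the $\{0,1\}$-flip symmetry obstruction is correct, but that is only the boundary case; the interior of the same-sign region still needs the $(\om,p)$-clone machinery.

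Finally, one small accuracy note on Part~\ref{p:lsm_BIS-compl}: your route through Theorem~\ref{thm:up-down} Part~2(b) does give $\NCSP(f,\upf,\downf)\le_{AP}\BIS$ (both $f$ and the unaries have arity at most $2$), which yields $\NCSP(f)\le_{AP}\BIS$, so this direction is fine; the paper instead cites \cite[Theorem~47]{chen_complexity_2015} directly, which is the same underlying result one level down.
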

 
The classification in Theorem~\ref{thm:two-spin}  is not exhaustive: $\NCSP(\{f\})$ is now fully classified if $f$ is   ferromagnetic.
For   anti-ferromagnetic functions~$f$, the complexity of $\NCSP(\{f\})$ is still open, except in the 
doubly non-monotone case 
(where both $f(x,y)$ and $f(1-x,1-y)$ are non-monotone)
and in the symmetric case where $f(x,y) = f(y,x)$.
The symmetric case has been resolved with a classification into 
situations where 
$\NCSP(\{f\})$ has 
a fully polynomial-time approximation scheme (FPTAS)
and 
situations where $\NCSP(\{f\})$ does not have an FPRAS unless $\NP=\RP$.
More details are given in Section~\ref{sec:existing} however we mention here that
the known classification depends on universal uniqueness and there is no simple closed form criterion, cf.\ Theorems~\ref{thm:uniqueness_FPTAS} and \ref{thm:non-uniqueness_hard}, which are taken from \cite{Li11:correlation}.

Theorem~\ref{thm:up-down} relaxes the conservativity assumption in known 
counting CSP classifications by 
adding only two permissive unaries to the set~$\cF$ of constraint functions.
Pushing this idea even further, we consider what happens if we allow the pinning functions 
$\delta_0$ and $\delta_1$ defined by
$\delta_0(0)=1, \delta_0(1)=0$ and $\delta_1(0)=0, \delta_1(1)=1$ instead of 
the permissive unaries
$\upf$ and $\downf$.

Given a strictly decreasing permissive unary function $\downf$, $\delta_0$ can be \emph{realised}: this means that the effect of a constraint using the function $\delta_0$ can be (approximately) simulated by some combination of constraints using $\downf$.
Similarly, $\delta_1$ can be realised using $\upf$.
This notion of realisation is formalised using the theory of functional clones and $(\om,p)$-clones in Section~\ref{sec:pps}.
It implies that allowing only the pinning functions instead of allowing both kinds of permissive unaries is a   weaker assumption, though not necessarily a strictly weaker one.

For many sets of functions $\cF\sse\cB$, adding pinning is sufficient to realise two permissive unaries with the desired properties: then the complexity classification of $\NCSP(\cF\cup\{\delta_0,\delta_1\})$ follows from Theorem~\ref{thm:up-down}.
If pinning does not yield both kinds of permissive unaries, we show that either every function $f\in\cF$ has range $\{0,r_f\}$ for some nonnegative rational $r_f$, or all functions in $\cF$ satisfy a certain monotonicity condition.
These results form Theorem~\ref{thm:set_pinning}.

In Section~\ref{sec:4.2}, the goal is to identify a large functional clone that does not contain both a strictly increasing permissive unary function and a strictly decreasing permissive unary function.
In other words, we are looking for a set of functions which provably does not allow both a strictly increasing permissive unary function and a strictly decreasing permissive unary function to be realised.
The set of monotone functions fits the bill, but in some sense it is a trivial solution since it does not contain both~$\delta_0$ and~$\delta_1$.
Theorem~\ref{thm:mon-pm} identifies a functional clone that is strictly larger than the set of monotone functions and contains $\delta_0$ and~$\delta_1$ but still does not contain both a strictly increasing permissive unary function and a strictly decreasing permissive unary function.
This shows that the monotonicity property in the pinning classification can be satisfied in a nontrivial way.

\section{Definitions and preliminaries}
\label{s:definitions}

Throughout this paper, we consider nonnegative rational-valued pseudo-Boolean functions, i.e.\ functions from $\{0,1\}^k$ to $\zQ_{\geq 0}$.
We write $\cB_k$ for the set of all nonnegative rational-valued pseudo-Boolean functions of arity $k$, and $\cB=\bigcup_{k\in\zN}\cB_k$.

A function $f\in\cB_k$ is \emph{permissive} if $f(\bx)>0$ for all $\bx\in\{0,1\}^k$, i.e.\ all its values are non-zero.
The set of permissive unary strictly decreasing functions and the set of permissive unary strictly increasing functions will be of particular interest; we denote them by
\begin{align*}
 \cB^{>}_1 &:= \{ f\in\cB_1: f(0)>f(1)>0 \} \quad \text{and} \\
 \cB^{<}_1 &:= \{ f\in\cB_1: 0<f(0)<f(1) \}.
\end{align*}
These two sets differ from the sets $\cB^{\upf,p}_1$ and $\cB^{\downf,p}_1$ defined in \cite{bulatov_expressibility_2013} as the latter do not require strictness or permissiveness, and they allow polynomial-time computable real values
rather than just rational values.
We will also sometimes require normalised unary functions, thus we define $\cB^{>,\mathrm{n}}_1 := \left\{ f\in\cB^{>}_1: f(0)=1 \right\}$ and $\cB^{<,\mathrm{n}}_1 := \left\{ f\in\cB^{<}_1: f(1)=1 \right\}$.

The \emph{relation underlying a function} $f\in\cB_k$ 
(also called the ``support of $f$'')
is defined as
\[
 R_f = \{ (\vc{x}{k}) : f(\vc{x}{k})\neq 0\}.
\] 

A relation is \emph{affine} if it contains exactly the tuples specified by a set of linear equations over $\operatorname{GF}(2)$.
If the underlying relation of $f$ is affine, we say that $f$ \emph{has affine support}.
A function is \emph{pure affine} if has affine support and its range is $\{0,r\}$ for some $r>0$ \cite{dyer_complexity_2009}.
We extend this definition to say that any function is \emph{pure} if its range is $\{0,r\}$ for some $r>0$, regardless of its support.

A function $f\in\cB_k$ is \emph{log-supermodular} (or \emph{lsm}) if $f(\bx\vee\by)f(\bx\wedge\by) \geq f(\bx)f(\by)$ for all $\bx,\by\in\{0,1\}^k$, where $\vee$ and $\wedge$ are applied bit-wise.
The class of all lsm functions of any arity is denoted $\LSM$.
A $k$-ary function $f$ is \emph{log-modular} if $f(\bx\vee\by)f(\bx\wedge\by) = f(\bx)f(\by)$ for all $\bx,\by\in\{0,1\}^k$.
It is straightforward to check that all unary functions are both lsm and log-modular.

For any positive integer~$n$, we write $[n] :=\{1\zd n\}$.

Let $\delta_0$ and $\delta_1$ be the unary functions satisfying $\delta_0(0)=1, \delta_0(1)=0$ and $\delta_1(0)=0, \delta_1(1)=1$; these are often called the \emph{pinning functions}.
If $f$ is a function in $\cB_k$ with $k\geq 2$, then a \emph{2-pinning} of $f$ is a binary function $g$ that arises from $f$ by pinning all but two of the variables to a fixed value.
Formally, $g$ is of the form
\[
 g(x_p,x_q) = \sum_{(x_{i_1},\ldots,x_{i_{k-2}}) \in \{0,1\}^{k-2}} f(x_1\zd x_k) \prod_{i\in[k]\setminus\{p,q\}} \delta_{a_i}(x_i),
\]
where $p$ and $q$ are distinct indices in $[k]$,
$\{i_1,\ldots,i_{k-2}\} = [k] \setminus \{p,q\}$, 
and for each $i\in [k] \setminus \{p,q\}$,  $a_i$ is a value in $\{0,1\}$. 

A function $f\in\cB_k$ is \emph{monotone} if for any $\ba,\bb\in\{0,1\}^k$ with $\ba\leq\bb$ we have $f(\ba)\leq f(\bb)$.

A function is \emph{monotone on its support} if for any $\ba,\bb\in R_f$ with $\ba\leq\bb$ we have $f(\ba)\leq f(\bb)$.
All monotone functions are also monotone on their support, but the latter set is bigger: a function $f\in\cB_k$ that is monotone on its support may have inputs $\ba,\bb\in\{0,1\}^k$ such that $\ba\leq\bb$ and $f(\ba)>f(\bb)$, as long as $f(\bb)=0$.
For example, $\delta_0$ is trivially monotone on its support, but it is not monotone.

Let $\bar{x}=1-x$ for $x\in\{0,1\}$.
The \emph{bit-flip} of a function $f\in\cB_k$ is the function $\bar{f}(\vc{x}{k}) = f(\vc{\bar{x}}{k})$.
The bit-flip of a set $\cF\sse\cB$ is $\bar{\cF}:=\{\bar{f}:f\in\cF\}$.

\subsection{Binary functions}\label{sec:binfn}

Much of this paper is concerned with binary functions, it is thus useful to introduce specific notation and results.
A binary function is said to be \emph{ferromagnetic} if it is log-supermodular
and it is \emph{anti-ferromagnetic} otherwise.
We often write a binary function as a $2\times 2$ matrix:
\[
 f(x,y) = \begin{pmatrix} f(0,0) & f(0,1) \\ f(1,0) & f(1,1) \end{pmatrix}.
\]

\begin{obs}\label{obs:binary_lsm}
 A binary function $f$ is log-supermodular (or, equivalently, it is ferromagnetic) if $f(0,0)f(1,1)\geq f(0,1)f(1,0)$.
\end{obs}

The binary equality function
 $\EQ$  is
defined by $\EQ(0,0)=\EQ(1,1)=1$ and $\EQ(0,1)=\EQ(1,0)=0$
and the binary disequality function
 $\NEQ$ is
defined by $\NEQ(0,0)=\NEQ(1,1)=0$ and $\NEQ(0,1)=\NEQ(1,0)=1$. 
A binary function $f$ is called \emph{trivial} if it is log-modular  or 
there is a unary function $g\in \cB_1$ such that
$f(x,y) = g(x) \EQ(x,y)$ or 
$f(x,y)=g(x)\NEQ(x,y)$.
It is easy to see that a binary function $f$ is log-modular if and only if $f(x,y)=g(x)h(y)$ for some unary functions $g$ and $h$.
We  will use the following observation.

\begin{obs}\label{obs:permissive_binary_trivial}
 A permissive binary function is trivial if and only if it is log-modular.
\end{obs}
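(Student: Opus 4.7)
The plan is to prove the two directions separately, with one direction being immediate from the definition and the other following from a simple case analysis.

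The forward direction, that a log-modular permissive binary function is trivial, is immediate: by the definition of ``trivial'' given just above the observation, any log-modular function is trivial (with no need to invoke permissiveness or the $\EQ$/$\NEQ$ clauses).

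For the converse, suppose $f$ is permissive and trivial. By definition of ``trivial'', at least one of three conditions holds: either $f$ is log-modular, or $f(x,y) = g(x)\EQ(x,y)$ for some $g\in\cB_1$, or $f(x,y) = g(x)\NEQ(x,y)$ for some $g\in\cB_1$. If the first condition holds we are done. The second condition forces $f(0,1) = g(0)\EQ(0,1) = 0$, which contradicts the assumption that $f$ is permissive; similarly the third condition forces $f(0,0) = g(0)\NEQ(0,0) = 0$, again contradicting permissiveness. Hence the first condition must hold and $f$ is log-modular.

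There is no real obstacle here: the statement is essentially unpacking the definition of ``trivial'' and observing that the $\EQ$ and $\NEQ$ alternatives always produce a zero value, which is incompatible with permissiveness. The only thing to be careful about is to check both $\EQ$ and $\NEQ$ cases and to note explicitly which entry of the $2\times 2$ value matrix vanishes in each.
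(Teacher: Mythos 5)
Your proof is correct, and it is the natural (essentially the only) argument; the paper states this observation without proof precisely because the argument is this straightforward unpacking of the definition. One minor labeling quibble: you call ``log-modular $\Rightarrow$ trivial'' the ``forward direction,'' whereas a reader of ``trivial if and only if log-modular'' would more likely call ``trivial $\Rightarrow$ log-modular'' the forward direction; this has no bearing on correctness since you prove both implications.
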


A binary function $f$ is \emph{symmetric} if $f(0,1)=f(1,0)$ and it is an \emph{Ising function} if it depends only on the parity of its input, i.e.\ $f(0,1)=f(1,0)$ and $f(0,0)=f(1,1)$.

It is often easier to work with symmetric binary functions than with general ones.
The following lemma gives some properties of different methods for symmetrising a given function.

\begin{lem}\label{lem:symmetric}
 Let $f$ be a nontrivial binary function.
 \begin{enumerate}
  \item\label{p:symmetric_non-lsm}
   If $f$ is non-lsm, then $f'(x,y)=f(x,y)f(y,x)$ is nontrivial, symmetric, and non-lsm.
  \item\label{p:symmetric_lsm}
   If $f$ is lsm, then $f''(x,y)=\sum_{z\in\{0,1\}} f(x,z)f(y,z)$ is nontrivial, symmetric, and lsm.
  \item\label{p:Ising}
   If $f$ is lsm and Ising, then $f'''(x,y)=\sum_{z\in\{0,1\}} f(x,z)f(y,z)\upf(z)$ is nontrivial, symmetric, lsm, and not Ising, where $\upf$ is any strictly increasing permissive unary function.
 \end{enumerate}
\end{lem}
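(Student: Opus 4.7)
The plan is to verify each of the three claims by direct computation, exploiting the fact that a binary function has only four values so the log-supermodularity condition reduces to the single inequality $f(0,0)f(1,1) \ge f(0,1)f(1,0)$, and exploiting Observation~\ref{obs:permissive_binary_trivial} and the classification of trivial binary functions into log-modular and $g(x)\EQ(x,y)$ / $g(x)\NEQ(x,y)$ types.

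For part~\ref{p:symmetric_non-lsm}, symmetry of $f'$ is immediate. Because all values are nonnegative, the lsm inequality for $f'$, namely $(f(0,0)f(1,1))^2\ge (f(0,1)f(1,0))^2$, is equivalent to the lsm inequality for $f$; hence $f'$ is non-lsm iff $f$ is. For nontriviality: a non-lsm $f'$ cannot be log-modular, and it cannot be of $\EQ$ or $\NEQ$ type either, since in those cases either $f'(0,1)=f'(1,0)=0$ or $f'(0,0)=f'(1,1)=0$, which together with nonnegativity forces the lsm inequality to hold.

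For part~\ref{p:symmetric_lsm}, symmetry of $f''$ is immediate from the commutativity of multiplication. The key observation is that $f''(0,0)f''(1,1)\ge f''(0,1)^2$ is exactly the Cauchy--Schwarz inequality applied to the vectors $\bv_0=(f(0,0),f(0,1))$ and $\bv_1=(f(1,0),f(1,1))$, so $f''$ is automatically lsm (the hypothesis that $f$ is lsm is not even needed here, though it is needed elsewhere in the paper). The main obstacle is nontriviality, which requires a case analysis: if $f''$ is log-modular then Cauchy--Schwarz holds with equality, so $\bv_0$ and $\bv_1$ are proportional or one is zero, and in every subcase $f$ factors as $g(x)h(y)$, contradicting nontriviality; if $f''(x,y)=g(x)\EQ(x,y)$ then $f''(0,1)=f(0,0)f(1,0)+f(0,1)f(1,1)=0$, and by the four-way case split on which entries of $f$ vanish one finds $f$ is either log-modular, of $\EQ$ form, or of $\NEQ$ form; and if $f''$ is of $\NEQ$ form then $f''(0,0)=f(0,0)^2+f(0,1)^2=0$, forcing a whole row of $f$ to vanish and again making $f$ log-modular.

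For part~\ref{p:Ising}, write $f(0,0)=f(1,1)=a$ and $f(0,1)=f(1,0)=b$, and set $u_0=\upf(0)$, $u_1=\upf(1)$ with $0<u_0<u_1$. A direct computation gives $f'''(0,0)=a^2 u_0+b^2 u_1$, $f'''(1,1)=b^2 u_0+a^2 u_1$, $f'''(0,1)=f'''(1,0)=ab(u_0+u_1)$, so $f'''$ is symmetric, and
\[
 f'''(0,0)f'''(1,1)-f'''(0,1)^2=(a^2-b^2)^2 u_0 u_1\ge 0,
\]
which shows $f'''$ is lsm (and strictly so whenever $a\ne b$). Also $f'''(0,0)-f'''(1,1)=(a^2-b^2)(u_0-u_1)$, so $f'''$ is not Ising iff $a\ne b$. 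It remains to check that $a\ne b$ and that $f'''$ is nontrivial: if $a=b$ then $f$ is constant, hence log-modular, contradicting nontriviality, so $a\ne b$; if $a=0$ or $b=0$ then $f$ is of $\NEQ$ or $\EQ$ type respectively, again contradicting nontriviality; so $ab\ne 0$, meaning $f'''(0,1)>0$ (excluding $\EQ$ form) and $f'''(0,0)>0$ (excluding $\NEQ$ form), and the strict inequality above excludes log-modularity. The main obstacle in the whole lemma is really the nontriviality case analysis in part~\ref{p:symmetric_lsm}; parts~\ref{p:symmetric_non-lsm} and~\ref{p:Ising} follow from short direct calculations.
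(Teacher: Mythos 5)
Your parts~\ref{p:symmetric_lsm} and~\ref{p:Ising} are correct. In part~\ref{p:symmetric_lsm} you invoke Cauchy--Schwarz where the paper directly computes $f''(0,0)f''(1,1)-f''(0,1)^2=(ad-bc)^2$, and you establish nontriviality of $f''$ by a case split on its possible trivial forms, whereas the paper instead argues that $f''$ is permissive (using that $f$ is nontrivial and lsm) and then invokes Observation~\ref{obs:permissive_binary_trivial}; your aside that lsm of $f$ is not actually needed for part~\ref{p:symmetric_lsm} is also correct. Part~\ref{p:Ising} follows the same computation as the paper.

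Part~\ref{p:symmetric_non-lsm} has a genuine gap. To rule out $f'$ being of $\NEQ$ type you assert that $f'(0,0)=f'(1,1)=0$ ``together with nonnegativity forces the lsm inequality to hold''. That is false: with $f'(0,0)=f'(1,1)=0$ the lsm inequality $f'(0,0)f'(1,1)\ge f'(0,1)f'(1,0)$ becomes $0\ge f'(0,1)f'(1,0)$, which \emph{fails} whenever $f'(0,1)f'(1,0)>0$ (indeed $\NEQ$ itself is of $\NEQ$ type and is non-lsm). In your situation $f'(0,1)f'(1,0)=(f(0,1)f(1,0))^2>0$, since $f$ non-lsm forces $f(0,1)f(1,0)>f(0,0)f(1,1)\ge 0$. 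So ``non-lsm $f'$ cannot be of $\NEQ$ type'' does not follow from non-lsm-ness alone. The correct route (and what the paper does) is to note $f'(0,0)=f(0,0)^2$ and $f'(1,1)=f(1,1)^2$, so $f'$ of $\NEQ$ type would force $f(0,0)=f(1,1)=0$, making $f$ itself of $\NEQ$ type and hence trivial, a contradiction. Your $\EQ$-type branch is fine as written, since there $f'(0,1)f'(1,0)=0$ genuinely does force the lsm inequality.
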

\begin{proof}
 Throughout, we write
 \[
  f(x,y) = \begin{pmatrix} a&b\\c&d \end{pmatrix}.
 \]

 For Property~\ref{p:symmetric_non-lsm}, suppose $f$ is nontrivial and non-lsm, i.e.\ $ad-bc<0$ and $a,d$ are not both zero.
 Let $f'(x,y)=f(x,y)f(y,x)$, i.e.\ $f'(0,0)=a^2$, $f'(0,1)=bc=f'(1,0)$, and $f'(1,1)=d^2$.
 It is straightforward to see that $f'$ is symmetric.
 As $b,c>0$ and $a,d$ are not both zero, $f'$ does not have the form $g(x)\EQ(x,y)$ or $g(x)\NEQ(x,y)$ for any unary function $g$.
 Furthermore, $ad<bc$ and $a,b,c,d\geq 0$ implies that $a^2d^2<b^2c^2=(bc)^2$, so $f'(0,0)f'(1,1)<f'(0,1)f'(1,0)$.
 This shows that $f'$ is not log-modular and therefore nontrivial.
 Additionally, it also shows that $f'$ is non-lsm, concluding the proof of the property.
 
 For Property~\ref{p:symmetric_lsm}, suppose $f$ is nontrivial and lsm, i.e.\ $ad-bc>0$ and $b,c$ are not both zero.
 Let
 \[
  f''(x,y)=\sum_{z\in\{0,1\}} f(x,z)f(y,z) = \begin{pmatrix} a^2+b^2 & ac+bd \\ ac+bd & c^2+d^2 \end{pmatrix}.
 \]
 This can easily be seen to be symmetric.
 Since at most one of $a,b,c,d$ is zero, $f''$ is permissive.
 Finally, $f''(0,0)f''(1,1)-f''(0,1)f''(1,0) = (ad-bc)^2>0$, so (using Observation~\ref{obs:permissive_binary_trivial}) $f''$ is nontrivial and lsm.
 
 For Property~\ref{p:Ising}, suppose $f$ is nontrivial, lsm and Ising, i.e.\ $ad-bc>0$, $a=d$, and $b=c$, with $a,b,c,d>0$.
 We can thus write
 \[
  f(x,y) = \begin{pmatrix} a&b\\b&a \end{pmatrix};
 \]
 the property of $f$ being both nontrivial and lsm becomes $a^2-b^2>0$.
 Now, let $u_0:=\upf(0)$ and $u_1:=\upf(1)$; these values satisfy $0<u_0<u_1$.
 Then
 \[
  f'''(x,y) = \sum_{z\in\{0,1\}} f(x,z)f(y,z)\upf(z) = \binar{a^2 u_0 + b^2 u_1}{ab(u_0+u_1)}{ab(u_0+u_1)}{a^2 u_1 + b^2 u_0}.
 \]
 Since $a,b,u_0,u_1$ are all positive, $f'''$ is permissive.
 It is also clearly symmetric.
 Furthermore,
 \[
  f'''(0,0)f'''(1,1)-f'''(0,1)f'''(1,0) = (a^2-b^2)^2 u_0 u_1 > 0.
 \]
 Thus, using Observation~\ref{obs:permissive_binary_trivial}, $f'''$ is nontrivial and lsm.
 The function $f'''$ is Ising if $a^2 u_0 + b^2 u_1 = a^2 u_1 + b^2 u_0$ or, equivalently, if $(a^2-b^2)(u_0-u_1)=0$.
 But $f$ being nontrivial implies that $a^2-b^2\neq 0$ and $\upf$ being strictly increasing implies that $u_0\neq u_1$, hence this equality is never satisfied.
 That means $f'''$ cannot be Ising, so $f'''$ has all the desired properties.
\end{proof}

\subsection{Fourier transforms}
\label{s:Fourier}

The Fourier transform of a $k$-ary function $f\in\cB_k$ is given by
\[
 \wh{f}(\vc{x}{k}) = \frac{1}{2^k} \sum_{\vc{p}{k}\in\{0,1\}} (-1)^{p_1x_1 + \ldots + p_kx_k} f(\vc{p}{k}).
\]
The values of $\wh{f}$ are called the Fourier coefficients of $f$.
The following fact is well-known. See, e.g., \cite[Equation 2.1]{de_wolf_brief_2008}.

\begin{obs} \label{obs:Fourier_inverse}
 For any $f\in\cB_k$,
 \[
  f(\vc{x}{k}) = \sum_{\vc{p}{k}\in\{0,1\}} (-1)^{p_1x_1 + \ldots + p_kx_k} \wh{f}(\vc{p}{k}).
 \]
\end{obs}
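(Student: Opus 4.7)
The statement is the standard Fourier inversion formula on the Boolean cube, and my plan is to prove it by direct substitution of the definition of $\wh{f}$ into the right-hand side, followed by an orthogonality argument.

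First, I substitute the definition of $\wh{f}(\vc{p}{k})$ into the claimed identity, so that the right-hand side becomes
\[
 \sum_{\vc{p}{k}\in\{0,1\}} (-1)^{p_1x_1 + \ldots + p_kx_k} \cdot \frac{1}{2^k} \sum_{\vc{q}{k}\in\{0,1\}} (-1)^{p_1q_1 + \ldots + p_kq_k} f(\vc{q}{k}).
\]
Swapping the order of summation and pulling $f(\vc{q}{k})$ out of the inner sum gives
\[
 \frac{1}{2^k} \sum_{\vc{q}{k}\in\{0,1\}} f(\vc{q}{k}) \sum_{\vc{p}{k}\in\{0,1\}} (-1)^{p_1(x_1+q_1) + \ldots + p_k(x_k+q_k)}.
\]

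Next I would observe that the inner sum factorises as a product over coordinates, since $(-1)^{\sum_i p_i(x_i+q_i)} = \prod_{i=1}^k (-1)^{p_i(x_i+q_i)}$. Each one-dimensional factor is
\[
 \sum_{p_i\in\{0,1\}} (-1)^{p_i(x_i+q_i)} = 1 + (-1)^{x_i+q_i},
\]
which equals $2$ when $x_i = q_i$ and $0$ otherwise. Hence the product is $2^k$ when $\bx = \bq$ and $0$ for all other $\bq$.

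Thus only the term $\bq = \bx$ survives in the outer sum, contributing $\frac{1}{2^k}\cdot 2^k\cdot f(\vc{x}{k}) = f(\vc{x}{k})$, which is precisely the left-hand side. The proof is essentially routine manipulation of sums; there is no substantive obstacle, and nothing in the argument uses nonnegativity or rationality of $f$, so the observation is in fact valid for any real- or complex-valued function on $\{0,1\}^k$.
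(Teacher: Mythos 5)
Your proof is correct: it is the standard orthogonality argument for Fourier inversion on $\{0,1\}^k$, and every step (substituting the definition of $\wh{f}$, swapping sums, factorising the inner sum coordinate-wise, and using that $1+(-1)^{x_i+q_i}$ is $2$ or $0$) is valid. Note that the paper itself does not prove this observation at all --- it simply cites it as well-known (de Wolf's survey, Equation 2.1) --- so there is no paper proof to compare against; your self-contained derivation is exactly what one would write if a proof were required, and your closing remark that nonnegativity and rationality play no role is also accurate.
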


We denote by $\cP$ the set of all nonnegative functions whose Fourier coefficients are also nonnegative:
\[
 \cP = \left\{f\in\cB : \wh{f}(\bx)\geq 0 \text{ for all } \bx\in\{0,1\}^{\ari(f)}\right\},
\]
where $\ari(f)$ denotes the arity of~$f$.
If $f$ is binary, with
\[
 f(x,y) = \begin{pmatrix} a&b\\c&d \end{pmatrix},
\]
its Fourier coefficients are sometimes written as
\begin{align*}
 \wh{f}_{00} &:= \wh{f}(0,0) = \frac{1}{4}(a+b+c+d) \\
 \wh{f}_{01} &:= \wh{f}(0,1) = \frac{1}{4}(a-b+c-d) \\
 \wh{f}_{10} &:= \wh{f}(1,0) = \frac{1}{4}(a+b-c-d) \\
 \wh{f}_{11} &:= \wh{f}(1,1) = \frac{1}{4}(a-b-c+d).
\end{align*}

\begin{obs}\label{obs:Fourier_bit_flip}
 Let $f'(x,y)=\bar{f}(x,y)=f(1-x,1-y)$, where we are renaming the function temporarily in order to avoid clumsy notation.
 Then $\wh{f'}_{01}=-\wh{f}_{01}$ and $\wh{f'}_{10}=-\wh{f}_{10}$, while $\wh{f'}_{00}=\wh{f}_{00}$ and $\wh{f'}_{11}=\wh{f}_{11}$.
 In other words, a bit-flip changes the signs of $\wh{f}_{01}$ and $\wh{f}_{10}$ while leaving $\wh{f}_{00}$ and $\wh{f}_{11}$ unaffected.
\end{obs}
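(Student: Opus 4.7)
The plan is to prove the observation by a direct substitution into the definition of the Fourier transform, combined with a change of summation variables. Since the claim is purely computational, the argument should be short and self-contained.

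First I would observe the general identity for a $k$-ary function $f \in \cB_k$ and its bit-flip $\bar f$, namely
\[
 \wh{\bar f}(\vc{x}{k}) = (-1)^{x_1+\cdots+x_k}\,\wh{f}(\vc{x}{k}).
\]
To derive this, start from
\[
 \wh{\bar f}(\vc{x}{k}) = \frac{1}{2^k}\sum_{\bp \in \{0,1\}^k}(-1)^{p_1 x_1+\cdots+p_k x_k} f(\vc{\bar{p}}{k})
\]
and change variables by $q_i := 1 - p_i$. Then $p_i x_i = (1-q_i)x_i = x_i - q_i x_i$, so
\[
 (-1)^{p_1 x_1 + \cdots + p_k x_k} = (-1)^{x_1+\cdots+x_k}\,(-1)^{q_1 x_1+\cdots+q_k x_k},
\]
and pulling the constant sign out of the sum yields the claimed identity.

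Specialising to the binary case $f' = \bar f$ with $k=2$, the sign $(-1)^{x_1+x_2}$ is $+1$ when $(x_1,x_2)\in\{(0,0),(1,1)\}$ and $-1$ when $(x_1,x_2)\in\{(0,1),(1,0)\}$, giving precisely $\wh{f'}_{00}=\wh{f}_{00}$, $\wh{f'}_{11}=\wh{f}_{11}$, $\wh{f'}_{01}=-\wh{f}_{01}$, $\wh{f'}_{10}=-\wh{f}_{10}$. Alternatively, one can verify the binary case without the change-of-variables step by writing $f$ as the matrix with entries $a,b,c,d$, noting that $f'$ has matrix entries $d,c,b,a$, and plugging these into the explicit formulas for $\wh{f}_{xy}$ given just before the observation; the four identities then follow by inspection.

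There is no real obstacle here: the only thing one has to be slightly careful about is ensuring the sign bookkeeping in the change of variables is correct, but this amounts to a single-line calculation.
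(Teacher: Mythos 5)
Your proof is correct. The paper states this observation without proof (treating it as an immediate consequence of the explicit formulas for $\wh{f}_{00},\wh{f}_{01},\wh{f}_{10},\wh{f}_{11}$ given just above), and your second, ``by inspection'' verification -- substituting that $\bar f$ has matrix entries $d,c,b,a$ into those formulas -- is precisely the argument the paper implicitly invites. Your general change-of-variables derivation of $\wh{\bar f}(\bx)=(-1)^{x_1+\cdots+x_k}\wh{f}(\bx)$ is also correct and gives the $k$-ary generalisation for free, but for the binary case the direct substitution is all that is needed.
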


\begin{obs}\label{obs:Fourier_swap}
 Let $f''(x,y)=f(y,x)$.
 Then $\wh{f''}_{01}=\wh{f}_{10}$ and $\wh{f''}_{10}=\wh{f}_{01}$, while again $\wh{f''}_{00}=\wh{f}_{00}$ and $\wh{f''}_{11}=\wh{f}_{11}$.
 Hence, swapping the variables swaps the values of $\wh{f}_{01}$ and $\wh{f}_{10}$.
\end{obs}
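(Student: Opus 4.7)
The plan is to verify the four identities by direct substitution into the closed-form expressions for $\wh{f}_{00}, \wh{f}_{01}, \wh{f}_{10}, \wh{f}_{11}$ stated just above Observation~\ref{obs:Fourier_bit_flip}. Writing $f$ as the matrix with entries $a=f(0,0)$, $b=f(0,1)$, $c=f(1,0)$, $d=f(1,1)$, the definition $f''(x,y)=f(y,x)$ gives $f''(0,0)=a$, $f''(0,1)=c$, $f''(1,0)=b$, $f''(1,1)=d$; in other words, passing from $f$ to $f''$ simply interchanges the roles of $b$ and $c$ in the matrix while leaving $a$ and $d$ fixed.

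Now the formulas $\wh{f}_{00}=\frac{1}{4}(a+b+c+d)$ and $\wh{f}_{11}=\frac{1}{4}(a-b-c+d)$ are both symmetric in $b$ and $c$, so $\wh{f''}_{00}=\wh{f}_{00}$ and $\wh{f''}_{11}=\wh{f}_{11}$ at once. Applying the formula for $\wh{f}_{01}$ to $f''$ (i.e.\ with $b$ and $c$ swapped) yields $\wh{f''}_{01}=\frac{1}{4}(a-c+b-d)=\frac{1}{4}(a+b-c-d)=\wh{f}_{10}$, and the symmetric computation gives $\wh{f''}_{10}=\wh{f}_{01}$. This gives all four identities.

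There is no real obstacle here, since the statement is a routine algebraic verification once the closed forms are in hand. A slightly more conceptual alternative would be to prove, once and for all, that for any permutation $\sigma$ of the coordinates, the function $f_\sigma(x_1,\ldots,x_k)=f(x_{\sigma(1)},\ldots,x_{\sigma(k)})$ satisfies $\wh{f_\sigma}(p_1,\ldots,p_k)=\wh{f}(p_{\sigma(1)},\ldots,p_{\sigma(k)})$, which follows by a one-line change of summation variables in the defining sum for the Fourier transform; Observation~\ref{obs:Fourier_swap} is then the case $k=2$ with $\sigma$ the transposition. Either route gives the statement in a few lines.
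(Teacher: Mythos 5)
Your proof is correct and is essentially the direct verification the paper leaves implicit (the paper states Observation~\ref{obs:Fourier_swap} without a proof block, treating it as immediate from the closed-form expressions for the $\wh{f}_{xy}$). Both the matrix-entry substitution and your more general permutation-of-coordinates remark are fine; nothing further is needed.
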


These properties will be useful when considering two-spin problems in Section \ref{s:two-spin}.

\begin{lem}\label{lem:cP_000}
 Let $f\in\cP$ with arity $k$.
 Then $f(\vc{x}{k})\leq f(0\zd 0)$ for any $\vc{x}{k}\in\{0,1\}$.
\end{lem}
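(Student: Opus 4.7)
The plan is to apply the Fourier inversion formula from Observation~\ref{obs:Fourier_inverse} directly. For any $\bx = (x_1, \ldots, x_k) \in \{0,1\}^k$, the inversion formula gives
\[
 f(\bx) = \sum_{\bp \in \{0,1\}^k} (-1)^{p_1 x_1 + \ldots + p_k x_k}\, \wh{f}(\bp),
\]
and specialising to $\bx = (0, \ldots, 0)$ gives $f(0, \ldots, 0) = \sum_{\bp} \wh{f}(\bp)$, since every sign is then $+1$.

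Next I would use the defining assumption $f \in \cP$, namely that every Fourier coefficient $\wh{f}(\bp)$ is nonnegative. Since each coefficient in the expansion of $f(\bx)$ is multiplied by $\pm 1$, replacing every sign by $+1$ can only increase the sum. Formally,
\[
 f(\bx) = \sum_{\bp} (-1)^{\bp \cdot \bx}\, \wh{f}(\bp) \;\le\; \sum_{\bp} \wh{f}(\bp) = f(0, \ldots, 0).
\]

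No obstacle here — the result follows immediately from Fourier inversion together with the nonnegativity hypothesis that defines $\cP$. (Nonnegativity of $f$ itself, built into the definition of $\cB$, is not even needed for the upper bound, though it guarantees $f(0,\ldots,0) \ge 0$ so the inequality is meaningful as a bound on the maximum value of $f$.)
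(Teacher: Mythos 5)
Your proposal is correct and follows essentially the same route as the paper: apply Fourier inversion (Observation~\ref{obs:Fourier_inverse}), note that $f(0,\ldots,0) = \sum_{\bp}\wh{f}(\bp)$, and use nonnegativity of the Fourier coefficients to conclude that the signed sum defining $f(\bx)$ cannot exceed the unsigned one. The paper just phrases the final step slightly differently, isolating the terms with sign $-1$ and writing $f(\bx) = f(0,\ldots,0) - 2\sum_{\bp\in S_{f;\bx}}\wh{f}(\bp)$, but the content is identical.
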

\begin{proof}
 In the following, we write $\bp\cdot\bx=p_1x_1 \oplus\ldots\oplus p_kx_k$, where $\bx=(\vc{x}{k}),\bp=(\vc{p}{k})\in\{0,1\}^k$.
 By  Observation~\ref{obs:Fourier_inverse}, for any $f\in\cB_k$,
 \begin{equation}\label{eq:Fourier_decomposition}
  f(\bx) = \sum_{\bp\in\{0,1\}^k} (-1)^{\bp\cdot\bx} \wh{f}(\bp).
 \end{equation}
 For any $\bx\in\{0,1\}^k$, let $S_{f;\bx}$ be the set of all $k$-bit strings $\bp$ such that $\wh{f}(\bp)$ appears with coefficient $-1$ in \eqref{eq:Fourier_decomposition}:
 \[
  S_{f;\bx}= \left\{\bp\in\{0,1\}^k : \bp\cdot\bx = 1 \right\}.
 \]
 Note that if $\bx$ is the all-zeroes string, then $\bp\cdot\bx=0$ for all $\bp\in\{0,1\}$.
 This implies that $S_{f;0\zd 0}=\emptyset$ and $f(0\zd 0) = \sum_{\bp\in\{0,1\}^k} \wh{f}(\bp)$.
 Thus, \eqref{eq:Fourier_decomposition} can be rewritten as
 \[
  f(\bx) = \sum_{\bp\in\{0,1\}^k} \wh{f}(\bp) - 2\left(\sum_{\bp\in S_{f;\bx}} \wh{f}(\bp)\right) = f(0\zd 0) - 2\left(\sum_{\bp\in S_{f;\bx}} \wh{f}(\bp)\right) \leq f(0\zd 0),
 \]
 where the inequality holds because $f\in\cP$ implies that all terms in the sum are nonnegative. 
\end{proof}

\subsection{Relational clones, functional clones, and \texorpdfstring{$\omega$}{omega}-clones}\label{sec:pps}

Functional clones, $\omega$-clones and $(\omega,p)$-clones are sets of functions that are closed under certain operations.
These operations are listed in Lemma~\ref{lem:closure} for functional clones; $\omega$-clones and $(\omega,p)$-clones additionally have different ways of taking limits.
Their definitions build on the well-established theory of relational clones.
To simplify our notation, we take our definitions from~\cite{bulatov_functional_2017,bulatov_expressibility_2013} though relational clones have a longer history \cite{PK}.  
In the following, let $\Gm$ be a set of relations and let $V=\{\vc{v}{n}\}$ be a set of variables.

\begin{dfn}[\cite{bulatov_expressibility_2013}]
 A \emph{primitive positive formula} (pp-formula) over $\Gm$ in variables $V$ is a formula of the form
 \[
  \exists v_{n+1}\ldots v_{n+m} \bigwedge_{i} \varphi_i,
 \]
 where each atomic formula $\varphi_i$ is either a relation $R$ from $\Gm$ or the equality relation, applied to some of the variables in $V'=\{\vc{v}{n+m}\}$.
\end{dfn}

\begin{dfn}[\cite{bulatov_expressibility_2013}]
 The \emph{relational clone} (or co-clone) $\ang{\Gm}_R$ is the set of all relations expressible as pp-formulas over $\Gm$.
\end{dfn}

The definition of a functional clone is similar, we again follow the explanation in \cite{bulatov_expressibility_2013}.
Note that functional clones, $\om$-clones and $(\om,p)$-clones were originally defined for nonnegative real-valued functions.
The definitions can be restricted to nonnegative rational-valued functions as explained just after Definition~\ref{def:pps-omega}.

Let $\cF\sse\cB$ be a set of functions and $V=\{\vc{v}{n}\}$ a set of variables.
In the context of functions rather than relations, an atomic formula $\varphi=g(v_{i_1}\zd v_{i_k})$ consists of a function $g\in\cF$ and a scope $(v_{i_1}\zd v_{i_k})\in V^k$, where $k=\ari(g)$.
The scope may contain repeated variables.
Given an assignment $\bx:V\to\{0,1\}$, the atomic formula $\varphi$ specifies a function $f_\varphi:\{0,1\}^n\to\zQ_{\geq 0}$ given by
\[
 f_\varphi(\bx) = g(x_{i_1}\zd x_{i_k}),
\]
where $x_j=\bx(v_j)$ for all $j\in [n]$.

\begin{dfn}\label{def:pps}
 A \emph{primitive product summation formula} (pps-formula) in variables $V$ over $\cF$ has the form
 \[
  \psi = \sum_{v_{n+1},\ldots,v_{n+m}} \prod_{j=1}^s \varphi_j,
 \]
 where $\varphi_j$ are all atomic formulas over $\cF$ in the variables $V'=\{\vc{v}{n+m}\}$.
 The variables in $V$ are called \emph{free variables}, those in $V'\setminus V$ are called \emph{bound variables}.
\end{dfn}

The pps-formula $\psi$ represents a function $f_\psi:\{0,1\}^n\to\zQ_{\geq 0}$ given by
\[
 f_\psi(\bx) = \sum_{\by\in\{0,1\}^m} \prod_{j=1}^s f_{\varphi_j}(\bx,\by),
\]
where $\bx$ is an assignment $V\to\{0,1\}$ and $\by$ is an assignment $V'\setminus V\to\{0,1\}$.
If $f$ is represented by some pps-formula $\psi$ over $\cF$, it is said to be \emph{pps-definable} over $\cF$.

\begin{dfn}[\cite{bulatov_expressibility_2013}]
 The \emph{functional clone} generated by $\cF$ is the set of all functions in $\cB$ that can be represented by a pps-formula over $\cF\cup\{\EQ\}$.
 It is denoted by $\ang{\cF}$.
\end{dfn}

There is a another perspective on functional clones \cite{bulatov_functional_2017}.
In the following, let $f\in\cB_k$.
We say a $(k+1)$-ary function $h$ arises from $f$ by \emph{introduction of a fictitious argument} if $h(\vc{x}{k+1})=f(\vc{x}{k})$ for all $\vc{x}{k}\in\{0,1\}$.
Let $g\in\cB_k$, then the \emph{product} of $f$ and $g$ is the function $h$ satisfying $h(\vc{x}{k}) = f(\vc{x}{k}) g(\vc{x}{k})$.
The function $h$ resulting from $f$ by a \emph{permutation of the arguments} $\pi:[k]\to[k]$ is $h(\vc{x}{k})=f(x_{\pi(1)}\zd x_{\pi(k)})$.
Furthermore, a $(k-1)$-ary function $h$ arises from $f$ by \emph{summation} if $h(\vc{x}{k-1})=\sum_{x_k\in\{0,1\}} f(\vc{x}{k})$.

\begin{lem}[{\cite[Section~1.1]{bulatov_functional_2017}}]\label{lem:closure}
 For any $\cF\sse\cB$, $\ang{\cF}$ is the closure of $\cF\cup \{\EQ\}$ under introduction of fictitious arguments, products, permutation of arguments, and summation.
\end{lem}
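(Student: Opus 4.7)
The proof proposal has two inclusions to verify: that the closure is contained in $\ang{\cF}$, and that every pps-definable function lies in the closure.

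For the easier direction, I would check that the class of pps-definable functions over $\cF \cup \{\EQ\}$ is closed under each of the four operations. Given a pps-formula $\psi$ representing $f$, introducing a fictitious argument is achieved by simply adding a new free variable that appears in no atomic formula of $\psi$; since the atomic formulas are unaffected, the represented function is independent of the new variable. Permutation of arguments corresponds to renaming the free variables. Summation corresponds to moving one free variable of $\psi$ to the list of bound (existentially summed) variables. For products, given pps-formulas $\psi_f$ and $\psi_g$ of the same arity representing $f$ and $g$, rename their bound variables to be pairwise disjoint, identify their free variables, and take the product of the two conjunctions; the result is a pps-formula for $f\cdot g$. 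This shows the closure is contained in $\ang{\cF}$.

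The main content is the reverse inclusion, and the key technical point is that arbitrary pps-formulas allow repeated variables both within a single atomic formula and across atomic formulas, whereas the closure operations do not manipulate variable repetition directly. The remedy is to use $\EQ$ together with product and summation to realise ``diagonalisation''. Concretely, if $h$ is a $k$-ary function in the closure, then so is the $(k-1)$-ary function $h'(x_1,\dots,x_{k-1}) = h(x_1,x_1,x_3,\dots,x_k)$, obtained by introducing a fictitious argument to $\EQ$ to form $\EQ'(x_1,\dots,x_k) := \EQ(x_1, x_2)$, multiplying with $h$, and summing out $x_2$. Iterating this and combining with permutation of arguments, one can implement any map $h \mapsto h^\sigma$ where $h^\sigma(x_1,\dots,x_n) := h(x_{\sigma(1)},\dots,x_{\sigma(k)})$ for any function $\sigma: [k] \to [n]$ (not necessarily injective). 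This step is the main obstacle in the sense that one must be careful to track arities and to keep the constructions within the closure.

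Given this ``substitution'' tool, I would finish the second direction as follows. Let $\psi = \sum_{v_{n+1},\dots,v_{n+m}} \prod_{j=1}^s \varphi_j$ be a pps-formula with atomic formulas $\varphi_j = g_j(v_{i_{j,1}},\dots,v_{i_{j,k_j}})$ over $\cF \cup \{\EQ\}$. Set $N = n+m$. For each $j$, apply the substitution construction above to $g_j$ with the map $\sigma_j: [k_j] \to [N]$ given by $\sigma_j(\ell) = i_{j,\ell}$, producing an $N$-ary function $\tilde g_j$ in the closure with $\tilde g_j(x_1,\dots,x_N) = g_j(x_{i_{j,1}},\dots,x_{i_{j,k_j}})$. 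Now the $\tilde g_j$ all have arity $N$, so by finitely many applications of product we obtain an $N$-ary function $\prod_j \tilde g_j$ in the closure. Finally, $m$ applications of summation eliminate the bound variables $v_{n+1},\dots,v_{n+m}$ and produce $f_\psi$, showing $f_\psi$ lies in the closure. Together with the first direction, this yields $\ang{\cF}$ equals the closure, proving the lemma.
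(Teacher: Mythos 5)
The paper does not prove this lemma; it is quoted from \cite[Section 1.1]{bulatov_functional_2017} without argument, so there is no ``paper proof'' to compare against. Your proof, however, is correct and supplies exactly the right details.

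Your first inclusion is fine, though for completeness you should also note the trivial base case: every $g\in\cF\cup\{\EQ\}$ of arity $k$ is pps-definable via the one-atom formula $g(v_1,\dots,v_k)$. Your second inclusion correctly identifies the one real obstacle, namely that a pps-formula allows arbitrary repetition of variables both inside a scope and across scopes, whereas the four closure operations act only ``in parallel'' on aligned argument positions. The diagonalisation gadget (pad $\EQ$ with fictitious arguments, multiply, sum out one of the duplicated positions) is precisely the device needed, and iterating it together with permutations does yield $h\mapsto h^\sigma$ for any $\sigma\colon[k]\to[N]$: one first contracts repeated preimages one at a time by the gadget, then pads with fictitious arguments and permutes to embed into $N$ positions. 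Two small points you should make explicit. First, ``summation'' in Lemma~\ref{lem:closure} is defined so as to sum over the \emph{last} argument, so before each diagonalisation step you must permute the variable to be eliminated into the last slot; similarly you must permute the bound variables $v_{n+1},\dots,v_{n+m}$ to the end before the final $m$ summations. Second, there is an indexing slip in your display: after summing out $x_2$ the result is a $(k-1)$-ary function $h'(x_1,x_2,\dots,x_{k-1})=h(x_1,x_1,x_2,\dots,x_{k-1})$ (the surviving variables $x_3,\dots,x_k$ are relabelled as $x_2,\dots,x_{k-1}$), not $h(x_1,x_1,x_3,\dots,x_k)$. Neither issue affects the correctness of the argument; a third, truly marginal caveat is that if the pps-formula is permitted to have an empty product ($s=0$) it represents a constant function, and one should say a word about why such constants are in the closure (e.g.\ use a single $\EQ$-atom on a fresh bound variable pair and sum it out), but this is a matter of convention in the definition rather than a flaw in your approach.
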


We adopt the shorthand notation from \cite{bulatov_expressibility_2013}, so if $\cF_1,\ldots,\cF_j$ are sets of functions and $g_1,\ldots,g_k$ are functions,
then $\ang{\cF_1,\ldots,\cF_j,g_1,\ldots,g_k}:=\ang{\cF_1\cup\cdots \cup \cF_j\cup\{g_1,\ldots,g_k\}}$.
 
Note that if $\cF\sse\cB$ and $g\in\ang{\cF}$, then $\ang{\cF,g}=\ang{\cF}$ \cite[Lemma 2.1]{bulatov_expressibility_2013}.

\begin{dfn}[\cite{bulatov_expressibility_2013}]\label{def:pps-omega}
 A function $f\in\cB_k$ is \emph{pps$_\om$-definable} over $\cF\sse\cB$ if there exists a finite subset $S_f$ of $\cF$ such that, for every $\epsilon>0$, there exists a $k$-ary function $f'$, pps-definable over $S_f$, such that
 \[
  \left\| f'-f \right\|_\infty = \max_{\bx\in\{0,1\}^k} \left|f'(\bx)-f(\bx)\right| < \epsilon.
 \]
\end{dfn}

Note that to be pps$_\om$-definable over some subset of $\cB$, a function must itself be in $\cB$, i.e.\ it must take rational values.
This avoids complications resulting from the limits of some rational sequences being irrational.

\begin{dfn}
 Let $\cF\sse\cB$.
 The set of all functions that are pps$_\om$-definable over the set $\cF\cup\{\EQ\}$ is called the \emph{$\om$-clone} generated by $\cF$; it is denoted $\ang{\cF}_\om$.
\end{dfn}

In \cite{bulatov_expressibility_2013}, $\om$-clones were originally called ``pps$_\om$-definable functional clones''; we instead use the shorter terminology from \cite{bulatov_functional_2017}.

\begin{dfn}[\cite{bulatov_expressibility_2013}]
 A function $f\in\cB$ is \emph{efficiently pps$_\om$-definable} over $\cF$ if there is a finite subset $S_f$ of $\cF$ and a Turing machine $\cM_{f,S_f}$ with the following property: on input $\epsilon>0$, $\cM_{f,S_f}$ computes a pps-formula $\psi$ over $S_f$ such that $f_\psi$ has the same arity as $f$ and $\|f_\psi-f\|_\infty < \epsilon$.
 The running time of $\cM_{f,S_f}$ is at most a polynomial in $\log\epsilon^{-1}$.
\end{dfn}

\begin{dfn}\label{def:ompclone}
 Let $\cF\sse\cB$.
 The set of all functions that are efficiently pps$_\om$-definable over the set $\cF\cup\{\EQ\}$ is called the \emph{$(\om,p)$-clone} generated by $\cF$; it is denoted $\ang{\cF}_{\om,p}$.
\end{dfn}

The $(\om,p)$-clones were originally called ``efficiently pps$_\om$-definable functional clones''; we shorten the terminology here to bring it in line with the term ``$\om$-clone''.
We use the same shorthand as for functional clones, so
$\ang{\cF_1,\ldots,\cF_j,g_1,\ldots,g_k}_\om:=\ang{\cF_1\cup\cdots \cup \cF_j\cup\{g_1,\ldots,g_k\}}_\om$
and 
$\ang{\cF_1,\ldots,\cF_j,g_1,\ldots,g_k}_{\om,p}:=\ang{\cF_1\cup\cdots \cup \cF_j\cup\{g_1,\ldots,g_k\}}_{\om,p}$
 
Note that if $\cF\sse\cB$ and $g\in\ang{\cF}_{\om}$, then $\ang{\cF,g}_{\om}=\ang{\cF}_{\om}$ \cite[Lemma 2.2]{bulatov_expressibility_2013}.
Similarly, if $\cF\sse\cB$ and $g\in\ang{\cF}_{\om,p}$, then $\ang{\cF,g}_{\om,p}=\ang{\cF}_{\om,p}$ \cite[Lemma 2.4]{bulatov_expressibility_2013}.

\begin{obs}\label{obs:pinning}
 If $\upf\in\cB^{<,\mathrm{n}}_1$, then $\delta_1\in\ang{\upf}_{\om,p}$ and if $\downf\in\cB^{>,\mathrm{n}}_1$, then $\delta_0\in\ang{\downf}_{\om,p}$.
 (Indeed, let $\upf\in\cB^{<,\mathrm{n}}_1$, that is $\upf(0)=a,\upf(1)=1$ for some $a<1$. Then $\upf^k(0)=a^k$ and $\upf^k(1)=1$. Therefore $\lim_{k\to \infty}\upf^k(x)=\delta_1$, witnessing that $\delta_1\in\ang{\upf}_{\om,p}$.)
\end{obs}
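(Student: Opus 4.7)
The plan is to prove the two halves of the observation by the same strategy: construct, for each precision $\epsilon$, a pps-formula whose value differs from the target pinning function by at most $\epsilon$, and check that the construction is efficient in the sense of Definition~\ref{def:ompclone}. The two halves are symmetric under bit-flip, so I will focus on the first claim, that $\delta_1 \in \ang{\upf}_{\om,p}$ whenever $\upf \in \cB^{<,\mathrm{n}}_1$.

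Fix $\upf$ with $\upf(0) = a$ and $\upf(1) = 1$, where $0 < a < 1$. For each positive integer $k$, consider the pps-formula $\psi_k(v_1) = \prod_{i=1}^k \upf(v_1)$ over $\cF = \{\upf\}$ (using the same free variable $v_1$ in all atomic formulas, which is allowed by Definition~\ref{def:pps}). By Lemma~\ref{lem:closure}, the function represented by $\psi_k$ is the $k$-fold product $\upf^k$, which is unary with $\upf^k(0) = a^k$ and $\upf^k(1) = 1$. Hence $\|\upf^k - \delta_1\|_\infty = a^k$, and this tends to $0$ geometrically as $k \to \infty$.

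Given any $\epsilon > 0$, it suffices to choose $k$ large enough that $a^k < \epsilon$, i.e.\ any $k > (\log \epsilon^{-1})/(\log a^{-1})$. Since $a \in (0,1)$ is a fixed rational (part of the description of $\upf$), the quantity $\log a^{-1}$ is a positive constant, so $k$ can be taken to be $O(\log \epsilon^{-1})$. The required Turing machine $\cM_{\delta_1,\{\upf\}}$ simply searches for the least such $k$ by repeatedly squaring $a$ and comparing rationals with $\epsilon$, and then writes down the pps-formula $\psi_k$ of size $O(k)$; the whole procedure runs in time polynomial in $\log \epsilon^{-1}$, as required by Definition~\ref{def:ompclone}. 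This shows $\delta_1 \in \ang{\upf}_{\om,p}$.

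The second half follows by exactly the same argument with the roles of $0$ and $1$ interchanged: if $\downf(0) = 1$ and $\downf(1) = b$ with $0 < b < 1$, then the pps-formula $\prod_{i=1}^k \downf(v_1)$ represents a function with values $1$ and $b^k$, whose $\infty$-distance to $\delta_0$ is $b^k$, and the same efficient construction gives $\delta_0 \in \ang{\downf}_{\om,p}$. The only point that needs care is the polynomial-time bound on $\cM$: because the bases $a$ and $b$ are fixed rationals strictly less than $1$, geometric decay is enough and no finer quantitative estimate is needed.
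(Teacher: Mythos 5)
Your proof is correct and takes essentially the same approach as the paper's parenthetical justification: you take the $k$-fold product $\upf^k$, observe that it converges geometrically to $\delta_1$, and verify the polynomial-time bound required by Definition~\ref{def:ompclone}. You simply spell out the efficiency argument in more detail than the paper does, which is a reasonable expansion of the sketch but not a different route.
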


Recall the definition of $\NEQ$ from Section~\ref{sec:binfn}
and the definition of $\cB_1$ from Section~\ref{s:definitions}.
We next define the functional clone~$\cN$, which arises   in our classification theorems.
Elements of $\cN$ are called \emph{product type functions}.

\begin{dfn}\label{def:cN}
$\cN := \ang{\NEQ,\cB_1}$.
\end{dfn}

Two collections of functions defined earlier in this paper are in fact $\om$-clones: the log-supermodular functions and the functions with nonnegative Fourier transform.

\begin{lem}[{\cite[Lemma 4.2]{bulatov_expressibility_2013}}]\label{lem:LSM_closed_omega}
 If $\cF\subseteq\LSM$, then $\ang{\cF}_{\om}\subseteq\LSM$.
\end{lem}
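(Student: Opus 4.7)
The plan is to first prove the stronger statement $\ang{\cF} \subseteq \LSM$, and then upgrade it to the $\om$-clone by a limit argument. The strategy uses Lemma~\ref{lem:closure}, which characterises $\ang{\cF}$ as the closure of $\cF \cup \{\EQ\}$ under four basic operations: introduction of fictitious arguments, products, permutations of arguments, and summation. So I would verify that $\EQ \in \LSM$, check that $\LSM$ is preserved by each of the four operations, and then observe that $\LSM$ is preserved under the $\ell_\infty$-limits in Definition~\ref{def:pps-omega}.

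The easy bookkeeping comes first. Since $\EQ$ takes only values in $\{0,1\}$ and has support $\{(0,0),(1,1)\}$, the inequality $\EQ(\bx\vee\by)\EQ(\bx\wedge\by) \geq \EQ(\bx)\EQ(\by)$ reduces to a few cases, each immediate. For introduction of a fictitious argument, both sides of the defining inequality depend only on the original coordinates and reduce to the inequality for the original function. For products of lsm functions $f,g$, the two inequalities $f(\bx\vee\by)f(\bx\wedge\by) \geq f(\bx)f(\by)$ and $g(\bx\vee\by)g(\bx\wedge\by) \geq g(\bx)g(\by)$ multiply together (all values nonnegative). For permutations of arguments, the bitwise operations $\vee$ and $\wedge$ commute with coordinate permutation, so the inequality is preserved in a one-line rewrite.

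The substantive step is closure under summation: if $f \in \LSM$ has arity $k \geq 1$ and $h(\bx) := \sum_{x_k} f(\bx, x_k)$, then $h$ must be shown to lie in $\LSM$. This is the statement that marginalising a log-supermodular function preserves log-supermodularity, and it is exactly the content of the Ahlswede--Daykin (four functions) inequality on the lattice $\{0,1\}$. Fixing $\ba,\bb \in \{0,1\}^{k-1}$, set $\alpha(x) = f(\ba,x)$, $\beta(x) = f(\bb,x)$, $\gamma(x) = f(\ba\vee\bb,x)$, $\delta(x) = f(\ba\wedge\bb,x)$; the pointwise hypothesis $\alpha(x)\beta(y) \leq \gamma(x\vee y)\delta(x\wedge y)$ is precisely log-supermodularity of $f$. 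Applying Ahlswede--Daykin with index sets $A = B = \{0,1\}$ yields $h(\ba)h(\bb) \leq h(\ba\vee\bb)h(\ba\wedge\bb)$. This step is the main obstacle: everything else is routine, but this one place invokes a nontrivial external combinatorial fact.

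Finally, I would extend from $\ang{\cF}$ to $\ang{\cF}_\om$. If $f \in \ang{\cF}_\om$, Definition~\ref{def:pps-omega} gives a finite $S \subseteq \cF$ and a sequence of pps-definable functions $f_n$ over $S \cup \{\EQ\}$ with $\|f_n - f\|_\infty \to 0$. Each $f_n$ lies in $\LSM$ by what was just proved, and the defining inequality $f_n(\bx\vee\by)f_n(\bx\wedge\by) \geq f_n(\bx)f_n(\by)$ involves only products and the relation $\geq$ on finitely many values, so it is preserved under pointwise limits. Letting $n \to \infty$ yields the corresponding inequality for $f$, so $f \in \LSM$ and thus $\ang{\cF}_\om \subseteq \LSM$.
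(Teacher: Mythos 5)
Your proposal is correct. Note that the paper does not prove this lemma itself---it is imported verbatim from \cite[Lemma 4.2]{bulatov_expressibility_2013}---so the comparison is against the standard argument in that reference, which your proof reproduces faithfully: reduce via Lemma~\ref{lem:closure} to the four generating operations, where only summation is nontrivial and is handled by the Ahlswede--Daykin four-functions inequality on the two-element lattice, then pass to the $\om$-clone by pointwise limits (the lsm inequality involves only finitely many nonnegative reals and closed operations, so it survives the limit). Two small remarks: the two-element case of Ahlswede--Daykin used here is elementary enough to prove in a few lines without invoking the general distributive-lattice theorem (after splitting off the two ``aligned'' terms, the cross terms reduce to $(s-p)(s-q)\geq 0$ where $p=\alpha_0\beta_1$, $q=\alpha_1\beta_0$, $s=\gamma_1\delta_0$, using $pq\leq rs$, $p\leq s$, $q\leq s$), which would make the proof self-contained; and it is worth saying explicitly that identification of variables in a pps-formula scope is recovered from products with $\EQ$ followed by summation, so Lemma~\ref{lem:closure}'s four operations really do suffice.
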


\begin{lem}[{\cite[Theorem 28]{bulatov_functional_2017}}]\label{lem:Fou_omega}
 $\ang{\cP}_\om=\cP$.
\end{lem}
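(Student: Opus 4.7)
The plan is to prove the two containments separately. The inclusion $\cP\subseteq\ang{\cP}_\om$ is immediate from Definition~\ref{def:pps-omega} (take the trivial approximation $f'=f$). The work is in proving $\ang{\cP}_\om\sse\cP$. By the definition of the $\om$-clone, this amounts to two things: (i) $\cP$ is a functional clone containing $\EQ$, so that every function pps-definable over $\cP\cup\{\EQ\}$ lies in $\cP$; and (ii) $\cP$ is closed under uniform $\|\cdot\|_\infty$ limits of $\cB$-valued approximations.

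For (i), I would appeal to Lemma~\ref{lem:closure} and check that $\cP$ contains $\EQ$ and is closed under the four generating operations. A direct Fourier calculation gives $\wh{\EQ}(0,0)=\wh{\EQ}(1,1)=1/2$ and $\wh{\EQ}(0,1)=\wh{\EQ}(1,0)=0$, so $\EQ\in\cP$. Introducing a fictitious argument takes $\wh{f}(\bp)$ to $\wh{h}(\bp,p_{k+1})=\wh{f}(\bp)$ when $p_{k+1}=0$ and $0$ otherwise; permutation of arguments merely permutes Fourier indices; and summing out the last argument yields $\wh{h}(\bp')=2\wh{f}(\bp',0)$. Each of these operations manifestly preserves nonnegativity of Fourier coefficients.

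The main step, and the only place where something nontrivial happens, is closure under pointwise products. Starting from Observation~\ref{obs:Fourier_inverse}, I would expand
\[
 f(\bx)g(\bx) = \sum_{\bq,\br\in\{0,1\}^k} (-1)^{(\bq\op\br)\cdot\bx}\,\wh{f}(\bq)\,\wh{g}(\br),
\]
substitute into the definition of $\wh{fg}(\bp)$, and use the orthogonality identity $2^{-k}\sum_{\bx}(-1)^{\bs\cdot\bx}=[\bs=\mathbf 0]$ to obtain the convolution formula $\wh{fg}(\bp)=\sum_{\bq}\wh{f}(\bq)\wh{g}(\bp\op\bq)$. If $\wh{f},\wh{g}\ge 0$ then every term is nonnegative, so $fg\in\cP$. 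This is the step I expect to be the crux, although it is a routine calculation once set up correctly; the only care needed is to ensure that pps-definability really does reduce, via Lemma~\ref{lem:closure}, to these four arity-aligned operations on same-arity functions.

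For (ii), suppose $f_n\to f$ in $\|\cdot\|_\infty$ with each $f_n\in\cP$ and $f\in\cB$. Since each Fourier coefficient is a fixed rational linear combination of function values with coefficients of absolute value $2^{-k}$, uniform convergence implies $\wh{f_n}(\bp)\to\wh{f}(\bp)$ for every $\bp$; nonnegativity passes to the limit, so $\wh{f}(\bp)\ge 0$ and $f\in\cP$. Combining (i) and (ii) gives $\ang{\cP}_\om\sse\cP$, completing the proof.
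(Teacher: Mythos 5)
The paper itself does not prove this statement: Lemma~\ref{lem:Fou_omega} is quoted verbatim from \cite[Theorem 28]{bulatov_functional_2017}, so there is no in-paper proof to compare against. That said, your argument is correct and matches the standard route one would take (and, as far as I recall, the route taken in the cited source): reduce $\ang{\cP}_\om\sse\cP$ to two closure properties. Part (i), that $\cP$ is a functional clone, is handled correctly by verifying $\EQ\in\cP$ and checking Lemma~\ref{lem:closure}'s four generating operations; your computations for fictitious arguments, permutations, and summation are right, and the crux (closure under products) is the Boolean convolution identity $\wh{fg}(\bp)=\sum_{\bq}\wh{f}(\bq)\wh{g}(\bp\op\bq)$, which immediately preserves nonnegativity. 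Part (ii), closure under $\|\cdot\|_\infty$-limits inside $\cB$, follows since the domain $\{0,1\}^k$ is finite and each Fourier coefficient is a fixed finite linear combination of function values, so nonnegativity survives the limit. Both parts are sound, and you correctly flag the only delicate point: that pps-definability reduces to the four arity-aligned operations via Lemma~\ref{lem:closure}. No gaps.
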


\subsection{Approximate counting and counting CSPs}
\label{s:counting_CSPs}

A counting constraint satisfaction problem $\NCSP(\cF)$ is parameterised by a finite set of functions $\cF$ over some domain $D$, which we take to be $D=\{0,1\}$.
An instance $\Omega$ of $\NCSP(\cF)$ is specified by a finite set $V$ of variables and a finite set $C$ of constraints.
Each constraint $c=(\bv_c,f_c)$ consists of a tuple $\bv_c$ of $k$ variables for some $k\in\zN$, which may involve repeated variables, and a function $f_c\in\cF$ of arity $k$.
Any assignment $\sigma:V\to\{0,1\}$ of values to the variables is thus associated with a weight $w_\sigma = \prod_{c\in C} f_c(\sigma(\bv_c))$, where $\sigma(\bv_c)$ is computed component-wise.
The \emph{partition function} of instance $\Omega$ is the sum of the weights of all the different assignments:
\begin{equation}\label{eq:bit_flip_partition_function}
 Z(\Omega) = \sum_{\sigma:V\to\{0,1\}} w_\sigma = \sum_{\sigma:V\to\{0,1\}} \prod_{c\in C} f_c(\sigma(\bv_c)).
\end{equation}

We will be interested in the problem of approximating the partition function associated with a counting CSP.
An approximation  scheme for the problem $\NCSP(\cF)$ is an algorithm that takes as input an error parameter $\epsilon$ and an instance $\Omega$, and outputs a value $\tilde{Z}$, which satisfies
\[
 e^{-\epsilon}Z(\Omega) \leq \tilde{Z} \leq e^\epsilon Z(\Omega).
\]
This algorithm is an \emph{FPTAS} or \emph{fully polynomial-time approximation scheme} if its running time is polynomial in the size of the instance and in $\epsilon^{-1}$.

The problem $\NCSP(\cF)$ has a \emph{randomised approximation scheme} if there exists a randomised algorithm which takes as input an instance $\Omega$ and an error parameter $\epsilon>0$, and which outputs a value $\tilde{Z}$, satisfying
\[
 \Pr\left[ e^{-\epsilon}Z(\Omega) \leq \tilde{Z} \leq e^\epsilon Z(\Omega) \right] \geq \frac{3}{4}
\]
for all inputs $\Omega$ and $\epsilon$.
This algorithm is an \emph{FPRAS} or \emph{fully polynomial randomised approximation scheme} if its running time is polynomial in the size of the instance and in $\epsilon^{-1}$ \cite{dyer_relative_2004}.

An \emph{approximation-preserving reduction} (or \emph{AP-reduction}) from a counting problem $C$ to a counting problem $C'$ is an algorithm turning an FPRAS for $C'$ into an FPRAS for $C$.
A rigorous definition of this notion may be found in \cite{dyer_relative_2004}.
If there exists such a reduction, we write $C\leq_{AP}C'$ and say $C$ is \emph{AP-reducible} to $C'$.
If AP-reductions exist in both directions, $C$ and $C'$ are said to be \emph{AP-interreducible}, denoted $C=_{AP} C'$.

\begin{obs}\label{obs:bit_flip_CSP}
 Let $\cF$ be a finite subset of $\cB$, then $\NCSP(\cF) =_{AP} \NCSP(\bar{\cF})$.
\end{obs}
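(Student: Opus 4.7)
The plan is to exhibit explicit AP-reductions in both directions by means of a syntactic translation that preserves the partition function exactly. Since $\bar{\bar{\cF}} = \cF$, it suffices to construct a reduction $\NCSP(\cF) \leq_{AP} \NCSP(\bar{\cF})$; the reverse direction is then obtained by applying the same construction with $\bar{\cF}$ in place of $\cF$.

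Given an instance $\Omega$ of $\NCSP(\cF)$ with variable set $V$ and constraint set $C = \{(\bv_c, f_c) : c \in C\}$, I would map it to the instance $\bar{\Omega}$ of $\NCSP(\bar{\cF})$ with the same variable set $V$ and constraint set $\{(\bv_c, \bar{f}_c) : c \in C\}$. This is clearly a polynomial-time (indeed linear-time) transformation, and produces a legal instance since $\bar{f}_c \in \bar{\cF}$ has the same arity as $f_c$. The key step is to verify that $Z(\bar{\Omega}) = Z(\Omega)$: expanding the definition and using $\bar{f}_c(\by) = f_c(\mathbf{1} - \by)$ where $\mathbf{1} - \by$ denotes componentwise complementation gives
\[
 Z(\bar{\Omega}) = \sum_{\sigma:V\to\{0,1\}} \prod_{c \in C} \bar{f}_c(\sigma(\bv_c)) = \sum_{\sigma:V\to\{0,1\}} \prod_{c \in C} f_c(\bar{\sigma}(\bv_c)),
\]
where $\bar{\sigma}(v) := 1 - \sigma(v)$. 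The involution $\sigma \mapsto \bar{\sigma}$ is a bijection on the set of assignments $V \to \{0,1\}$, so after reindexing the outer sum the right-hand side equals $\sum_{\tau} \prod_c f_c(\tau(\bv_c)) = Z(\Omega)$.

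Consequently, any FPRAS for $\NCSP(\bar{\cF})$ applied to $\bar{\Omega}$ with error parameter $\epsilon$ yields a value $\tilde{Z}$ with $e^{-\epsilon} Z(\bar{\Omega}) \leq \tilde{Z} \leq e^\epsilon Z(\bar{\Omega})$, which is immediately an $\epsilon$-approximation of $Z(\Omega)$. This meets the definition of AP-reduction in \cite{dyer_relative_2004} with no loss in approximation quality and with a single oracle call. The symmetric argument using $\bar{\bar{\cF}} = \cF$ gives the reverse reduction, proving AP-interreducibility.

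There is no real obstacle here: the argument is purely a change-of-variables in the defining sum, combined with the fact that the bit-flip is an involution on both functions and assignments. The only thing worth being careful about is that the construction preserves both the arity requirement of the CSP framework and the finiteness of the constraint-function set — both of which are immediate from the definition of $\bar{\cF}$.
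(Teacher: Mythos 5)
Your proof is correct and matches the paper's argument, which is stated in one sentence immediately after the observation: the partition function is unchanged when every constraint function is replaced by its bit-flip. You have simply spelled out the change-of-variables $\sigma \mapsto \bar{\sigma}$ that underlies that claim.
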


This holds because the value of the partition function \eqref{eq:bit_flip_partition_function} is unchanged if each constraint function in that equation is replaced by its bit-flip.
We use the same shorthand for counting CSPs that we do for clones, so
$\NCSP(\cF_1,\ldots,\cF_j,g_1,\ldots,g_k):=\NCSP(\cF_1\cup\cdots \cup \cF_j\cup\{g_1,\ldots,g_k\})$.

\begin{lem}\label{lem:normalising}
 For all $f\in\cB^{<}_1$ there is an $f'\in\cB^{<,\mathrm{n}}_1$ such that, for all finite $\cF\sse\cB$, the problem $\NCSP(\cF, f) =_{AP} \NCSP(\cF,f')$.
 Similarly, for all $f\in\cB^{>}_1$ there is an $f'\in\cB^{>,\mathrm{n}}_1$ such that, for all finite $\cF\sse\cB$, $\NCSP(\cF, f) =_{AP} \NCSP(\cF,f')$.
\end{lem}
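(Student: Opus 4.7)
The plan is to prove this via a direct scaling argument: since $f$ and the normalized $f'$ differ only by a positive multiplicative constant, any instance using $f$ has partition function differing from the corresponding instance using $f'$ by a known constant factor, and this factor is easily computed.

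For the increasing case, given $f\in\cB^{<}_1$ with $f(0)=a$ and $f(1)=b$ where $0<a<b$, I would define $f'(x):=f(x)/b$, so that $f'(0)=a/b\in(0,1)$ and $f'(1)=1$, placing $f'$ in $\cB^{<,\mathrm{n}}_1$. The decreasing case is symmetric: given $f\in\cB^{>}_1$ with $f(0)=a>f(1)=b>0$, take $f'(x):=f(x)/a$.

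Given an instance $\Omega=(V,C)$ of $\NCSP(\cF,f)$, let $C_f\subseteq C$ be the constraints whose function is $f$ and let $m=|C_f|$. Construct $\Omega'$ with the same variables and the same constraints, except that in each constraint in $C_f$ the function $f$ is replaced by $f'$. For every assignment $\sigma:V\to\{0,1\}$, the weight satisfies $w'_\sigma = b^{-m} w_\sigma$, since each of the $m$ applications of $f$ contributes an extra factor of $b^{-1}$. Summing over assignments gives $Z(\Omega')=b^{-m}Z(\Omega)$, i.e.\ $Z(\Omega)=b^m Z(\Omega')$. The reduction $\NCSP(\cF,f)\leq_{AP}\NCSP(\cF,f')$ now works as follows: given $\Omega$ and $\epsilon>0$, construct $\Omega'$ in polynomial time, invoke the FPRAS for $\NCSP(\cF,f')$ with accuracy $\epsilon$ to obtain $\tilde{Z}'$, and output $\tilde{Z}:=b^m\tilde{Z}'$. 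Then $e^{-\epsilon}Z(\Omega')\leq\tilde{Z}'\leq e^{\epsilon}Z(\Omega')$ implies $e^{-\epsilon}Z(\Omega)\leq\tilde{Z}\leq e^{\epsilon}Z(\Omega)$, and $b^m$ is a rational of bit-length polynomial in $m\leq|C|$ so it can be computed and multiplied in polynomial time. The reverse direction $\NCSP(\cF,f')\leq_{AP}\NCSP(\cF,f)$ is identical, using the factor $b^{-m}$ instead. The decreasing case is entirely analogous with $a$ in place of $b$.

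There is no real obstacle: the only subtlety worth mentioning is verifying that the scaling factor $b^m$ (or $a^m$) has bit-length polynomial in the instance size, which is immediate because $b$ is a fixed rational depending only on $f$ while $m$ is at most the number of constraints in $\Omega$.
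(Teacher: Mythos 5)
Your proof is correct and takes essentially the same approach as the paper: normalise by dividing by $f(1)$ (resp.\ $f(0)$), replace each of the $m$ constraints using $f$ by one using $f'$, and absorb the resulting multiplicative factor $f(1)^m$ (resp.\ $f(0)^m$) into the AP-reduction. The only addition you make is spelling out explicitly that the scaling factor has polynomial bit-length, which the paper leaves implicit.
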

\begin{proof}
 To see the first statement, note that $f\in\cB^{<}_1$ implies $f$ is permissive.
 The function $f'(x)=f(x)/f(1)$ is therefore well-defined, and an element of $\cB^{<,\mathrm{n}}_1$.
 Consider an instance $\Omega$ of $\NCSP(\cF, f)$ with $n$ constraints, $m$ of which use the function $f$.
 Construct an instance $\Omega'$ of $\NCSP(\cF, f')$ by replacing each of those $m$ constraints with a constraint using $f'$.
 Then $Z(\Omega) = (f(1))^m Z(\Omega')$, so since $m=\mathcal{O}(n)$, any FPRAS for $\NCSP(\cF, f')$ can be turned into an FPRAS for $\NCSP(\cF, f)$.
 Similarly, any FPRAS for $\NCSP(\cF, f)$ can be turned into an FPRAS for $\NCSP(\cF, f')$, so $\NCSP(\cF, f) =_{AP} \NCSP(\cF,f')$.
 
 An analogous argument holds if $f\in\cB^{>}_1$.
\end{proof}

The complexity of counting CSPs is closely linked to the theory of $(\om,p)$-clones, as can be seen from the following AP-reduction.

\begin{lem}[{\cite[Lemma 10.1]{bulatov_expressibility_2013}}]\label{lem:clone_csp}
 Suppose that $\cF$ is a finite subset of $\cB$.
 If $g\in\ang{\cF}_{\om,p}$, then
 \[
  \NCSP(\cF,g) \leq_{AP} \NCSP(\cF).
 \]
\end{lem}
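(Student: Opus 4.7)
The plan is to turn any FPRAS $A$ for $\NCSP(\cF)$ into an FPRAS for $\NCSP(\cF,g)$ by replacing each constraint that uses $g$ by a gadget implementing the pps-formula guaranteed by efficient pps$_\om$-definability, then running $A$ on the resulting instance of $\NCSP(\cF)$. The main work is in choosing the approximation accuracy $\epsilon'$ so that (i) the constructed instance has size polynomial in the original, and (ii) the partition functions of the two instances are multiplicatively close.

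First I would unpack the hypothesis: since $g\in\ang{\cF}_{\om,p}$, Definition~\ref{def:ompclone} gives a finite $S_g\sse \cF\cup\{\EQ\}$ and a Turing machine $\cM$ that, on input $\epsilon'>0$, outputs in time $\mathrm{poly}(\log\epsilon'^{-1})$ a pps-formula $\psi_{\epsilon'}$ over $S_g$ of the same arity as $g$ satisfying $\|f_{\psi_{\epsilon'}}-g\|_\infty<\epsilon'$. Given an instance $\Omega=(V,C)$ of $\NCSP(\cF,g)$ and a target error $\epsilon$, let $m$ denote the number of $g$-constraints in $C$. I would choose $\epsilon'$ small enough (see below), compute $\psi_{\epsilon'}$, and construct $\Omega''$ by keeping all non-$g$ constraints and replacing each $g$-constraint $(\bv_i,g)$ by a fresh copy of $\psi_{\epsilon'}$: introduce new bound variables for the existential variables of $\psi_{\epsilon'}$, identify its free variables with $\bv_i$, and add an atom of $\psi_{\epsilon'}$ as a constraint. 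Any $\EQ$-atoms are eliminated by variable identification, so $\Omega''$ is an instance of $\NCSP(\cF)$. By the definition of pps-formulas, summing out the bound variables gives
\[
Z(\Omega'')=\sum_{\sigma\colon V\to\{0,1\}}\ \prod_{c\in C,\,f_c\neq g} f_c(\sigma(\bv_c))\ \prod_{i=1}^m f_{\psi_{\epsilon'}}(\sigma(\bv_i)).
\]

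The key calculation compares $Z(\Omega)$ and $Z(\Omega'')$. In the cleanest case, where $g$ is permissive with $g_{\min}=\min_{\bx}g(\bx)>0$, choosing $\epsilon'\le g_{\min}\epsilon/(4m)$ forces $|f_{\psi_{\epsilon'}}(\bx)/g(\bx)-1|\le\epsilon/(4m)$ pointwise; taking a product over the $m$ gadgets and summing over assignments gives $Z(\Omega'')\in(1\pm\epsilon/2)Z(\Omega)$. I would then run $A$ on $\Omega''$ with error parameter $\epsilon/2$ to get $\tilde Z\in e^{\pm\epsilon/2}Z(\Omega'')\subseteq e^{\pm\epsilon}Z(\Omega)$ with probability at least $3/4$, and return $\tilde Z$; the total running time is polynomial in the input size and in $\epsilon^{-1}$ because $\log(\epsilon'^{-1})$ is $O(\log(m/\epsilon)+\log(g_{\min}^{-1}))$, and $g$ (hence $g_{\min}$) is fixed.

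The main obstacle is the general case in which $g$ has zeros, because an additive $\epsilon'$-approximation can place small positive mass on tuples where $g$ vanishes, so ``bad'' assignments $\sigma$ (those with $g(\sigma(\bv_i))=0$ for some $i$) contribute spuriously to $Z(\Omega'')$. To handle this I would split $Z(\Omega'')=Z_{\text{good}}+Z_{\text{bad}}$ and treat the good part by the multiplicative bound above (using the minimum \emph{positive} value of $g$ in place of $g_{\min}$), while bounding $Z_{\text{bad}}$ by the number of bad assignments times $\epsilon'$ times a product of $g_{\max}+\epsilon'$ factors. Using the fact that either $Z(\Omega)=0$ or $Z(\Omega)\ge f_{\min,>0}^{|C|}$, where $f_{\min,>0}$ is the smallest positive value occurring in any function of $\cF\cup\{g\}$, one can pick $\epsilon'$ exponentially small in the instance size so that $Z_{\text{bad}}\le (\epsilon/4)Z(\Omega)$ whenever $Z(\Omega)>0$; this keeps $\log(\epsilon'^{-1})$ polynomial, so the gadget remains polynomial-size. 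The $Z(\Omega)=0$ case is accommodated by the standard AP-reduction convention: with $\epsilon'$ this small, $Z(\Omega'')$ is below any useful threshold, and thresholding the FPRAS output (or verifying that $Z(\Omega)=0$ is preserved by the construction using support-checking on the gadget) returns $0$ as required.
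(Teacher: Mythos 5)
The paper does not prove this lemma itself but cites \cite[Lemma~10.1]{bulatov_expressibility_2013}; your proposal reconstructs the proof from that reference in essentially the standard way (gadget substitution, choosing $\epsilon'$ small enough --- exponentially small in the instance size, so $\log\epsilon'^{-1}$ stays polynomial --- and handling zeros of $g$ via a good/bad decomposition and a lower bound $Z(\Omega)\ge f_{\min,>0}^{|C|}$ when $Z(\Omega)>0$). One small caution on your parenthetical alternative for the $Z(\Omega)=0$ case: ``support-checking on the gadget'' would amount to deciding satisfiability of the underlying relational CSP and need not be feasible, so thresholding the oracle's output against the gap you have established is the right mechanism; with that read, the argument is sound.
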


We will frequently use the following subset of CSPs called \emph{Holant problems}~\cite{Cai18}.
\begin{dfn}\label{def:Holant}
 A counting CSP in which each variable appears exactly twice is called a \emph{holant problem}.
 Thus, an instance of $\hol(\cF)$ is an instance $\Omega=(V,C)$ of $\NCSP(\cF)$ such that each variable in~$V$ appears exactly twice in constraints in~$C$.
 The objective of the problem $\hol(\cF)$ is to compute $Z(\Omega)$.
\end{dfn}
There is an equivalent view of holant problems that will also be useful.
Let $\EQ_3$ be the ternary equality function, that is, $\EQ_3(x,y,z)=1$ if and only if $x=y=z$ and $\EQ_3(x,y,z)=0$ otherwise.
It turns out (see~\cite[Proposition 1]{cai_Holant_2012} )
that the problem $\NCSP(\cF)$ is equivalent to $\hol(\cF,\EQ_3)$ 
in  following sense.
 For each instance $\Omega$ of $\NCSP(\cF)$ there is an instance 
$\Omega'$ of $\hol(\cF,\EQ_3)$ 
such that $Z(\Omega) = Z(\Omega')$. There is an easy polynomial-time algorithm
that turns $\Omega$ into~$\Omega'$ and vice-versa.

Holant problems are an object of study in their own right and exhibit some properties not found in counting CSPs.
For example, the holant framework allows reductions by \emph{holographic transformations}, which transform all the constraint functions but nevertheless keep the partition function invariant.
These transformations are easiest to define by exploiting the following bijection between functions in $\cB_k$ and vectors in $(\zQ_{\geq 0})^{2^k}$:
\[
 f\in\cB_k \quad\leftrightarrow\quad \mathbf{f}=(f(0\zd 0), f(0\zd 0,1) \zd f(1\zd 1)).
\]
Let $M$ be an invertible $2\times 2$ matrix with nonnegative  rational values. The holographic transformation of $f$ by $M$, denoted $M\circ f$, is the function corresponding to $(M\otimes\ldots\otimes M)\mathbf{f}$, where the tensor product contains $k$ copies of $M$.
For a set $\cF\sse\cB$, we define $M\circ\cF := \{M\circ f : f\in\cF\}$.
We can now state a corollary of Valiant's Holant Theorem \cite{valiant_holographic_2008,cai_Holant_2012}, adapted to our setting.
The result follows from \cite[Proposition~5]{cai_Holant_2012} because constant factors can be absorbed into AP-reductions.

\begin{thm}[Corollary of Valiant's Holant Theorem]\label{thm:valiant_Holant}
Let $\cF$ be a finite subset of $\cB$ and let
$M$ be a $2\times 2$ matrix such that $M M^{T} = c I$
where $I$ is the identity matrix and
$c$ is a rational number.
Then  
 $
  \hol(\cF) =_{AP} \hol(M \circ\cF)$.
\end{thm}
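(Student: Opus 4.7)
The plan is to prove the quantitative identity
\[
 Z(\Omega') \;=\; c^{|V|}\, Z(\Omega)
\]
for every holant instance $\Omega = (V, C)$ of $\hol(\cF)$, where $\Omega'$ is the instance of $\hol(M \circ \cF)$ on the same variables and scopes but with every constraint function $f_c$ replaced by $M \circ f_c$. Note that $c$ is automatically a positive rational: $c = M_{0,0}^2 + M_{0,1}^2 \geq 0$, and $c=0$ would force $M=0$, contradicting invertibility (without which $\hol(M\circ\cF)$ is not meaningfully defined). Since $c$ is a fixed constant and $|V|$ is bounded by the input size, the above identity immediately supplies AP-reductions in both directions: any approximation $\tilde Z$ to $Z(\Omega')$ produced by an FPRAS can be divided by $c^{|V|}$ to approximate $Z(\Omega)$ to within the same multiplicative factor, and symmetrically.

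The key computational input is that $M M^{T} = c I$ forces $M^{T} M = c I$ as well (since $M$ is invertible with $M^{-1} = c^{-1}M^T$), which is equivalent to
\[
 \delta_{x_1, x_2} \;=\; c^{-1} \sum_{y \in \{0,1\}} M_{y, x_1}\, M_{y, x_2} \qquad\text{for all } x_1, x_2 \in \{0,1\},
\]
where $\delta$ denotes the Kronecker delta. I would then view $\Omega$ in the usual tensor-network style by splitting every variable $v$ into two ``half-variables'' $v^{(1)}, v^{(2)}$, one for each of its two occurrences in constraints, and imposing an equality $\delta_{\tau(v^{(1)}),\, \tau(v^{(2)})}$ via an extended assignment $\tau$ on the set of half-variables. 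The partition function then reads
\[
 Z(\Omega) \;=\; \sum_{\tau} \prod_{v \in V} \delta_{\tau(v^{(1)}),\, \tau(v^{(2)})} \prod_{c \in C} f_c(\tau(\bv_c)),
\]
with $\bv_c$ now understood as the tuple of half-variables at constraint $c$. Substituting the identity for $\delta$ at each $v \in V$ introduces a global scalar $c^{-|V|}$, one auxiliary summation variable $y_v \in \{0,1\}$ per $v \in V$, and a factor $M_{y_v,\, \tau(v^{(1)})} M_{y_v,\, \tau(v^{(2)})}$.

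Interchanging the order of summation so that $\tau$ is summed out first with $y$ held fixed, the $M$-factors local to a constraint $c$ with argument variables $v_{c,1}, \ldots, v_{c,k}$ combine with $f_c$ to produce
\[
 \sum_{x_1, \ldots, x_k} \prod_{i=1}^k M_{y_{v_{c,i}},\, x_i}\, f_c(x_1, \ldots, x_k) \;=\; (M \circ f_c)(y_{v_{c,1}}, \ldots, y_{v_{c,k}}),
\]
which is precisely the definition of $M \circ f_c$ coming from the tensor product $(M \otimes \cdots \otimes M)\mathbf{f}_c$. The remaining outer sum over $y\colon V \to \{0,1\}$ of the product of transformed constraints is exactly $Z(\Omega')$, yielding the identity.

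The main place where care is required is index bookkeeping: one must ensure the slots of $M$ line up with the paper's convention for $M \circ f$, which is the reason for passing from $MM^T = cI$ to $M^T M = cI$ before inserting $\delta$. A secondary detail is handling a variable that appears twice inside the same constraint (a self-loop in the tensor-network picture), where the same insertion still works but relabels two arguments of a single tensor simultaneously. Beyond this, the argument is the standard holographic reduction of Valiant, and I would also note in passing that a valid statement of the theorem implicitly requires $M \circ \cF \subseteq \cB$ so that $\hol(M \circ \cF)$ is a legitimate problem in our rational-valued setting.
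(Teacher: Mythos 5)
The paper does not prove this theorem: it is quoted from \cite{valiant_holographic_2008,cai_Holant_2012}, so there is no in-paper argument to compare against. Your derivation is the standard holographic/tensor-network proof and is correct. The chain of steps is sound: $MM^T=cI$ plus invertibility yields $M^TM=cI$, which justifies inserting $\delta_{x_1,x_2}=c^{-1}\sum_{y}M_{y,x_1}M_{y,x_2}$ at each variable edge; because the instance is a holant instance (every variable appears exactly twice), the two resulting $M$-factors for each variable attach to exactly the two constraint occurrences (including the self-loop case, which you note correctly), and regrouping gives $(M\circ f_c)$ at each constraint and the exact identity $Z(\Omega')=c^{|V|}Z(\Omega)$. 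Since $c>0$ is a fixed rational and $|V|$ is polynomial in the input size, the scaling factor is computable in polynomial time and the identity yields AP-reductions in both directions, using that $M$ invertible makes $f\mapsto M\circ f$ injective so an instance of $\hol(M\circ\cF)$ determines the corresponding instance of $\hol(\cF)$. Your closing caveat — that $M\circ\cF\subseteq\cB$ is implicitly required for $\hol(M\circ\cF)$ to be a problem in the paper's nonnegative-rational setting — is apposite; in the paper's only application of the theorem (Lemma~\ref{lem:ncsp_Holant}), the transformed functions are scalar multiples of Fourier transforms of functions in $\cP$ and are therefore nonnegative.
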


A holant instance $(V,C)$ can be represented as a multigraph 
with vertex set $C$.
The edges of the multigraph correspond to the variables in~$V$.
Informally, there is a self-loop for each variable that appears twice in the same constraint.
If the two appearances of a variable $v$ are in different constraints, say $c$ and $c'$, then there is an edge of the multigraph from $c$ to $c'$ corresponding to $v$. 
This notion is formalised in the following observation and illustrated by example in Figure~\ref{fig:Holant_multigraph}.

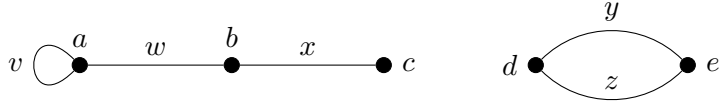
\begin{figure}
 \centering
  \begin{tikzpicture}[scale=2]
  \node[circle, inner sep=2pt, fill, draw, label={above:$a$}] (d0) at (0, 0) {};
  \node[circle, inner sep=2pt, fill, draw, label={above:$b$}] (d1) at (1, 0) {};
  \node[circle, inner sep=2pt, fill, draw, label={right:$c$}] (d2) at (2, 0) {};
  \node[circle, inner sep=2pt, fill, draw, label={left:$d$}] (d3) at (3, 0) {};
  \node[circle, inner sep=2pt, fill, draw, label={right:$e$}] (d4) at (4, 0) {};
  \draw (d0) to[out=135, in=-135, loop] node[left] {$v$} ();
  \draw (d0) to node[above] {$w$} (d1);
  \draw (d1) to node[above] {$x$} (d2);
  \draw (d3) to[out=45, in=135] node[above] {$y$} (d4);
  \draw (d3) to[out=-45, in=-135] node[above] {$z$} (d4);
  \end{tikzpicture}
 \caption{The multigraph representation of a holant instance $\Omega=(V,C)$ with variables $V=\{v,w,x,y,z\}$ and constraints $C =\{a,b,c,d,e\}$ with $a=((v,v,w),f)$, $b=((w,x),g)$, $c=((x),h)$, $d=((y,z),g)$, and $e=((y,z),k)$, constructed as in Observation~\ref{obs:Holant_graph}. Note that $V'=\{a,b,c,d,e\}$, $S=\{\{a,a\}\}$, $T=\{\{a,b\},\{b,c\}, \{d,e\},\{d,e\}\}$, and $E'=S\cup T$. The set of constraint functions used is $\cF=\{f,g,h,k\}$.}
 \label{fig:Holant_multigraph}
\end{figure}

\begin{obs}\label{obs:Holant_graph}
 Any holant instance $\Omega$ with variables $V$ and constraints $C$ corresponds to a multigraph $G'=(V',E')$ 
 where $V'=C$ and $E'$ is defined as follows.
  Let $S$ be the multiset whose elements are of the form $\{c,c\}$ 
 where  $c\in C$. The multiplicity of $\{c,c\}$ in $S$ is
equal to the number of variables that are repeated in the scope $\bv_c$.
 Let $T$ be the multiset whose elements are of the form $\{c,c'\}$ 
 where $c\in C$ and $c'\in C$ but $c\neq c'$.
 The multiplicity of $\{c,c'\}$ in $T$ is 
 equal to the number of variables that are in the scope of $\bv_c$ and the scope of $\bv_{c'}$.
 $E'$ is defined to be the multiset $S\cup T$. 
\end{obs}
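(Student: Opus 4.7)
The plan is to verify that the construction produces a well-defined multigraph in which each variable of the holant instance corresponds to exactly one edge of $G'$, with self-loops corresponding to variables whose two occurrences lie in a single constraint and ordinary edges corresponding to variables whose two occurrences lie in different constraints.

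First I would use the defining property of a holant instance (Definition~\ref{def:Holant}): every variable $v\in V$ appears in exactly two positions across the constraint scopes $\{\bv_c\}_{c\in C}$. For each such $v$, these two positions lie either (i) in the scope of a single constraint $c\in C$, or (ii) in the scopes of two distinct constraints $c,c'\in C$. Let $V_S$ and $V_T$ be the sets of variables falling under (i) and (ii) respectively; they form a partition of $V$.

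Next I would verify that the multiplicities in $S$ and $T$ match the enumerations coming from $V_S$ and $V_T$. By definition, the multiplicity of $\{c,c\}$ in $S$ is the number of variables repeated in $\bv_c$, and this is exactly $|\{v\in V_S : \text{both occurrences of $v$ lie in $\bv_c$}\}|$. Similarly, for $c\neq c'$, the multiplicity of $\{c,c'\}$ in $T$ is the number of variables appearing in both $\bv_c$ and $\bv_{c'}$; since each variable appears in total exactly twice, such a variable necessarily lies in $V_T$ and has its two occurrences distributed one each in $\bv_c$ and $\bv_{c'}$. Thus there is a bijection between $V$ and the multiset $E'=S\cup T$: elements of $V_S$ correspond to self-loops and elements of $V_T$ correspond to proper edges between the associated pair of constraint-vertices.

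The step requiring the most care is simply being precise about multiplicities, since both $S$ and $T$ are multisets and variables may be repeated multiple times within a single scope $\bv_c$ (contributing several self-loops at vertex $c$). I would conclude by noting that, by construction, the incidence relation is the expected one: each edge of $G'$ joins the (at most two) constraint-vertices whose scopes contain the corresponding variable, matching the informal description preceding the observation. Since no nontrivial property is being asserted beyond this correspondence, the verification is essentially a bookkeeping exercise, and the figure illustrates the construction on the given example.
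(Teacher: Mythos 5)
Your verification is correct and is exactly the bookkeeping argument one would expect for this observation; the paper itself provides no proof (it is stated as a self-evident observation, with the example in Figure~\ref{fig:Holant_multigraph} serving as illustration), so your write-up simply makes explicit the variable-to-edge bijection that the paper treats as immediate. One small wording nit: the phrase ``variables may be repeated multiple times within a single scope'' could be read as allowing a single variable to appear more than twice, which the holant condition rules out; what you presumably mean is that \emph{several distinct} variables may each have both of their occurrences inside $\bv_c$, yielding several self-loops at $c$.
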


For asymmetric constraints it is necessary to specify an enumeration of the edges incident on each vertex in order to be able to recover the original holant instance from the graph.

 Many counting problems defined on graphs have natural expressions in the holant framework
 which are easily understood using this perspective.
For example, the problem of counting the perfect matchings 
of a graph
corresponds to $\hol\left(\left\{\textsc{Exact-One}_k : k\in\zN_{>0} \right\}\right)$ with
\[
 \textsc{Exact-One}_k(\vc{x}{k}) = \begin{cases} 1 &\text{if } \sum_{i=1}^k x_i = 1 \\ 0 & \text{otherwise.} \end{cases}
\]
An edge is in the perfect matching if the corresponding variable is assigned $1$.
The constraint functions ensure that any assignment with non-zero weight includes exactly one edge incident on any vertex, and thus corresponds to a perfect matching of the graph.

The problem of counting the number of satisfying assignments of a Boolean formula in conjunctive normal form is denoted $\SAT$.
This problem, and any problem that $\SAT$ AP-reduces to, cannot have an FPRAS unless $\NP=\RP$ \cite{dyer_relative_2004}.

The problem of counting the number of independent sets in a bipartite graph is denoted $\BIS$.
While $\BIS\leq_{AP}\SAT$, no AP-reduction from $\SAT$ to $\BIS$ is known.
At the same time, $\BIS$ is not known to have an FPRAS \cite{dyer_relative_2004}.

\subsection{Existing results}\label{sec:existing}

There is a trichotomy for the complexity of counting CSPs where all constraints are Boolean relations.
This can straightforwardly be transformed into a result about pure pseudo-Boolean functions as multiplying a constraint function by a nonnegative rational constant does not affect the complexity of the counting problem with respect to AP-reductions.

\IMtwo is the set of relations that can be expressed as conjunctions of the unary relations $\{(0)\}$ and $\{(1)\}$ (which correspond to the functions $\delta_0$ and $\delta_1$) as well as the binary relation $\IMP=\{(0,0),(0,1),(1,1)\}$.

\begin{thm}[{\cite[Theorem~3]{Dyer10:approximation}}]\label{thm:complexity_Boolean}
 Let $\Gm$ be a set of Boolean relations.
 \begin{enumerate}
  \item If every relation in $\Gm$ is affine then, for any finite subset $S$ of $\Gm$, $\NCSP(S)$ is in \FP.
  \item Otherwise, if every relation in $\Gm$ is in \IMtwo, then 
  \begin{itemize}
  \item for any finite subset $S$ of $\Gm$, $\NCSP(S) \leq_{AP} \BIS$, and
  \item there is a finite subset $S$ of $\Gm$ such that $\BIS \leq_{AP} \NCSP(S)$.
  \end{itemize}
  \item Otherwise, there is a finite subset $S$ of $\Gm$ such that $\NCSP(S)=_{AP}\SAT$.
 \end{enumerate}
\end{thm}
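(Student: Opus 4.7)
The plan is to relate the complexity of $\NCSP(\Gamma)$ to the relational co-clone $\langle\Gamma\rangle_R$ generated by~$\Gamma$, and then to use Post's lattice to enumerate cases. The first step is to observe that the complexity of $\NCSP(\Gamma)$ under AP-reductions depends only on $\langle\Gamma\rangle_R$: whenever $R\in\langle\Gamma\rangle_R$, each constraint using~$R$ can be replaced by a gadget realising the pp-definition of $R$ over $\Gamma$, and this simulation induces an AP-reduction. Combining this with the existence of plain bases for Boolean co-clones~\cite{creignou_plain_basis} allows one to assume, without loss of generality, that $\Gamma$ is a specific finite set of concrete generators of the co-clone under consideration.

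The next step is to invoke Post's classification~\cite{Post41}: every relational co-clone on $\{0,1\}$ belongs to a known finite lattice. I would split into three cases according to where $\langle\Gamma\rangle_R$ sits in this lattice. In Case~1, $\langle\Gamma\rangle_R$ lies in the affine co-clone; then every relation in any finite $S\subseteq\Gamma$ is the solution set of a linear system over $\operatorname{GF}(2)$, so the partition function of an instance equals $2^{n-r}$, where $n$ is the number of variables and $r$ is the rank of the combined system, computable by Gaussian elimination. In Case~2, $\langle\Gamma\rangle_R\subseteq\langle\IMtwo\rangle_R$ but is not affine; here $\{\delta_0,\delta_1,\IMP\}$ forms a plain basis of $\langle\IMtwo\rangle_R$. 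An instance over these relations corresponds to counting down-sets in a digraph whose edges are given by $\IMP$-constraints (with pinnings fixing certain values), which AP-reduces to $\BIS$ via a standard bipartisation gadget; conversely, $\IMP$ with pinning suffices to encode an arbitrary bipartite independent-set counting problem, giving the matching lower bound.

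In Case~3, neither of the previous containments holds, and one must exhibit a finite $S\subseteq\Gamma$ with $\NCSP(S)=_{AP}\SAT$. Here Post's lattice guarantees that $\langle\Gamma\rangle_R$ contains a ``witness'' relation (such as a NAND-like ternary relation or a disjunctive binary relation) from which $\SAT$ can be encoded by a gadget construction; the reverse direction $\NCSP(S)\leq_{AP}\SAT$ is standard since $\SAT$ is \#P-complete under AP-reductions and all our weights are in $\{0,1\}$.

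The main obstacle is Case~3. Although Post's lattice is finite and the list of minimal co-clones lying above both the affine co-clone and $\langle\IMtwo\rangle_R$ is bounded, one must construct an explicit \#SAT-encoding gadget for each such minimal co-clone, and the gadgets have to preserve the partition function up to the uniform multiplicative factors that AP-reductions can absorb. The other cases reduce to well-known algorithmic primitives (Gaussian elimination in Case~1 and the known $\BIS$-equivalence for $\IMP$-formulae in Case~2), so the bulk of the technical work is the enumeration and gadget design in Case~3.
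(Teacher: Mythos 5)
This theorem is not proved in the present paper at all; it is quoted verbatim from Dyer, Goldberg and Jerrum~\cite{Dyer10:approximation} (their Theorem~3), so there is no internal proof to compare against. I will therefore assess your outline against the known argument of~\cite{Dyer10:approximation}.

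Your high-level decomposition (reduce to co-clones, split on whether $\ang{\Gm}_R$ is contained in the affine co-clone, then in $\ang{\IMtwo}_R$, then neither) does match the shape of the published proof, and your treatment of Case~1 (Gaussian elimination over $\operatorname{GF}(2)$, $Z = 2^{n-r}$ when consistent) and the down-set/\BIS\ correspondence in Case~2 are along the right lines. However, there is a serious gap in your very first step, and it is the step whose resolution is the main technical content of~\cite{Dyer10:approximation}. You assert that whenever $R\in\ang{\Gm}_R$, replacing an $R$-constraint by a gadget realising the pp-definition of $R$ over $\Gm$ ``induces an AP-reduction.'' For \emph{counting} CSPs this is false in general: the existential quantifiers in a pp-formula become summed-out variables in the gadget, and each satisfying tuple of $R$ may extend to a different number of satisfying assignments of the gadget, so the gadget computes a nonnegative-integer-valued \emph{function} whose support is $R$, not the indicator of $R$. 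The resulting instance's partition function is not equal to the original one, nor proportional to it in general, so AP-reducibility does not follow from pp-definability without further argument. This is precisely why the counting side of Boolean CSP dichotomies needs dedicated tools (pinning lemmas, carefully chosen gadgets whose extension multiplicities are uniform, or work inside a functional/pps-style clone rather than a relational co-clone). Your appeal to plain bases~\cite{creignou_plain_basis} is a sensible instinct, because a plain basis expresses every relation of the co-clone by conjunction and variable identification \emph{without} existentials, and such constructions do preserve counts exactly. But that gives a reduction from the plain basis to $\Gm$ only if $\Gm$ itself pp-defines the plain basis without existentials, which is not automatic; so it does not, as written, close the gap.

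The other place where the argument is thin is Case~3. You correctly flag it as the main obstacle, but ``Post's lattice guarantees a witness relation'' is not an argument: you still have to produce a uniform family of gadgets, starting from the assumption ``$\ang{\Gm}_R$ is neither affine nor $\subseteq \ang{\IMtwo}_R$,'' that realises (a weighted version of) $\mathrm{OR}_2$ or $\mathrm{NAND}_2$ with controlled multiplicities, and then show $\SAT\le_{AP}\NCSP(S)$. That construction, together with the pinning lemma mentioned above, \emph{is} the proof of~\cite{Dyer10:approximation}; it is not a routine consequence of the lattice classification. In short: the skeleton of your argument is correct and matches the literature, but the observation you present as a first step (pp-definability $\Rightarrow$ AP-reduction) is the theorem's real content, and as stated it does not hold.
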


We also use the following complexity classification from~\cite{bulatov_expressibility_2013}.
It is called a ``conservative'' classification because of the way that unary functions
(those in the set $S$ below) are considered, even though they may not
belong to $\cF$.
In essence, this theorem is Theorem 10.2 of~\cite{bulatov_expressibility_2013}.
However, we quote the version from \cite[Lemma 7]{chen_complexity_2015},
where the ranges of functions are restricted to rational numbers. 
 \begin{thm}[{\cite{bulatov_expressibility_2013}}]\label{thm:complexity_conservative}
 Suppose $\cF$ is a  finite subset of $\cB$.
 \begin{enumerate}
  \item If $\cF\sse\cN$ then, for any finite subset $S$ of $\cB_1$, 
  $\NCSP(\cF,S)$ is in \FP.   
  \item Otherwise,
   \begin{enumerate}
    \item There is a finite subset $S$ of $\cB_1$ such that $\BIS\leq_{AP}\NCSP(\cF,S)$.
    \item If $\cF\nsubseteq\LSM$ then there is a finite subset $S$ of $\cB_1$ such that $\SAT=_{AP}\NCSP(\cF,S)$.
   \end{enumerate}
 \end{enumerate}
\end{thm}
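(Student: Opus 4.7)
The theorem is a three-regime classification (\FP, \BIS-hard, $\SAT$-equivalent), and my plan mirrors its three parts. For the tractable case ($\cF \sse \cN$), I would directly exhibit a polynomial-time algorithm by exploiting the generating set of~$\cN$. For both hardness regimes, I would use functional clone theory together with pinning: reduce a non-$\cN$ (resp.\ non-$\LSM$) function in~$\cF$ to a binary witness and then invoke Theorem~\ref{thm:complexity_Boolean} at the level of its support. Throughout, conservativity (the freedom to adjoin any finite $S \sse \cB_1$) is what lets pinning, rescaling, and normalisation be treated as "free" operations.

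For the tractable case, since $\cN = \ang{\NEQ, \cB_1}$, Lemma~\ref{lem:closure} lets me expand every $f \in \cF$ into a pps-formula over $\NEQ$, $\EQ$, and unary functions. Applied to an instance $\Omega$ of $\NCSP(\cF, S)$, this produces an enlarged instance in which the only non-unary constraints are equalities or disequalities. These partition the variables into connected classes each admitting exactly two consistent $\{0,1\}$-assignments, so the partition function factorises as a product over classes of sums of at most two weighted terms, each a product of unary evaluations, which is polynomial-time computable.

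For the hardness direction, given a witness $f \in \cF \setminus \cN$, apply $\delta_0, \delta_1 \in \cB_1$ to pin all but two arguments of~$f$ in every possible way. A clone-closure argument (based on Lemma~\ref{lem:closure}) shows that at least one resulting 2-pinning $g$ must fail to lie in $\cN$; otherwise, summing over the pinned arguments while multiplying in the pinning unaries would reconstruct $f$ from functions in $\cN$, forcing $f \in \cN$. By rescaling $g$ with a suitable unary in $\cB_1$, its support $R_g$ becomes visible as a Boolean relation outside the relational counterpart of $\cN$ (the affine relations). Lemma~\ref{lem:clone_csp} then yields $\NCSP(R_g) \leq_{AP} \NCSP(\cF, S)$ for some finite $S \sse \cB_1$, so Theorem~\ref{thm:complexity_Boolean} gives $\BIS \leq_{AP} \NCSP(\cF, S)$, proving part~2(a). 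If moreover $f \notin \LSM$, the same pinning trick produces a binary non-lsm $g'$, whose support lies outside $\IMtwo$, and Theorem~\ref{thm:complexity_Boolean} then upgrades the conclusion to $\SAT =_{AP} \NCSP(\cF, S)$, yielding part~2(b).

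The main obstacle is the pinning lemma itself: the claim that violation of $\cN$- or $\LSM$-membership at arbitrary arity is always witnessed by some 2-pinning. Both $\cN$ and $\LSM$ are closed under the clone operations of Lemma~\ref{lem:closure}; the contrapositive therefore demands that if every 2-pinning of~$f$ lay inside the class, then reassembling $f$ from those pinnings via summations and products forces $f$ itself into the class, a contradiction. A secondary subtlety is that Theorem~\ref{thm:complexity_Boolean} speaks of Boolean relations while our witness $g$ is weighted, so one must verify that the AP-reduction from $\NCSP(R_g)$ back to $\NCSP(\cF, S)$ survives the rescaling step; this is precisely where having all of $\cB_1$ available (and not merely $\{\delta_0, \delta_1\}$) becomes indispensable, since weighted supports are converted to relational ones by dividing out unary factors drawn from $\cB_1$.
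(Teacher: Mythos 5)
Note first that the paper does not prove this theorem; it is quoted directly from \cite[Theorem 10.2]{bulatov_expressibility_2013} with the restriction to rational ranges taken from \cite[Lemma 7]{chen_complexity_2015}, so there is no in-paper argument to compare against and your sketch must be judged on its own. The sketch of Part~1 is reasonable, but both hardness parts have a fatal gap.

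In Parts 2(a) and 2(b) you reduce to a binary 2-pinning $g$ and then invoke the relational trichotomy (Theorem~\ref{thm:complexity_Boolean}) on the support $R_g$. This cannot work: a binary function outside $\cN$ (resp.\ outside $\LSM$) can have affine, even full, support. For example $g=\binar{1}{1}{1}{2}$ is not in $\cN$ (it is not log-modular and carries no $\NEQ$ factor) and is \BIS-hard under conservativity, yet $R_g=\{0,1\}^2$ is affine, so Theorem~\ref{thm:complexity_Boolean} applied to $R_g$ places it in \FP. Rescaling by unaries from $\cB_1$ cannot repair this: multiplying by a permissive unary preserves the support, and multiplying by a pinning zeroes an entire row or column, leaving another affine relation. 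The hardness of such a $g$ lives entirely in its weights, not its support, so the actual argument in \cite{bulatov_expressibility_2013} has to route the \BIS\ lower bound through weighted ferromagnetic two-spin hardness and the \SAT\ lower bound through weighted antiferromagnetic two-spin hardness, rather than through the relational trichotomy on $R_g$.

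Separately, your claimed pinning lemma --- that $f\notin\cN$ forces some 2-pinning of $f$ outside $\cN$ --- is not established by ``reassembling $f$ from its 2-pinnings via sums and products''. Summing over pinned coordinates produces a marginal of $f$, not $f$ itself, and distinct functions can share all 2-pinnings, so the contrapositive does not run. The analogous step for $\LSM$ does hold, but by a different argument (pin and contract an lsm-violating pair of inputs, as in the paper's Lemma~\ref{lem:get-nontrivial-binary}); the $\cN$ version needs its own proof, and even if it were repaired it would not rescue the overall plan for the reason above.
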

The statement of the theorem in~\cite{bulatov_expressibility_2013, chen_complexity_2015}
only guarantees an FPRAS when $\cF\sse\cN$
but this is because  \cite{bulatov_expressibility_2013}
was working over (approximable) real numbers.
The proof guarantees an exact algorithm in our setting where elements of the range of functions are rational.

We also require some results about the complexity of symmetric antiferromagnetic two-spin systems
(on simple graphs) with external fields.
A symmetric two-state spin system with parameters $(\beta,\gm,\ld)\in\zQ_{\geq 0}^2\times\zQ_{>0}$ corresponds to the problem $\NCSP(f,g)$, where $g$ is the unary function $g(x) = \ld^{1-x}$, and $f$ is the symmetric binary function
\[
 f(x,y) = \begin{pmatrix}\beta&1\\1&\gm\end{pmatrix}.
\]
The binary constraints are characterised by an undirected simple graph $G=(V,E)$ whose set of vertices $V$ is the set of variables of the instance.
Formally, all constraints must satisfy the following conditions:
\begin{itemize}
 \item For every vertex $v\in V$, there is exactly one constraint of the form $((v),g)$, and there are no other unary constraints.
 \item For every undirected edge $e=\{v,w\}\in E$, there is exactly one constraint of the form $((v,w),f)$ or $((w,v),f)$ and there are no other constraints with scope $(v,w)$ or $(w,v)$.
 \item For any pair $v',w'\in V$ such that $\{v',w'\}\notin E$, there are no constraints with scope $(v',w')$ or $(w',v')$.
\end{itemize}
The spin system is antiferromagnetic if $\beta\gm<1$.
By Observation~\ref{obs:binary_lsm}, this is exactly the same as saying that $f$ is not lsm.
The following results additionally assume $\beta\leq\gm$.

In the following, $\Delta$ denotes the maximum degree of the graph describing the binary constraints in the given instance of $\NCSP(f,g)$.
The expression ``$\Delta=\infty$'' indicates that the result applies to graphs of unbounded degree.

\begin{thm}[{\cite[Theorem 1.2]{Li11:correlation}}]\label{thm:uniqueness_FPTAS}
 For any finite $\Delta\geq 3$ or $\Delta=\infty$, there exists an FPTAS for the partition function of the [symmetric] two-state antiferromagnetic spin system on graphs of maximum degree at most $\Delta$ if for all $d\leq\Delta$ the system parameters
 $(\beta,\gamma,\lambda)$
  lie in the interior of the uniqueness region of the infinite $d$-regular tree.
\end{thm}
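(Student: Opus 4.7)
The statement is quoted from \cite{Li11:correlation}, so the plan is to reconstruct the correlation-decay FPTAS strategy that underlies it. The general scheme is: reduce the partition function to computing single-vertex marginals via self-reducibility, express each marginal on a tree by a local recursion, and then show that the uniqueness condition implies this recursion is a contraction in a suitable metric, so truncation at logarithmic depth produces an $\epsilon$-approximation in polynomial time.

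More concretely, first I would reduce approximating $Z(\Omega)$ to approximately computing the marginal probability $p_v = \Pr[\sigma(v)=1]$ at a vertex $v$ in the Gibbs distribution. Using the standard telescoping identity $Z = \prod_{i} Z_i/Z_{i-1}$ where successive partition functions pin one more variable, approximating each of the $|V|$ ratios to within a factor $1 \pm \epsilon/(2|V|)$ gives an $e^{\pm\epsilon}$-approximation to $Z$, and each ratio is a marginal of a residual antiferromagnetic two-spin system with pinnings (which can be encoded as extreme external fields and still preserves the uniqueness hypothesis, since pinning corresponds to boundary conditions on a tree).

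Next I would apply Weitz's self-avoiding-walk tree construction: for any graph $G$ of maximum degree $\Delta$ and vertex $v$, there is a tree $T_{\mathrm{SAW}}(G,v)$ of maximum degree $\Delta$, with appropriate boundary pinnings at the leaves, such that $p_v^{G} = p_{\mathrm{root}}^{T_{\mathrm{SAW}}}$. On a tree the marginal is exactly computable by the ratio recursion
\[
 R_v \;=\; \lambda \prod_{u \text{ child of } v} \frac{\beta R_u + 1}{R_u + \gamma},
\]
with $R_u = \infty$ or $0$ at pinned leaves. The FPTAS then consists of truncating $T_{\mathrm{SAW}}$ at depth $L$, choosing arbitrary default values for the truncated subtrees, and evaluating the recursion bottom-up. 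This runs in time $\Delta^{O(L)}$.

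The main obstacle, and the real content of the theorem, is Step~4: proving that under the uniqueness condition the error introduced at depth $L$ decays as $\rho^L$ for some $\rho<1$ (strong spatial mixing). Because the raw recursion need not be contracting in $R$ or $\log R$, one constructs a potential function $\Phi$ and passes to the coordinate $y_v = \Phi(R_v)$, then bounds the derivative of the composed map by the tree-recursion's eigenvalue at its $d$-regular fixed point. The interior of the uniqueness region on the $d$-regular tree is exactly where this eigenvalue is $<1$; taking a compact sub-region of the interior yields a uniform contraction $\rho<1$ over all $d \leq \Delta$, and picking $L = \Theta(\log(|V|/\epsilon))$ then gives the claimed FPTAS. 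For $\Delta=\infty$ the same argument applies with uniqueness required at every finite $d$, using that only vertices of degree at most $|V|$ ever appear. Constructing the potential $\Phi$ that achieves uniform contraction throughout the parameter region of interest is the delicate step, and the only place that really uses the uniqueness hypothesis.
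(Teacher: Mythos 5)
The paper does not prove this statement: it is imported verbatim as Theorem~1.2 of Li, Lu, and Yin~\cite{Li11:correlation}, so there is no in-paper proof to compare against. Your sketch is a reasonable high-level reconstruction of the correlation-decay argument in that reference --- self-reducibility to marginals, Weitz's self-avoiding-walk tree, the ratio recursion, and the passage to a potential-function coordinate in which the recursion is a uniform contraction precisely because the parameters sit in the interior of the uniqueness region for all $d \le \Delta$. The one place where you are slightly loose is the claim that pinnings introduced during self-reducibility ``still preserve the uniqueness hypothesis''; the actual argument in~\cite{Li11:correlation} is that strong spatial mixing (which the contraction establishes) is inherited under arbitrary boundary pinnings, not that the pinned instance itself satisfies a uniqueness condition. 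Also, the delicate construction of the potential $\Phi$ and the verification that its induced derivative bound matches the $d$-regular fixed-point eigenvalue uniformly across $d \le \Delta$ is the bulk of the work in the source, and your sketch flags this correctly but does not carry it out --- which is appropriate, since the present paper treats the whole theorem as a black box.
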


See also related work of Sinclair, Srivastava and Thurley \cite{sinclair_approximation_2014}.

\begin{thm}[{\cite{sly_computational_2012,galanis_inapproximability_2016}, as stated in \cite[Theorem 1.3]{Li11:correlation}}]\label{thm:non-uniqueness_hard}
 For any finite $\Delta\geq 3$ or $\Delta=\infty$, unless $\NP=\RP$, there does not exist an FPRAS for the partition function of the [symmetric] two-state antiferromagnetic spin system on graphs of maximum degree at most $\Delta$ if for some $d\leq\Delta$ the system parameters
 $(\beta,\gamma,\lambda)$
  lie in the interior of the non-uniqueness region of the infinite $d$-regular tree.
\end{thm}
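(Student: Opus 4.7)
Since this theorem is quoted verbatim from \cite{sly_computational_2012,galanis_inapproximability_2016}, my plan is not to produce a fully independent proof but to sketch the strategy employed in those papers, namely a reduction from a hard combinatorial problem on random bipartite $d$-regular graphs, driven by a ``phase coexistence'' analysis on the infinite $d$-regular tree.

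The starting point is the tree recursion for the marginal probability of the root in a finite depth-$t$ tree: one defines the tree map $F_d(x) = \lambda ((\beta x + 1)/(x+\gamma))^d$ (in the antiferromagnetic regime, considered as a map on ratios of partition functions). Non-uniqueness on the infinite $d$-regular tree corresponds to $F_d \circ F_d$ having two stable fixed points, say $x_+$ and $x_-$. The first step of the plan is to translate these two fixed points into two candidate ``phases'' on a random bipartite $d$-regular graph $G=(U\cup V, E)$: the $+$-phase, where vertices in $U$ tend to have marginal close to $x_+/(1+x_+)$ and vertices in $V$ close to $x_-/(1+x_-)$, and the $-$-phase, obtained by swapping the roles of $U$ and $V$.

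Next I would carry out a second-moment computation for the partition function restricted to each phase. For each pair $(\alpha,\alpha')\in[0,1]^2$ one decomposes $Z(G)$ according to the fractions of vertices in $U$ and $V$ that receive spin $1$, then estimates each term using an entropy--energy calculation on a random bipartite $d$-regular graph (via configuration model with small-subgraph conditioning, as in \cite{sly_computational_2012}). The non-uniqueness hypothesis implies that the exponential rate is maximised exactly at the two symmetric points corresponding to $(x_+,x_-)$ and $(x_-,x_+)$, and that these two maxima contribute comparable amounts $(1\pm o(1))\,Z_\star e^{\Theta(n)}$ to $Z(G)$ while all other configurations are exponentially smaller.

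The key reduction step is then to design ``phase-locking'' gadgets: a finite gadget that, when attached to a vertex of $G$, multiplies the contribution of the $+$-phase by a factor strictly larger than its effect on the $-$-phase (and vice versa). Using such gadgets one reduces an NP-hard problem --- typically \textsc{Max-Cut} on $d$-regular bipartite graphs (or equivalently, distinguishing whether a random bipartite $d$-regular graph has a near-perfectly-balanced bisection) --- to approximating $Z(\Omega)$ within a subexponential factor. A careful choice of the gadget's bias, combined with the exponential gap coming from phase coexistence, shows that an $e^\varepsilon$-approximation to $Z$ would decide the NP-hard instance with high probability, yielding $\mathsf{NP}=\mathsf{RP}$.

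The main obstacle is the second-moment calculation: one must verify, across the entire non-uniqueness region of parameters $(\beta,\gamma,\lambda)$, that the bipartite phases $(x_+,x_-)$ and $(x_-,x_+)$ are indeed the unique global maximisers of the first-moment exponent, and that the corresponding Hessians render the second moment only a constant factor larger than the first moment squared (so Paley--Zygmund yields concentration). This is the technically delicate ``tree-to-graph'' step, and the core contribution of \cite{sly_computational_2012,galanis_inapproximability_2016}; the final gadget reduction is comparatively routine once concentration is in hand.
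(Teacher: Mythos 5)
The paper does not prove Theorem~\ref{thm:non-uniqueness_hard}; it is imported as a black box from \cite{Li11:correlation}, who in turn attribute it to \cite{sly_computational_2012,galanis_inapproximability_2016}. So there is no in-paper proof to compare against, and it would not be expected or appropriate for the present paper to re-derive this result.

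As a sketch of the argument in the source papers, your outline captures the main structure (tree recursion and its two stable fixed points of $F_d\circ F_d$, first- and second-moment computations on random bipartite $d$-regular graphs with small-subgraph conditioning, and a gadget reduction from an NP-hard optimisation problem). One concrete error, though: you describe the hard problem as ``\textsc{Max-Cut} on $d$-regular \emph{bipartite} graphs'' --- but \textsc{Max-Cut} restricted to bipartite graphs is trivially solvable in polynomial time (the given bipartition is a maximum cut), so that cannot be the source of hardness. In the Sly--Sun / Galanis--\v{S}tefankovi\v{c}--Vigoda construction the random bipartite $d$-regular graph is the \emph{gadget}, whose phase coexistence is exploited; the NP-hard instance is \textsc{Max-Cut} (or a related cut/partition problem) on an arbitrary bounded-degree graph, with a copy of the bipartite gadget substituted at each vertex and ``phase-locking'' connections encoding the input edges. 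You have swapped the role of the gadget and the hard instance. A second, more minor imprecision: the dichotomy over the whole non-uniqueness region was not settled by \cite{sly_computational_2012} alone --- that paper handles a somewhat restricted parameter range, and \cite{galanis_inapproximability_2016} is needed precisely to extend the second-moment analysis and gadget construction across the full non-uniqueness region, which is why both citations appear.
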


These two theorems classify most symmetric antiferromagnetic two-spin system 
with external fields -- the only case that is still open is that of system parameters on the boundary between the uniqueness and non-uniqueness regions.

We will only need a subset of the properties proved in \cite[Lemma 3.1]{Li11:correlation} and therefore state only these.
In the following, ``up-to-$\Delta$ unique'' means that the system parameters lie in the interior of the uniqueness region of the infinite $d$-regular tree for every $d\leq\Delta$, and ``universally unique'' means that the system parameters lie in the interior of the uniqueness region of the infinite $d$-regular tree for every $d$.

\begin{lem}[{\cite[Lemma 3.1]{Li11:correlation}}]\label{lem:uniqueness_conditions}
 Let $(\beta,\gm,\ld)$ be antiferromagnetic.
 \begin{description}
  \item [(2)] If $\gm\leq 1$, then uniqueness does not hold on the infinite $d$-regular tree for all sufficiently large $d$.
  \item [(5)] If $\beta=0$, for any $\Delta$, there exists a critical threshold $\ld_c=\ld_c(\gm,\Delta) = \min_{1<d<\Delta} \frac{\gm^{d+1} d^d}{(d-1)^{d+1}}$ such that $(\beta,\gm,\ld)$ is up-to-$\Delta$ unique if and only if $\ld\in(0,\ld_c)$.
  \item [(8)] If $\beta>0$ and $\gm>1$, there exists an absolute positive constant $\ld_c=\ld_c(\beta,\gm)$ such that $(\beta,\gm,\ld)$ is universally unique if and only if $\ld\in(0,\ld_c)$.
 \end{description}
\end{lem}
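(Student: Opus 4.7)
The plan is to reduce all three parts to analysis of the standard tree recursion for two-state spin systems. Conditioning a depth-bounded subtree on the state at its root produces an iteration on the ratio $x$ of spin-$1$ marginal to spin-$0$ marginal; for a root with $d$ children the iteration is
$$F_d(x) \;=\; \ld\left(\frac{\beta x+1}{x+\gm}\right)^d.$$
Antiferromagnetism $\beta\gm<1$ makes $F_d$ strictly decreasing, so it has a unique fixed point $x_d^\star$, and the interior of the uniqueness region on the infinite tree with branching $d$ is exactly $\{(\beta,\gm,\ld):|F_d'(x_d^\star)|<1\}$. Differentiating and substituting $F_d(x_d^\star)=x_d^\star$ yields
$$|F_d'(x_d^\star)| \;=\; \frac{d(1-\beta\gm)\,x_d^\star}{(\beta x_d^\star+1)(x_d^\star+\gm)},$$
so each part of the lemma becomes a statement about how $x_d^\star$ depends on the parameters.

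For Part (5), $\beta=0$ simplifies both the recursion and the derivative: $F_d(x)=\ld/(x+\gm)^d$ and $|F_d'(x_d^\star)|=d x_d^\star/(x_d^\star+\gm)$. Setting $|F_d'(x_d^\star)|=1$ solves to $x_d^\star=\gm/(d-1)$; substituting back into $x_d^\star(x_d^\star+\gm)^d=\ld$ gives the closed form $\ld_c(\gm,d)=\gm^{d+1}d^d/(d-1)^{d+1}$. Up-to-$\Delta$ uniqueness is the conjunction of $|F_d'(x_d^\star)|<1$ over $1<d<\Delta$, and monotonicity of $x_d^\star$ in $\ld$ makes this equivalent to $\ld<\min_{1<d<\Delta}\ld_c(\gm,d)$.

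For Part (2), fix $\beta,\gm$ with $\gm\le 1$ and $\beta\gm<1$, and any $\ld>0$; the plan is to show $|F_d'(x_d^\star)|\to\infty$ as $d\to\infty$. Taking logarithms of $F_d(x_d^\star)=x_d^\star$ gives $\log x_d^\star=\log\ld+d\log\bigl((\beta x_d^\star+1)/(x_d^\star+\gm)\bigr)$; splitting into the three cases $x_d^\star\to 0$, $x_d^\star\to L\in(0,\infty)$, $x_d^\star\to\infty$, and using $\gm\le 1$ to rule out the bounded-derivative regimes, forces the factor $d\,x_d^\star/(x_d^\star+\gm)$ in the derivative expression to diverge. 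This case analysis, and especially the slow-decay subregime where $x_d^\star\to 0$ but $d\,x_d^\star\to\infty$, is the main technical obstacle.

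For Part (8), $\beta>0$, $\gm>1$, and $\beta\gm<1$ together force $\beta<1$. Since $F_d$ is decreasing, its fixed point satisfies $x_d^\star<F_d(0)=\ld/\gm^d$, which is exponentially small in $d$. Hence $\sup_d d\,x_d^\star<\infty$ and in fact $\sup_d d\,x_d^\star\to 0$ as $\ld\to 0$, so $|F_d'(x_d^\star)|<1$ holds uniformly in $d$ for all sufficiently small $\ld$, yielding universal uniqueness. The threshold $\ld_c(\beta,\gm)$ is the supremum of such $\ld$; the converse follows from continuity of $\ld\mapsto x_d^\star$ together with the fact that $|F_d'(x_d^\star)|$ crosses $1$ at some finite $d$ as $\ld$ increases past $\ld_c$.
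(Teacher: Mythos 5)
This lemma is quoted verbatim from \cite[Lemma~3.1]{Li11:correlation}; the paper does not prove it, it cites it. There is therefore no paper proof to compare your attempt against, and in the context of the paper the correct move is simply to cite Li, Lu and Yin rather than to re-derive the uniqueness thresholds from scratch.

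That said, your sketch does follow the standard route (tree recursion $F_d$, local stability $|F_d'(x_d^\star)|<1$ at the unique fixed point of the decreasing map, and the closed-form derivative $|F_d'(x_d^\star)| = \tfrac{d(1-\beta\gm)x_d^\star}{(\beta x_d^\star+1)(x_d^\star+\gm)}$), and Parts (5) and (8) are essentially complete modulo routine monotonicity-in-$\ld$ checks. But Part (2) has a genuine gap, which you yourself flag: you assert that one can ``rule out the bounded-derivative regimes'' using $\gm\le 1$ but do not carry out the case analysis. The missing work is nontrivial: one must pass to a subsequence on which the ratio $r_d=\tfrac{\beta x_d^\star+1}{x_d^\star+\gm}$ converges, and the case $r_d\to 1$ splits further into $x_d^\star\to\tfrac{1-\gm}{1-\beta}>0$ (when $\beta,\gm<1$), $x_d^\star\to 0$ with $d\,x_d^\star\to\infty$ via a Lambert-$W$-type balance (when $\gm=1$, $\beta<1$), and $x_d^\star\to\infty$ (when $\beta=1$, $\gm<1$), each requiring its own estimate to conclude $|F_d'(x_d^\star)|\to\infty$. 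Your proposal names this as ``the main technical obstacle'' without resolving it, so as written it is not a complete proof of Part (2). Also note that you aim to show $|F_d'(x_d^\star)|\to\infty$, which is more than the statement requires (merely $>1$ for all large $d$), and that the degree convention in the stated $\ld_c$ formula ($d$ as branching factor versus $d$-regular degree) needs to be reconciled carefully with your recursion $F_d$, which uses $d$ children.
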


In fact, the proof of (2) in the full version of \cite{Li11:correlation} (where this result is part of Lemma~21) shows that, for sufficiently large $d$, the system parameters lie in the interior of the non-uniqueness region.
The proof of (5) and (8) similarly shows that the system parameters lie in the interior of the non-uniqueness region when $\ld > \ld_c$.

\section{Counting CSPs with strictly increasing and strictly decreasing permissive unary functions}
\label{s:ups_and_downs}

We first consider the complexity of counting CSPs where two permissive unary functions are available: one which is strictly increasing and one which is strictly decreasing, i.e.\ problems of the form $\NCSP(\cF,\upf,\downf)$.
The goal is to prove the complexity classification in Theorem~\ref{thm:up-down}.

The proof splits into several cases, given here as individual lemmas, depending on which binary functions are contained in $\Gud$.
The function $\EQ(x,y)$ is contained in any $(\om,p)$-clone, so there always exists some binary function in $\Gud$.

In some lemmas, we will also assume that the permissive unary functions $\upf$ and $\downf$ are normalised.
This can be achieved using AP-reductions, as shown in Lemma~\ref{lem:normalising}.

\subsection{Non-lsm functions with nontrivial binaries}

First, we consider the case where $\cF\nsubseteq\LSM$ and $\Gud$ contains a nontrivial binary function.

\begin{lem}\label{lem:get-nontrivial-binary}
 Let $\cF\sse\cB$, $\upf\in\cB^{<,\mathrm{n}}_1$, and $\downf\in\cB^{>,\mathrm{n}}_1$.
 Suppose $f,g\in\Gud$, where $f$ is non-lsm and $g$ is binary and nontrivial (but may be lsm). 
 Then $\Gud$ contains a binary nontrivial non-lsm function.
\end{lem}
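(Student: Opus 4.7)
My plan is to first extract a binary non-lsm function from $f$ by pinning, and then, if necessary, to combine it with the given binary nontrivial~$g$ by a summation construction, producing a binary function that is both nontrivial and non-lsm.

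Since $\upf$ and $\downf$ both lie in~$\Gud$, Observation~\ref{obs:pinning} gives $\delta_0,\delta_1\in\Gud$, so pinning is available inside the clone. From the non-lsm-ness of~$f$ I get $\ba,\bb$ with $f(\ba\join\bb)f(\ba\meet\bb) < f(\ba)f(\bb)$. To extract a non-lsm binary function from this witness I would appeal to the standard local characterization of log-supermodularity on the Boolean lattice: a function is lsm iff every restriction to two coordinates (with the other variables pinned) is lsm. Its contrapositive yields a choice of two coordinates and a pinning of the rest producing a binary function $h \in \Gud$ with $h(0,0)h(1,1) < h(0,1)h(1,0)$. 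Writing $h = \binar{a}{b}{c}{d}$, the strict inequality $ad<bc$ forces $b,c>0$, so $h$ is not log-modular and is not of $\EQ$-type. Hence the only way $h$ can be trivial is $h(x,y)=g_0(x)\NEQ(x,y)$, i.e.\ $a=d=0$.

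If $h$ is nontrivial, the proof is finished. Otherwise $a=d=0$, and I would next examine~$g$: if $g$ is non-lsm, then $g$ itself is already a binary nontrivial non-lsm function. The remaining case is $a=d=0$ together with $g$ lsm and nontrivial; here $g(0,0)g(1,1) > g(0,1)g(1,0) \geq 0$, and a short inspection of lsm binary functions shows $g(0,0),g(1,1)>0$ with at most one of $g(0,1),g(1,0)$ zero. In this case I would introduce
\[
 h'(x,y) \;=\; \sum_{z\in\{0,1\}} h(x,z)\,g(z,y),
\]
which is pps-definable from $h$ and $g$ (by a product of two functions with a fictitious variable each, followed by summation) and therefore lies in $\Gud$. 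A direct calculation using $a=d=0$ gives $h'(0,0)h'(1,1) - h'(0,1)h'(1,0) = bc\bigl(g(0,1)g(1,0) - g(0,0)g(1,1)\bigr) < 0$, so $h'$ is non-lsm. For nontriviality I would verify three things: the strict inequality rules out log-modularity of~$h'$; the entries $h'(0,1)=bg(1,1)>0$ and $h'(1,0)=cg(0,0)>0$ rule out the $\EQ$-form; and an $\NEQ$-form would require $g(0,1)=g(1,0)=0$, which would force $g$ itself to be of $\EQ$-form and thus trivial, contradicting the hypothesis on~$g$.

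The main obstacle I expect is precisely this trivial-$h$ case: pinning alone can yield a non-lsm binary supported only on $\{(0,1),(1,0)\}$, which is automatically of $\NEQ$-shape, so pinning~$f$ by itself is not enough. The summation construction above is tailored to ``spread'' the support of $h$ through~$g$ while preserving strict non-lsm-ness, and this is why the hypothesis of the lemma also supplies the auxiliary nontrivial binary~$g$.
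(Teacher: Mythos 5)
Your reduction to a binary function has a genuine gap. You appeal to the ``standard local characterization of log-supermodularity'': that if $f$ is non-lsm, some binary function obtained by pinning all but two coordinates must be non-lsm. That is true when $f$ is strictly positive (then $\log f$ is defined, supermodularity is characterized by pairs at Hamming distance~$2$, and such pairs are detected by $2$-pinnings), but it fails for functions that can vanish --- which is exactly the setting here. Concretely, take $f\in\cB_3$ with $f(0,0,1)=f(1,1,0)=1$ and $f\equiv 0$ otherwise. Then $f$ is non-lsm, since $f(1,1,1)\,f(0,0,0)=0<1=f(0,0,1)\,f(1,1,0)$; yet every $2$-pinning of $f$ has support of size at most one and is therefore vacuously lsm (every lsm inequality for such a binary function has a zero on the right-hand side). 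So all $2$-pinnings can be lsm even though $f$ is not, and the extraction step you describe fails on this $f$.

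The fix, which is what the paper does, is to pin only the coordinates on which the non-lsm witness pair $(\ba,\bb)$ agrees, and then \emph{identify} (rather than pin) the remaining coordinates into two blocks: one where $\ba$ is $0$ and $\bb$ is $1$, the other where $\ba$ is $1$ and $\bb$ is $0$. Identification is available in the clone because pps-formulas may repeat variables in a scope, and it carries $(\ba,\bb)$ to the binary witnesses $(0,1)$ and $(1,0)$ while preserving the strict non-lsm inequality. In the counterexample above this yields $f'(x,y)=f(x,x,y)$, a non-lsm binary function of $\NEQ$-shape --- precisely the ``degenerate'' case your second stage is designed to handle. The rest of your proposal is sound and essentially matches the paper: the summation construction $h'(x,y)=\sum_{z}h(x,z)g(z,y)$ is the paper's $g'$ up to transposing the two factors, and your checks of non-lsm-ness and nontriviality of $h'$ are correct (in particular, $\al,\dl>0$ is forced by $\al\dl>\beta\gm\geq 0$). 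So the argument is salvageable by replacing pinning-only with pin-then-identify.
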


\begin{proof}
By Observation~\ref{obs:pinning}, $\delta_0,\delta_1\in\Gud$.

Since $f$ is non-lsm, there are $\ba,\bb\in\{0,1\}^r$, where $r$ is the arity of 
$f$, such that
\[
  f(\ba)f(\bb)>f(\ba\meet\bb)f(\ba\join\bb).
\]
Suppose $\ba[i]=\bb[i]$ for some $i\in[r]$.
Then let 
\[
  h(x_1\zd x_{i-1}, x_{i+1}\zd x_r) = \sum_{x_i\in\{0,1\}} \delta_{\ba[i]}(x_i) f(x_1\zd x_r),
\]
this function is contained in $\Gud$.
We say $h$ is $f$ with the $i$-th input pinned to $\ba[i]$.
Define $\ba'=(\ba[1]\zd\ba[i-1],\ba[i+1]\zd\ba[r])$, and similarly $\bb'$.
Now,
\[
  h(\ba')h(\bb') = f(\ba)f(\bb)>f(\ba\meet\bb)f(\ba\join\bb) = h(\ba'\meet\bb')h(\ba'\join\bb'),
\]
so $h$ is non-lsm.
Thus we may continue the proof with $h$ in place of $f$.
This process can be repeated until the bit-strings $\ba,\bb$, which witness that $f$ is non-lsm, satisfy $\ba[i]\ne\bb[i]$ for all $i\in[r]$.

Without loss of generality, we may furthermore assume that $\ba=(0^s,1^t)$ and $\bb=(1^s,0^t)$: otherwise permute the arguments, which does not affect the non-lsm property.
Now, identify the first $s$ and the last $t$ variables of $f$ to obtain a binary function $f'$ satisfying
\[
  f'(\ba'')f'(\bb'') > f'(\ba''\meet\bb'')f'(\ba''\join\bb''),
\]
where $\ba''=(0,1), \bb''=(1,0), \ba''\meet\bb''=(0,0)$ and $\ba''\join\bb''=(1,1)$.

If $f'(0,0)\ne0$ or $f'(1,1)\ne0$, then $f'$ is a binary nontrivial non-lsm function and we are done.
Otherwise $f'$ has the form $f''(x)\NEQ(x,y)$, say, $f''(0)=c,f''(1)=d$ with $c,d>0$. 

By the assumptions of the lemma, there is a binary nontrivial function $g\in\Gud$. 
Suppose
\[
  g(x,y)=\binar\al\beta\gm\dl.
\]
If $g$ is non-lsm, we are done.
Otherwise, we have $\al\dl>\beta\gm$.
This inequality is strict because $g$ is not log-modular.
Then let $g'(x,y)$ be given by
\[
  g'(x,y) = \sum_{z\in\{0,1\}} g(x,z)f'(z,y) = \binar{\beta d}{\al c}{\dl d}{\gm c}.
\]
Thus, 
\[
  g'(0,0) g'(1,1) = \beta \gm cd < \al \dl cd = g'(0,1) g'(1,0),
\]
that is, $g'$ is non-lsm.
It is also nontrivial because $g$ being nontrivial implies that at least one of $\beta,\gm$ is strictly positive.
This completes the proof.
\end{proof}

\subsection{Log-supermodular functions with nontrivial binaries}

Suppose now that $\cF\sse\LSM$ and $\Gud$ contains a nontrivial binary function.
The first property is equivalent to $\Gud\sse\LSM$ by Lemma~\ref{lem:LSM_closed_omega} and the  fact that $\cB_1\sse\LSM$.

\begin{lem}\label{lem:lsm-nontrivial}
 Let $\cF$ be a finite subset of $\cB$ and let $\upf\in\cB^{<}_1$, $\downf\in\cB^{>}_1$.
 If $\Gud\sse\LSM$ and there is a nontrivial binary function $g\in\Gud$, then  $\NCSP(\cF,\upf,\downf)$ is \BIS-hard. 
\end{lem}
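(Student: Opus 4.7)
The plan is to reduce $\BIS$ to $\NCSP(\cF,\upf,\downf)$ by producing, inside the $(\om,p)$-clone $\Gud$, a ferromagnetic two-spin interaction and then invoking known $\BIS$-hardness results for ferromagnetic two-spin systems with non-uniform external fields. Since $\Gud\subseteq\LSM$, the nontrivial binary $g$ is itself log-supermodular (so in particular of the ferromagnetic form), and this is the only feature of $\cF$ we will need.

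First I would symmetrise. Apply Lemma~\ref{lem:symmetric}\eqref{p:symmetric_lsm} to $g$ to obtain a nontrivial, symmetric, lsm binary function $g''\in\Gud$. If $g''$ happens to be Ising, apply Lemma~\ref{lem:symmetric}\eqref{p:Ising} (using the available $\upf\in\cB^{<}_1$) to obtain $g'''\in\Gud$ that is nontrivial, symmetric, lsm, and \emph{not} Ising. In either case we obtain a binary function $\hat g\in\Gud$ which, after normalising, has the form $\binar{\beta}{1}{1}{\gm}$ with $\beta\gm>1$ (the ferromagnetic two-spin regime).

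Second I would argue $\BIS$-hardness at the level of $\hat g$ with $\upf,\downf$. In $\NCSP(\{\hat g\},\upf,\downf)$, the edge interaction is $\hat g$ and we can independently attach $\upf$, $\downf$, arbitrary products of them, or nothing to any variable. Because $\upf$ is strictly increasing and $\downf$ is strictly decreasing (and both are permissive), the effective unary weights realised on individual vertices range over local fields biased both toward~$0$ and toward~$1$. This is exactly the \emph{non-uniform / mixed local field} regime for ferromagnetic two-spin systems, for which $\BIS$-hardness is established: \cite{goldberg_complexity_2007} covers the Ising case and \cite{liu_complexity_2014,goldberg_approximating_2012} the general ferromagnetic two-spin case. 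Invoking the appropriate cited theorem yields $\BIS \le_{AP} \NCSP(\{\hat g\},\upf,\downf)$.

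Finally I would transfer the hardness back to $\cF$. Since $\hat g\in\Gud$, there is a finite subset $\cF'\sse\cF$ with $\hat g\in\ang{\cF',\upf,\downf}_{\om,p}$, so by Lemma~\ref{lem:clone_csp} we have $\NCSP(\cF',\hat g,\upf,\downf)\le_{AP}\NCSP(\cF',\upf,\downf)$, and composing the two reductions gives $\BIS\le_{AP}\NCSP(\cF',\upf,\downf)$, which is exactly the claimed $\BIS$-hardness. The main obstacle is the bookkeeping in the middle step: one must check that the two-spin parameters $(\beta,\gm)$ extracted from $\hat g$ together with the fields realised by $\upf,\downf$ genuinely fall in the $\BIS$-hard regime of the cited classification, rather than in a boundary or trivial sub-case --- the non-triviality of $\hat g$ provides strict ferromagneticity ($\beta\gm>1$) and the strict oppositely-directed monotonicity of $\upf,\downf$ provides genuinely mixed fields, which are precisely the hypotheses the cited results require.
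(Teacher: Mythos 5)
Your overall skeleton matches the paper's proof: symmetrise the nontrivial binary function using Lemma~\ref{lem:symmetric}\eqref{p:symmetric_lsm}, de-Ising-ify with Lemma~\ref{lem:symmetric}\eqref{p:Ising} if needed, normalise to $\binar{\beta}{1}{1}{\gm}$ with $\beta\gm>1$, invoke a hardness result for ferromagnetic two-spin with external fields, and transfer back via Lemma~\ref{lem:clone_csp}. That much is the same.

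However, the middle step --- the one you label as ``bookkeeping'' --- has a genuine gap, and its characterisation is wrong. The result the paper actually uses, Theorem~2 of \cite{liu_complexity_2014}, asserts $\BIS$-hardness for a ferromagnetic two-spin instance $\binar{\beta}{1}{1}{\gm}$ with a \emph{single, uniform} external field $\mu$, provided $\mu$ is \emph{sufficiently large}. It is not a ``mixed / non-uniform local field'' theorem, so the claim that ``the strict oppositely-directed monotonicity of $\upf,\downf$ provides genuinely mixed fields, which are precisely the hypotheses the cited results require'' does not match the cited result. Two things are missing from your argument as a consequence. First, you never establish that the field actually lands in the hard regime: the paper does this by forming $h(x)=\left(\downf(x)/\downf(1)\right)^k$, which has $h(1)=1$ and $h(0)=\mu$ with $\mu\to\infty$ as $k\to\infty$, thereby guaranteeing $\mu$ can be made arbitrarily large. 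Without powering up the unary you have no control over whether $\mu$ exceeds the threshold. Second, the direction of the field matters: Theorem~2 of \cite{liu_complexity_2014} applies when $\beta<\gm$ and the field favours the $0$-spin, so the paper performs a case split on $\beta\lessgtr\gm$ and in the $\beta>\gm$ case bit-flips the entire problem (Observation~\ref{obs:bit_flip_CSP}) so that $\overline{\downf}\in\cB^{<}_1$ plays the role of $\upf$ and vice versa. Your proposal never decides which of $\upf,\downf$ is the one to power up, nor performs this flip, so the invocation of the cited theorem does not yet go through. (Your citation of \cite{goldberg_complexity_2007} for the Ising case is also not needed once you de-Ising-ify with Lemma~\ref{lem:symmetric}\eqref{p:Ising}, as the paper notes; and \cite{goldberg_approximating_2012} is about Potts, not what you want here.) Filling in the $k$-th power construction and the $\beta\lessgtr\gm$ case split would turn your sketch into essentially the paper's proof.
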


\begin{proof}
 The nontrivial binary function $g$ may be assumed symmetric by replacing it with the function 
 $\sum_{z\in\{0,1\}} g(x,z) g(y,z)$ 
(which is contained in $\ang{g}$) 
 if necessary.
 By Lemma~\ref{lem:symmetric}\,\eqref{p:symmetric_lsm}, the  proposed replacement function is also nontrivial and lsm.
 A nontrivial symmetric lsm function can be written as
 \[
  g(x,y)=c\binar\beta11\gm,
 \]
 where $\beta\gamma>1$ (cf.\ Observation~\ref{obs:binary_lsm}) and $c>0$.
 Let $g'(x,y) := \frac{1}{c}g(x,y)$ and let $h(x) := \left(\frac{\downf(x)}{\downf(1)}\right)^k$ for some 
 positive integer $k$ that we will determine below.
 Note that $h(1)=1$ and let $\mu:=h(0)$, which is strictly greater than $1$.
 Constant factors do not affect the complexity of CSPs, so $\NCSP(g',h) \le_{AP} \NCSP(g,\downf)$.
 We now distinguish cases according to the relative size of $\beta$ and $\gamma$.
 
 \textbf{Case 1}. Suppose $\beta<\gamma$.
 Then by \cite[Theorem 2]{liu_complexity_2014}, the problem $\NCSP(g',h)$ is \BIS-hard if $\mu$ is sufficiently large. 
 But $\mu=\left(\frac{\downf(0)}{\downf(1)}\right)^k$, which can be made arbitrarily large by choosing $k$ large enough.
 Hence $\NCSP(g,\downf)$ is \BIS-hard, and thus by Lemma~\ref{lem:clone_csp}, $\NCSP(\cF,\upf,\downf)$ is \BIS-hard.
 
 \textbf{Case 2}. Suppose $\beta>\gamma$.
 By Observation~\ref{obs:bit_flip_CSP}, $\NCSP(\cF,\upf,\downf) =_{AP} \NCSP(\overline{\cF},\overline{\upf},\overline{\downf})$.
 But $\overline{\upf}\in\cB^{>}_1$ and $\overline{\downf}\in\cB^{<}_1$, and $\overline{g}(0,0)<\overline{g}(1,1)$.
 Thus we can apply the argument of Case~1 to the bit-flipped functions to find that $\NCSP(\cF,\upf,\downf)$ is again \BIS-hard.

 \textbf{Case 3}. Suppose $\beta=\gm$, i.e.\ $g'$ is a ferromagnetic Ising function.
 Historically, this case was considered first, in a paper by Goldberg \& Jerrum \cite{goldberg_complexity_2007}.
 However, the main result of~\cite{goldberg_complexity_2007} (Theorem 1.1) is stated in a setting where
 the input has ``local fields'', which means that different unary functions are available for different
 variables. 
 While~\cite{goldberg_complexity_2007} does contain a reduction from a problem with restricted local fields  to the general problem, the proof of this result is only given for an explicit choice of  fields.
 This proof is widely known to generalise, as e.g.\ noted in \cite{liu_complexity_2014}, but rather than writing out the general proof here, it will be shorter (if ahistorical) to note the following:
 Let $g''(x,y) = \sum_{z\in\{0,1\}} g(x,z) g(y,z) \upf(z)$.
 By Lemma~\ref{lem:symmetric}\,\eqref{p:Ising}, this function is nontrivial, symmetric, lsm, and non-Ising.
 Thus the problem reduces to one of the previous cases.
 
 Therefore, $\NCSP(\cF,\upf,\downf)$ is \BIS-hard whenever $\Gud\sse\LSM$ and  there exists a nontrivial binary function in $\Gud$.
\end{proof}

\subsection{All binary functions are trivial}

Having considered two cases with nontrivial binaries, we now look at $(\om,p)$-clones that do not contain any nontrivial binary functions.
In the following, $\oplus_3(x,y,z)$ is the ternary indicator function for inputs of even parity, i.e.
 \[
  \oplus_3(x,y,z) = \begin{cases} 1 &\text{if } x+y+z \text{ is even,} \\ 0 &\text{otherwise.} \end{cases}
 \]
We will also use the following two lemmas.
\begin{lem}[\cite{topkis_minimizing_1978}, as stated in {\cite[Lemma~5.1]{bulatov_expressibility_2013}}]\label{lem:2-pinnings_log-modular}
 A permissive function $f\in\cB$ is log-modular if and only if every 2-pinning is log-modular.
\end{lem}

\begin{lem}\label{lem:log-modular-product}
 A permissive function $f\in\cB_n$ is log-modular if and only if it is a product of permissive unary functions, i.e.\ $f(\vc{x}{n}) = \prod_{i=1}^n u_i(x_i)$, where $u_i\in\cB_1$ is permissive for all $i\in [n]$.
\end{lem}
\begin{proof}
 Any product of permissive unary functions must be permissive and log-modular, i.e.\ the ``if'' direction is immediate.
 It remains to prove that if $f$ is permissive and log-modular, then it must be a product of permissive unary functions.
 For $n=1$ that result is trivial and for $n=2$ it follows straightforwardly from the definition of log-modularity.
 
 Now assume the desired result holds for some $n\geq 2$ and consider a permissive log-modular function $f\in\cB_{n+1}$.
 Define $f_a(\vc{x}{n}):=f(\vc{x}{n},a)$ for $a\in\{0,1\}$, then
 \[
  f(\vc{x}{n+1}) = f_0(\vc{x}{n})\dl_0(x_{n+1}) + f_1(\vc{x}{n})\dl_1(x_{n+1}).
 \]
 Since $f$ is permissive and log-modular, $f_0$ and $f_1$ must also be permissive and log-modular.
 Hence by the inductive assumption, there exist permissive unary functions $\vc{u}{n}\in\cB_1$ such that $f_0(\vc{x}{n}) = \prod_{i=1}^n u_i(x_i)$, and there exist permissive unary functions $\vc{v}{n}\in\cB_1$ such that $f_1(\vc{x}{n}) = \prod_{i=1}^n v_i(x_i)$.
 Thus we can write
 \[
  f(\vc{x}{n+1}) = \left(\prod_{i=1}^n u_i(x_i)\right)\dl_0(x_{n+1}) + \left(\prod_{i=1}^n v_i(x_i)\right)\dl_1(x_{n+1}).
 \]
 Let $k\in [n]$ be arbitrary and consider some 2-pinning of $f$ that leaves the variables $k$ and $n+1$ untouched, pinning each variable $i\in[n]\setminus\{k\}$ to the value $b_i$.
 This yields the function
 \[
  g_k(x_k,x_{n+1}) := \left(\prod_{i\in[n]\setminus\{k\}} u_i(b_i)\right)u_k(x_k)\dl_0(x_{n+1}) + \left(\prod_{i\in[n]\setminus\{k\}} v_i(b_i)\right)v_k(x_k)\dl_1(x_{n+1}).
 \]
 By Lemma~\ref{lem:2-pinnings_log-modular}, $g_k$ is log-modular, i.e.\
 \[
  0 = g_k(0,0)g_k(1,1)-g_k(0,1)g_k(1,0)
  = \left(\prod_{i\in[n]\setminus\{k\}} u_i(b_i)v_i(b_i)\right) \left( u_k(0)v_k(1) - v_k(0)u_k(1) \right).
 \]
 Since all the unary functions $u_i$ and $v_i$ are permissive and take non-negative rational values, this implies that there exists $c_k\in\zQ_{>0}$ such that $v_k=c_k\cdot u_k$.
 But $k$ was arbitrary, therefore
 \[
  f(\vc{x}{n+1}) = \left(\prod_{i=1}^n u_i(x_i)\right) \left( \dl_0(x_{n+1}) + \dl_1(x_{n+1})\prod_{i=1}^n c_i\right) = \prod_{i=1}^{n+1} u_i(x_i),
 \]
 where $u_{n+1}(0)=1$ and $u_{n+1}(1) = \prod_{i=1}^n c_i$, so $u_{n+1}$ is permissive and in $\cB_1$.
 Hence $f$ has the desired form, completing the proof.
\end{proof}

\begin{lem}\label{lem:trivial_parity}
 Let $\cF\sse\cB$, $\upf\in\cB^{<,\mathrm{n}}_1$, and $\downf\in\cB^{>,\mathrm{n}}_1$.
 Suppose $\Gud$ does not contain any nontrivial binary functions and $\cF\nsubseteq\cN$.
 Then there exists a ternary function $g\in\Gud$ satisfying $g(x,y,z)=c\cdot\oplus_3(x,y,z)$, where $c>0$ is a constant.
\end{lem}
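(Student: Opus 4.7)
The strategy uses a minimality argument combined with the structure of Boolean relational clones to produce a ternary function with parity support in $\Gud$, and then uniformises its weights. By Observation~\ref{obs:pinning}, $\delta_0,\delta_1\in\Gud$, so pinning is freely available. Choose $f\in\Gud$ of minimum arity with $f\notin\cN$; this is possible because $\cF\sse\Gud$ and $\cF\nsubseteq\cN$. Unaries lie in $\cB_1\sse\cN$, and every trivial binary lies in $\cN$ (a log-modular binary is a product of unaries; $g(x)\EQ(x,y)$ and $g(x)\NEQ(x,y)$ combine a unary with a generator of $\cN$). Since every binary in $\Gud$ is trivial by hypothesis, $\ari(f)\geq 3$.

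Second, the hypothesis rules out many shapes of the support $R_f$: any 2-pinning of $f$ or summation $\sum_{x_i}f(\dots)$ that produced a binary with non-matching-like support (e.g.\ $\IMP$) would be a nontrivial binary in $\Gud$, contradicting the hypothesis. Combined with $f\notin\cN$, the relational clone generated by $R_f$ together with the matching-like relations $\{R_\EQ,R_\NEQ,\{0\},\{1\}\}$ must strictly exceed the matching-like co-clone, while all derived binaries remain matching-like. By Post's classification of Boolean relational clones~\cite{Post41} with explicit generators as in~\cite{creignou_plain_basis}, the only such relational clone contains a ternary affine relation outside the matching-like co-clone, namely either the even-parity relation $\{(x,y,z):x\op y\op z=0\}$ or its odd coset $\{x\op y\op z=1\}$. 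Translating the corresponding pp-formula into a pps-formula over $\{\EQ,f,\delta_0,\delta_1\}$ produces a ternary $g\in\Gud$ whose support is a parity relation.

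Third, uniformise the weights of $g$. Each 2-pinning of $g$ is a binary in $\Gud$ with support $R_\EQ$ or $R_\NEQ$ (depending on the variable and value pinned), hence trivial, hence of the form $u(\cdot)\EQ$ or $u(\cdot)\NEQ$ for some unary $u$. Comparing 2-pinnings across the six ways of pinning (three variables, two values each) forces $u$ to be constant on $\{0,1\}$, so $g$ equals some $c>0$ on its support and $0$ off it. If the support is the even-parity relation, $g=c\cdot\oplus_3$ and we are done. Otherwise, form
\[
 h(x,y,z) \;=\; \sum_{u,w} g(x,y,u)\,g(z,u,w)\,\delta_0(w),
\]
which lies in $\Gud$. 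The $\delta_0(w)$ factor forces $w=0$, the factors of $g$ force $x+y+u\equiv 1$ and $z+u\equiv 1\pmod 2$, and these combine to $x+y+z\equiv 0\pmod 2$ with a unique satisfying $u$; hence $h=c^2\cdot\oplus_3$.

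The main obstacle is the second step: showing that $f\notin\cN$ together with triviality of all binaries in $\Gud$ forces a ternary parity relation into the relational clone of $R_f$. Invoking Post's lattice is one route, but a more self-contained approach is a direct case analysis of the supports of 2-pinnings and 1-variable summations of $f$, mirroring the kind of reasoning used in the proof of Lemma~\ref{lem:get-nontrivial-binary}; the hypothesis that every binary derivable in $\Gud$ is trivial severely restricts $R_f$ and ultimately leaves only parity-type structure as a witness for $f\notin\cN$.
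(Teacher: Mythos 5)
Your overall architecture is right — reduce to a relational-clone analysis, extract a parity-support ternary, then uniformise weights — but two of the steps contain genuine errors, and a third is hand-waved at exactly the point where it does real work.

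The weight-uniformisation step is wrong as stated. Take $g$ with $R_g = R_{\oplus_3}$ and arbitrary positive values $a,b,c,d$ on $(0,0,0),(0,1,1),(1,0,1),(1,1,0)$. Every one of the six $2$-pinnings of $g$ has support $R_\EQ$ or $R_\NEQ$ and is therefore \emph{automatically} trivial, regardless of $a,b,c,d$; the hypothesis that binaries in $\Gud$ are trivial gives no constraint at all here, and there is no way that ``comparing'' the pinnings forces $u$ constant. The constraint comes instead from \emph{summations}: $\sum_z g(x,y,z)$ is a permissive binary in $\Gud$ with matrix $\binar{a}{b}{c}{d}$, which by triviality and Observation~\ref{obs:permissive_binary_trivial} must be log-modular, i.e.\ $ad=bc$. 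Summing out $y$ and $x$ gives $ac=bd$ and $ab=cd$, and combining these yields $a=b=c=d$. You need the summations, not the pinnings.

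The step ``Combined with $f\notin\cN$, the relational clone $\ang{R_f,\delta_0,\delta_1}_R$ must strictly exceed $\mathrm{ID}_1$'' is unjustified and, taken literally, false. A function can fail to lie in $\cN$ while having support $\{0,1\}^3$ (e.g.\ a permissive ternary whose values do not factor), in which case $\ang{R_f,\delta_0,\delta_1}_R=\mathrm{IR}_2\subsetneq\mathrm{ID}_1$. The inference from $f\notin\cN$ to escaping $\mathrm{ID}_1$ requires the hypothesis that all binaries in $\Gud$ are trivial, via the argument sketched in the proof of Theorem~9.1 of~\cite{bulatov_expressibility_2013}: if $R_f\in\mathrm{ID}_1$ and a suitable permissive function derived from $f$ has all its $2$-pinnings log-modular, then $f\in\cN$. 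It is exactly the triviality hypothesis that delivers log-modularity of those $2$-pinnings (Observation~\ref{obs:permissive_binary_trivial}). Your argument skips this connection and thereby draws a conclusion that does not follow from $f\notin\cN$ alone.

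Finally, you flag ``show $R_f$ must be affine'' as the main obstacle but only wave at a case analysis in the style of Lemma~\ref{lem:get-nontrivial-binary}. That lemma concerns the non-lsm property, not affine support, and is not the right tool. The statement you actually need is~\cite[Lemma~11]{dyer_complexity_2009}: any nonnegative function of arity $>2$ without affine support yields, via pinning, a binary function without affine support in $\ang{f,\delta_0,\delta_1}$. Since trivial binaries all have affine support, this gives the contradiction. Citing that lemma (as the paper does) closes the gap cleanly; leaving it as ``a direct case analysis'' is not a proof.

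On the positive side, your handling of the odd-parity coset via $h(x,y,z)=\sum_{u,w}g(x,y,u)g(z,u,w)\delta_0(w)$ is correct and a nice touch, though unnecessary once you establish $\ang{R_f,\delta_0,\delta_1}_R=\mathrm{IL}_2$, since $R_{\oplus_3}\in\mathrm{IL}_2$ then gives a function with even-parity support directly.
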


\begin{proof}
 The proof of the lemma has two parts.
 First, we show that if $\Gud$ does not contain any nontrivial binary functions, then every function in $\Gud$ has affine support.
 Next, we show that if every function in $\Gud$ has affine support and $\cF\nsubseteq\cN$, then there exists $g\in\Gud$ which satisfies $g(x,y,z)=c\cdot\oplus_3(x,y,z)$ for some non-zero constant $c$.

 To prove that every function in $\Gud$ has affine support, we assume the opposite and show this leads to a contradiction.
 In particular, assume there exists a function $f\in\Gud$ which does not have affine support.
 All unary functions have affine support, so $f$ cannot be unary.
 Furthermore, all binary functions in $\Gud$ are trivial and a trivial binary function is either log-modular or it has a support on exactly 2 inputs.
 Recall from Section~\ref{sec:binfn} that any log-modular binary function $b(x,y)$ can be written as $u(x)v(y)$ for some $u,v\in\cB_1$; hence it has support on 1, 2 or 4 inputs.
 Now any subset of $\{0,1\}^2$ of size 1, 2 or 4 is an affine relation -- in other words all trivial binary functions have affine support.
 Therefore, the function $f$ cannot be binary, so it has arity at least 3.
 
 By Observation~\ref{obs:pinning}, the pinning functions $\delta_0$ and $\delta_1$ are contained in $\Gud$.
 Now, the proof of Lemma 11 in \cite{dyer_complexity_2009} shows that, if $f$ is a nonnegative function with $\ari(f)>2$ and $f$ does not have affine support, then $\ang{f,\delta_0,\delta_1}$ (and thus $\Gud$) contains a function of arity 2 which does not have affine support.
 This contradicts the assumption that all binary functions in $\Gud$ are trivial.
 Therefore, every function in $\Gud$ must have affine support.
 
 To prove that $\Gud$ contains a scaled parity function, suppose there exists a function $h\in\Gud\setminus\cN$.
 All unary functions are contained in $\cN$ since they are generators of the functional clone $\cN$.
 Furthermore, as noted in Section~\ref{sec:binfn}, any binary log-modular function can be written as $u(x)u'(y)$, where $u,u'\in\cB_1$ are appropriate unary functions.
 Thus, all trivial binary functions are contained in $\cN$.
 Hence, $h$ has arity at least 3.
 It is not the all-zero function, as that is also contained in $\cN$.
 By the first part of this proof, $h$ has affine support.
 Since all binary functions in $\Gud$ are trivial, all 2-pinnings of $h$ (as defined in Section~\ref{s:definitions}) must be trivial.
 
 We now consider the relations underlying some of the functions in $\Gud$. Recall from Section~\ref{s:definitions} that we denote by $R_f$ the relation underlying a function $f$.
 According to Lemma~3.1 of \cite{bulatov_expressibility_2013}, for any set of nonnegative functions $\cG$, we have
 \begin{equation}\label{eq:relations_functional_clone}
   \ang{\{R_f\mid f\in\cG\}}_R = \{R_f\mid f\in\ang{\cG}\}.
 \end{equation}
 Note that the above result is about functional clones, not $(\om,p)$-clones, and allowing limits may change the set of underlying relations, as can be seen for example in Observation~\ref{obs:pinning}.
 Yet it is straightforward to see that
 \[
  \ang{\{R_f\mid f\in\cG\}}_R = \{R_f\mid f\in\ang{\cG}\} \sse \{R_f\mid f\in\ang{\cG}_{\om,p}\}.
 \]
 Furthermore, we already know that $\{R_f\mid f\in\Gud\}$ contains only affine relations.
 These are the only facts we use below.
 
 Let $R_h$ be the relation underlying $h$, which is affine and non-empty.
 It can then be seen from Table~2 and the proof of Proposition~3 of \cite{creignou_plain_basis} that the relational clone $\ang{R_h,\delta_0,\delta_1}_R$ must be either $\mathrm{IR}_2=\ang{\EQ,\delta_0,\delta_1}_R$ or $\mathrm{ID}_1=\ang{\EQ, \NEQ,\delta_0,\delta_1}_R$ or $\mathrm{IL}_2$, the relational clone containing all affine relations.
 For an example of the argument for this, see the proof of Theorem~9.1 in \cite{bulatov_expressibility_2013}.
 The argument in the next  few paragraphs is also (a simplified version of) an argument in that proof.
 
 Suppose that $\ang{R_h,\delta_0,\delta_1}_R\sse\mathrm{ID}_1$ (i.e.\ $\ang{R_h,\delta_0,\delta_1}_R=\mathrm{IR}_2$ or $\ang{R_h,\delta_0,\delta_1}_R=\mathrm{ID}_1$), then in particular $R_h\in\mathrm{ID}_1$.
 The relation $R_h$ is also non-empty because $h\notin\cN$.
 Since $R_h$ is affine, its elements are solutions to a set of linear equations (cf.\ Section~\ref{s:definitions}).
 We can thus find a partition of the arguments of $h$ into a set of \emph{free} variables and a set of \emph{dependent} variables such that for any assignment of values to the free variables there exists a unique assignment of values to the dependent variables for which the resulting tuple of bit values is in $R_h$.
 Let $n:=\ari(h)$ and assume without loss of generality that $\vc{x}{k}$ are the free variables and $x_{k+1}\zd x_n$ are the dependent variables (if necessary, permute and rename variables).
 Define $h'(\vc{x}{k}) := \sum_{x_{k+1}\zd x_n\in\{0,1\}} h(\vc{x}{n})$, then $h'\in\Gud$ and $h'$ is permissive.
 
 As $h'\in\Gud$, all its 2-pinnings are binary functions in $\Gud$ and are thus trivial by assumption.
 Since $h'$ is permissive, by Observation~\ref{obs:permissive_binary_trivial} its 2-pinnings can be trivial only if they are log-modular.
 Therefore, by Lemma~\ref{lem:2-pinnings_log-modular}, $h'$ is log-modular.
 Furthermore, by Lemma~\ref{lem:log-modular-product}, $h'$ is a product of unary functions, so $h'\in\cN$.
 
 Now, in going from $h$ to $h'$, we summed out the dependent variables.
 Thus, for any fixed $\vc{x}{k}$, there is only a single assignment of $x_{k+1}\zd x_n$ such that $h(\vc{x}{n})$ is non-zero.
 Therefore, $h(\vc{x}{n}) = \chi_h(\vc{x}{n}) h'(\vc{x}{k})$, where $\chi_h$ is the indicator function for the relation $R_h$ -- i.e.\ $\chi_h(\vc{x}{n})=1$ if $(\vc{x}{n})\in R_h$ and $\chi_h(\vc{x}{n})=0$ otherwise.
 We assumed $R_h\in\mathrm{ID}_1=\ang{\EQ,\NEQ,\delta_0,\delta_1}_R$, therefore $\chi_h\in\cN$ and thus $h\in\ang{\chi_h,h'}\sse\cN$. 
 But this contradicts the assumption that $h\in\Gud\setminus\cN$.
 So instead we must have $\ang{R_h,\delta_0,\delta_1}_R\nsubseteq\mathrm{ID}_1$.
 The only way for this to happen is if $\ang{R_h,\delta_0,\delta_1}_R = \mathrm{IL}_2$ and thus $\mathrm{IL}_2\sse\{R_f\mid F\in\Gud\}$.
 
 Now, the relation corresponding to $\oplus_3$ is
 \[
  R_{\oplus_3} = \{(0,0,0),(0,1,1),(1,0,1),(1,1,0)\}.
 \]
 This is affine and therefore contained in $\mathrm{IL}_2=\ang{R_h,\delta_0,\delta_1}_R$.
 Hence, by letting $\cG$ equal $\Gud$ in \eqref{eq:relations_functional_clone}, there must be a ternary function $g\in\Gud$ which has $R_{\oplus_3}$ as its underlying relation.
 
 Let $a=g(0,0,0)$, $b=g(0,1,1)$, $c=g(1,0,1)$, and $d=g(1,1,0)$ be the non-zero values of $g$.
 Now consider the binary function $g'(x,y)=\sum_z g(x,y,z)$, which takes the form
 \[
  g'(x,y) = \begin{pmatrix}a&b\\c&d\end{pmatrix}.
 \]
 By construction, $g'\in\Gud$, so it must be trivial.
 As $a,b,c,d>0$, $g'$ must be log-modular, i.e.\ $ad=bc$.
 Similarly, by summing out $y$ or $x$ and applying the log-modularity condition to the resulting binary function, we deduce $ac=bd$ and $ab=cd$.
 Multiply together the first two of these equations to get $a^2cd=b^2cd$, which implies $a=b$ since all four values are positive.
 By symmetry, we find that in fact $a=b=c=d$, i.e. $g(x,y,z)=c\cdot\oplus_3(x,y,z)$.
\end{proof}

\begin{lem}\label{lem:parity_hard}
 Let $\cF$ be a finite subset of $\cB$ and let $\upf\in\cB^{<}_1$, $\downf\in\cB^{>}_1$.
 Suppose $\Gud$ contains a function $g(x,y,z)=c\cdot\oplus_3(x,y,z)$ for some constant $c>0$.
 Then the problem $\NCSP(\cF,\upf,\downf)$ does not have an FPRAS unless $\NP=\RP$.
\end{lem}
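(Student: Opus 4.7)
The plan is to use the parity function~$g$ together with the strictly increasing unary~$\upf$ to realise a symmetric antiferromagnetic two-spin interaction inside $\Gud$, and then invoke the known hardness of approximating the partition function of such a system in the non-uniqueness regime of the infinite $d$-regular tree.

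First I would build the binary function
\[
  f(x,y)\;=\;\sum_{z\in\{0,1\}} g(x,y,z)\,\upf(z),
\]
which lies in $\Gud$ by the closure properties from Lemma~\ref{lem:closure}. Since $g=c\cdot\oplus_3$ is supported exactly on triples with $x\oplus y\oplus z=0$, a direct computation gives $f=c\binar{\upf(0)}{\upf(1)}{\upf(1)}{\upf(0)}$. Because $\upf$ is permissive and strictly increasing, $0<\upf(0)<\upf(1)$, so after normalisation $f$ has the symmetric Ising form $\binar{\beta}{1}{1}{\beta}$ with $\beta=\upf(0)/\upf(1)\in(0,1)$, which is antiferromagnetic. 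Using $\upf$ itself (normalised) as the external field yields the unary $\lambda^{1-x}$ with $\lambda=\upf(0)/\upf(1)=\beta$, so the realised symmetric two-spin parameters are $(\beta,\beta,\beta)$.

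Next I would set up the AP-reduction. Given any simple graph $G=(V,E)$, construct a CSP instance $\Omega$ on variables $V\cup\{z_e:e\in E\}$, with a constraint $g(x_u,x_v,z_e)$ and a unary constraint $\upf(z_e)$ for each edge $e=\{u,v\}\in E$, together with a unary constraint $\upf(x_v)$ on each vertex $v\in V$. Summing out each $z_e$ inside $Z(\Omega)$ shows that $Z(\Omega)$ equals $\upf(1)^{|V|+|E|}c^{|E|}$ times the partition function of the symmetric two-spin system on $G$ with parameters $(\beta,\beta,\beta)$. This is an AP-reduction from the two-spin problem on unbounded-degree simple graphs (with these parameters) to $\NCSP(\cF,\upf,\downf,f)$; and since $f\in\Gud$, Lemma~\ref{lem:clone_csp} in turn AP-reduces $\NCSP(\cF,\upf,\downf,f)$ to $\NCSP(\cF,\upf,\downf)$.

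Finally, the triple $(\beta,\beta,\beta)$ is antiferromagnetic with $\gamma=\beta<1$, so Lemma~\ref{lem:uniqueness_conditions}(2) supplies some~$d$ for which it lies in the non-uniqueness region of the infinite $d$-regular tree. Theorem~\ref{thm:non-uniqueness_hard} with $\Delta=\infty$ then rules out an FPRAS for the two-spin problem unless $\NP=\RP$, and this hardness transfers through the chain of AP-reductions to $\NCSP(\cF,\upf,\downf)$. The one delicate point is ensuring that the triple lies in the \emph{interior} of the non-uniqueness region rather than on its boundary (where Theorem~\ref{thm:non-uniqueness_hard} says nothing). Since the non-uniqueness region is open and expands with~$d$, any fixed $\beta<1$ is eventually well inside it; if a given $\beta$ is awkward, one can replace $\upf$ by $\upf^k$ in the definition of $f$ to obtain the countable family $\beta_k=(\upf(0)/\upf(1))^k\to 0$, and choose $k$ large enough so that $(\beta_k,\beta_k,\beta_k)$ avoids the nowhere-dense boundary locus for some $d$. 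This is the only step of the argument that requires a non-mechanical verification.
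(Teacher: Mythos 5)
Your proposal is correct in outline but takes a genuinely different route from the paper. The paper's own proof is much shorter and goes via linear codes: it invokes Lemma~13 of Dyer--Goldberg--Jerrum \cite{dyer_complexity_2009}, which shows that $\oplus_3$, $\delta_0$, $\delta_1$ and a unary $\upf$ simulate the weight enumerator of a linear code with weight parameter $\lambda=\upf(1)/\upf(0)$, and then cites Corollary~7 of Goldberg--Jerrum \cite{goldberg_approximating_2013}, which says that approximating this weight enumerator with $\lambda>1$ has no FPRAS unless $\NP=\RP$. The pinnings $\delta_0,\delta_1$ are available via Observation~\ref{obs:pinning}, and $\lambda>1$ since $\upf$ is strictly increasing, so the lemma follows immediately. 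This route is a clean black box: there is no uniqueness threshold to negotiate.

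Your approach instead stays inside the two-spin machinery that the paper deploys in the proof of Theorem~\ref{thm:up-down}. The computation $\sum_z g(x,y,z)\upf(z)=c\,\upf(x\oplus y)$ is right, the normalised Ising matrix $\binar{\beta}{1}{1}{\beta}$ with $\beta=\upf(0)/\upf(1)<1$ is antiferromagnetic, the gadget encoding (one auxiliary variable $z_e$ per edge) correctly realises the two-spin partition function up to an explicit constant, and the chain of AP-reductions via Lemma~\ref{lem:clone_csp} is valid. The one real weak spot is exactly the one you flag: because $\gamma=\beta\leq 1$, the only applicable clause of Lemma~\ref{lem:uniqueness_conditions} is part~(2), which asserts failure of uniqueness but, unlike parts~(5) and~(8), gives no sharp threshold $\lambda_c$ from which ``interior of the non-uniqueness region'' can be read off directly. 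Your remedy (taking powers $\upf^k$ so that $\beta_k\to 0$ and perturbing off a nowhere-dense boundary) is plausible but is not a mechanical step; note that the paper's own proof of Theorem~\ref{thm:up-down}, Case~1 makes a similar elision when $\gamma\leq 1$, so you are not being held to a higher standard than the paper, but it is precisely the sort of bookkeeping the linear-code route avoids entirely. In short: your argument is a legitimate alternative that reuses machinery already present in the paper, at the cost of a non-uniqueness-boundary subtlety, while the paper's proof outsources the hardness to a self-contained external result.
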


\begin{proof}
 We show the desired result by reduction from the problem of approximating weight enumerators of linear codes.
 The definition and reduction follow \cite[p.~1977 and Lemma~13]{dyer_complexity_2009}, with some modifications because that paper was concerned with hardness of exact evaluation.
 
 A \emph{linear code} is specified by a binary generating matrix $A$.
 Let $\Upsilon$ be the linear subspace generated by the rows of $A$ over $\operatorname{GF}(2)$, then any vector in $\Upsilon$ is a code word.
 The \emph{weight enumerator} of the code specified by $A$ with weight parameter $\ld\in\zQ$ is given by $W_A(\ld) := \sum_{\bv\in\Upsilon} \ld^{\abs{\bv}}$, where $\abs{\bv}$ is the Hamming weight of $\bv$.
 The computational problem $\operatorname{WE}(\ld)$ takes as input a matrix $A$ and outputs $W_A(\ld)$.
 
 The linear space $\Upsilon$ can be specified by a pure affine function $h_A$ taking values in $\{0,1\}$, with the arity of $h_A$ being equal to the number of columns of $A$.
 The set of $\{0,1\}$-valued pure affine functions is in bijection with the set of affine relations $\operatorname{IL}_2$.
 Now, 
 by adding new variables and breaking linear equations into pieces, we have $\ang{\oplus_3,\dl_0,\dl_1}_R=\mathrm{IL}_2$.
 Thus, in particular, $h_A\in\ang{\oplus_3,\dl_0,\dl_1}$ for any $A$.
 Define $u_\ld(x)=\ld^x$ for $x\in\{0,1\}$, then $W_A(\ld) = \sum_{\vc{x}{n}\in\{0,1\}} h_A(\vc{x}{n})\prod_{i=1}^n u_\ld(x_i)$.
 Therefore, by Lemma~\ref{lem:clone_csp}, 
 \[
  \operatorname{WE}(\ld) \leq_{AP} \NCSP(\oplus_3,u_\ld,\dl_0,\dl_1).
 \]
 
 Suppose that $\ld = \frac{\upf(1)}{\upf(0)}$, then $u_\ld\in\cB_1$ and $\upf(0)\cdot u_\ld = \upf$.
 We have
 \[
  \NCSP(\oplus_3,u_\ld,\dl_0,\dl_1) \leq_{AP} \NCSP(\cF,\upf,\downf,\oplus_3,u_\ld,\dl_0,\dl_1) \leq_{AP} \NCSP(\cF,\upf,\downf)
 \]
 where the first reduction is because adding more constraint functions cannot make the problem easier and the second reduction is by repeated applications of Lemma~\ref{lem:clone_csp}, since all of $c\cdot\oplus_3$, $\upf(0)\cdot u_\ld,\dl_0,\dl_1$ are in $\Gud$ and constant factors can be absorbed into AP-reductions.
 Hence, combining the different reductions, we find $\operatorname{WE}\left(\frac{\upf(1)}{\upf(0)}\right) \leq_{AP} \NCSP(\cF,\upf,\downf)$.
 
 Now by \cite[Corollary~7]{goldberg_approximating_2013}, the problem of approximating the weight enumerator of a linear code with weight parameter $\ld>1$ does not have an FPRAS unless $\NP=\RP$.
 But $\ld=\frac{\upf(1)}{\upf(0)}>1$ by the definition of $\upf$, hence the desired result follows.
\end{proof}

\subsection{Putting the pieces together}

Recall Theorem~\ref{thm:up-down}, which we can now prove.
\begin{repthm}{thm:up-down}
 \stateupdown{}{}{}{}
\end{repthm}

\begin{proof}  By Lemma~\ref{lem:normalising}, there exist functions $\upf'\in\cB^{<,\mathrm{n}}_1$ and $\downf'\in\cB^{>,\mathrm{n}}_1$ such that
 \[
  \NCSP(S,\upf,\downf) =_{AP} \NCSP(S,\upf',\downf')
 \]
 for any finite $S\sse\cF$.
 By replacing the problem on the left-hand side with the one on the right-hand side, we may therefore assume in the following that the permissive unary functions $\upf$ and $\downf$ are normalised.

 Property~\ref{p:cN_FP} follows from Theorem~\ref{thm:complexity_conservative}.
 
 For Property~\ref{p:lsm_BIS-hard}, suppose $\cF\subseteq\LSM$ and $\cF\nsubseteq\cN$.
 If all binary functions in $\Gud$ are trivial, then by Lemma~\ref{lem:trivial_parity} there is a ternary function $g\in\Gud$ satisfying $g(x,y,z)=c\cdot\oplus_3(x,y,z)$ where $c>0$.
 Note that
 \[
  g(0,1,1) g(1,0,1) = c^2 > 0 = g(0,0,1) g(1,1,1)
 \]
 i.e. $g$ is not lsm.
 But this is a contradiction as the set of lsm functions is closed under taking $\omega$-clones by Lemma~\ref{lem:LSM_closed_omega}.
 Hence we may assume that there is a nontrivial binary function~$f$  in $\Gud$.
 By the definition of $(\omega,p)$-clone (Definition~\ref{def:ompclone}) there is a finite subset $S_f$ of $\cF$ such that $f \in \ang{S_f,\upf,\downf}_{\om,p}$.
 Then by Lemma~\ref{lem:lsm-nontrivial}, $\NCSP(S_f,\upf,\downf)$ is \BIS-hard.   

 Property~\ref{p:lsm_BIS-easy} follows from Part~3 of \cite[Theorem 6]{chen_complexity_2015}, noting that the property of ``weak log-supermodularity'' used there encompasses all binary log-supermodular functions.
 
 Finally, suppose $\cF$ is not a subset of $\cN$, nor is it a subset of $\LSM$.
 Then, if $\Gud$ contains a nontrivial binary function, by Lemma~\ref{lem:get-nontrivial-binary} it contains a nontrivial binary non-lsm function $g$.
 It then contains the function $g(x,y)g(y,x)$, which is symmetric, nontrivial and non-lsm by Lemma~\ref{lem:symmetric}\,\eqref{p:symmetric_non-lsm}.
 This function can be written as
 \[
  g'(x,y)= d \binar\beta11\gm
 \]
 for some $d>0$ and $\beta,\gm\geq 0$.
 By the definition of $(\omega,p)$-clone, there is a finite subset $S$ of $\cF$
 such that $g' \in \ang{S,\upf,\downf}_{\om,p}$.
 We distinguish two subcases.
 
 \textbf{Case 1}. Suppose $\beta\leq \gm$.
 As $g'$ is non-lsm, by Observation~\ref{obs:binary_lsm}, we have $\beta\gm<1$: i.e., the function corresponds to an antiferromagnetic two-spin model.
 Let $f(x,y)=d^{-1}g'(x,y)$ and let $h(x)=\left(\frac{\downf(x)}{\downf(1)}\right)^{k}$ for some 
 sufficiently large positive integer~$k$ (to be determined below).
 Then $h(1)=1$; set $\ld:=h(0)$.
 Since constant factors can be absorbed into AP-reductions, $\NCSP(f,h) =_{AP}\NCSP(g',\downf)$.
 Now $\NCSP(f,h)$ corresponds to the spin system $(\beta,\gamma,\ld)$, cf.\ Section~\ref{sec:existing}.
 We have $0\leq\beta\leq\gamma$ with $\beta\gamma<1$ and $\lambda$ can be made arbitrarily large by choosing $k$ large enough.
 \begin{itemize}
  \item If $\gamma\leq 1$, then Lemma~\ref{lem:uniqueness_conditions}(2) says that for sufficiently large $d$, uniqueness does not hold on the infinite $d$-regular tree.
  \item Otherwise, if $\beta=0$, then by Lemma~\ref{lem:uniqueness_conditions}(5), $(\beta,\gamma,\ld)$ is not up-to-$\Delta$ unique if $\ld$ is large enough.
  \item Otherwise we must have $\beta>0$ and $\gamma>1$. In that case, by Lemma~\ref{lem:uniqueness_conditions}(8), for large enough $\ld$, universal uniqueness does not hold.
 \end{itemize}
 In each of these cases, the proof of Lemma~\ref{lem:uniqueness_conditions} in the full version of \cite{Li11:correlation} shows that $(\beta,\gamma,\ld)$ actually lies in the interior of the non-uniqueness region.
 Now, if $(\beta,\gamma,\ld)$ is in the interior of the non-uniqueness region,
 Theorem~\ref{thm:non-uniqueness_hard}
  (due to \cite{sly_computational_2012,galanis_inapproximability_2016}) 
  shows that 
   $\NCSP(f,h)$ does not have an FPRAS unless $\NP=\RP$. 
 Hence, $\NCSP(g',\downf)$ does not have an FPRAS unless $\NP=\RP$.
 
 \textbf{Case 2}. Suppose $\beta>\gm$.
 By 
 Observation~\ref{obs:bit_flip_CSP}, $\NCSP(g',\upf)=_{AP}\NCSP(\overline{g'},\overline{\upf})$.
Now
 \[
  \overline{g'}(x,y) = d \binar\gm11\beta,
 \]
 so $\overline{g'}(0,0)<\overline{g'}(1,1)$.
 Furthermore, $\overline{\upf}\in\cB^{>}_1$.
 Hence, by the argument of Case~1, $\NCSP(\overline{g'},\overline{\upf})$ does not have an FPRAS unless $\NP=\RP$ and so $\NCSP(g',\upf)$ does not have an FPRAS unless that condition is satisfied.
 
 If, instead, there are no nontrivial binaries in $\Gud$, we get the property of not having an FPRAS unless $\NP=\RP$ from Lemmas~\ref{lem:trivial_parity} and \ref{lem:parity_hard}.
 This completes the proof of Property~\ref{p:non-lsm_hard} and thus of the theorem.
\end{proof}

\section{Two-spin}
\label{s:two-spin}

In this section, we partially classify the complexity of the problem $\NCSP(f)$, where $f\in\cB_2$.
This is similar to the two-spin problem widely studied in  statistical physics and computer science: the 
relevant quantity   is the partition function arising from the pairwise interactions of neighbouring spins.
Progress so far has mainly been restricted to the symmetric case, but we consider more general interaction matrices.
From Theorem~\ref{thm:complexity_conservative}, we know that $\NCSP(f)$ is in $\FP$ if $f\in\cN$.
We furthermore give a complete classification of the complexity of $\NCSP(f)$ if $f$ is lsm, as well as determining the complexity if $f$ is non-lsm and both $f$ and $\bar{f}$ are non-monotone.

Recall from  Section~\ref{sec:binfn} that a binary function is \emph{trivial} if it is log-modular or if $f(x,y)=g(x)\EQ(x,y)$ or $f(x,y)=g(x)\NEQ(x,y)$ for some unary function $g\in\cB_1$.
Recall also that $\wh{f}$ is the Fourier transform of $f$ (cf.\ Section~\ref{s:Fourier}).
We often write $\wh{f}_{xy}$ to denote $\wh{f}(x,y)$.

The proof of Theorem~\ref{thm:two-spin} is split into various lemmas that are stated and proved individually before being assembled into the proof of the theorem in Section~\ref{s:two-spin-proof}.
Throughout, we write
\[
 f(x,y) = \begin{pmatrix} a & b \\ c & d \end{pmatrix}.
\]

\subsection{Functions whose middle Fourier coefficients have opposite signs}

First, we show that we can realise a strictly increasing and a strictly decreasing permissive unary function if the middle Fourier coefficients of $f$, i.e.\ $\wh{f}_{01}$ and $\wh{f}_{10}$, have opposite signs. 
Availability of these two unary functions reduces the problem to the case considered in Section~\ref{s:ups_and_downs}.

\begin{lem}\label{lem:make_up_down}
 Let $f\in\cB_2$ be a nontrivial binary function and suppose $\wh{f}_{01}\wh{f}_{10}<0$.
 Then both $\ang{f}\cap\cB^{<}_1$ and $\ang{f}\cap\cB^{>}_1$ are non-empty.
\end{lem}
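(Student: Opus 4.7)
The plan is to realise both required unary functions as concrete pps-formulas over $\{f,\EQ\}$, namely as the row sum and the column sum of $f$. Set
\[
 g(x) := \sum_{y\in\{0,1\}} f(x,y) \qquad \text{and} \qquad h(y) := \sum_{x\in\{0,1\}} f(x,y).
\]
Each is obtained from $f$ by a single summation over a bound variable, so $g,h\in\ang{f}$ by Lemma~\ref{lem:closure}. Writing $a=f(0,0)$, $b=f(0,1)$, $c=f(1,0)$, $d=f(1,1)$ as in Section~\ref{s:Fourier}, we have $(g(0),g(1))=(a+b,c+d)$ and $(h(0),h(1))=(a+c,b+d)$.

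The monotonicity of $g$ and $h$ translates directly into the Fourier coefficients of $f$. From the formulas for $\wh{f}_{10}$ and $\wh{f}_{01}$ recalled in Section~\ref{s:Fourier} one reads off
\[
 g(0)-g(1) \;=\; a+b-c-d \;=\; 4\wh{f}_{10}, \qquad h(0)-h(1) \;=\; a+c-b-d \;=\; 4\wh{f}_{01}.
\]
The hypothesis $\wh{f}_{01}\wh{f}_{10}<0$ forces both of these differences to be nonzero and of opposite sign. Hence exactly one of $g,h$ is strictly decreasing and the other strictly increasing; in either case the correct one is placed into $\cB^{>}_1$ or $\cB^{<}_1$.

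What remains is to check that $g$ and $h$ are \emph{permissive}, i.e.\ that each of $a+b$, $c+d$, $a+c$, $b+d$ is strictly positive, so that no row or column of the matrix of $f$ vanishes. I expect this to be the only subtle step of the proof. If, say, the top row vanished ($a=b=0$), then $f(x,y)=\delta_1(x)(c\,\delta_0(y)+d\,\delta_1(y))$ factorises as a product of unary functions, so $f$ is log-modular and therefore trivial; symmetric reasoning disposes of the other three rows and columns. Thus, in any context where Lemma~\ref{lem:make_up_down} is applied to a nontrivial $f$ (such as the ferromagnetic case of Theorem~\ref{thm:two-spin}, where the trivial case is already handled separately), the row and column sums above are automatically permissive and the lemma follows.
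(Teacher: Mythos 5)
Your proof takes the same route as the paper's: form the two marginals $g(x)=\sum_y f(x,y)$ and $h(y)=\sum_x f(x,y)$ (the row and column sums of~$f$), which are in $\ang{f}$ by Lemma~\ref{lem:closure}, and observe that $g(0)-g(1)=4\wh{f}_{10}$ and $h(0)-h(1)=4\wh{f}_{01}$, so the sign condition $\wh{f}_{01}\wh{f}_{10}<0$ forces one of $g,h$ to be strictly increasing and the other strictly decreasing. (The paper simply normalises the signs by a WLOG before doing the same computation.) You are right that permissivity is the one nontrivial point, and your diagnosis is accurate: the paper's proof dispatches it by asserting that the marginals are permissive ``since $f$ is nonnegative and nontrivial'', but nontriviality is not among the lemma's stated hypotheses, and the Fourier condition alone does not supply it. Indeed, a function such as $f(x,y)=\delta_1(x)\,u(y)$ with $u$ strictly decreasing and positive has matrix $\begin{pmatrix}0&0\\u(0)&u(1)\end{pmatrix}$, satisfies $\wh{f}_{01}\wh{f}_{10}<0$, yet is log-modular (hence trivial) and has a vanishing row, so $g$ is not permissive. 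Your observation that a vanishing row or column forces $ad=bc=0$ and hence log-modularity is exactly the right way to see that nontriviality suffices, and you correctly note that every invocation of the lemma in the paper is for a nontrivial~$f$. So your proof matches the paper's in substance while being more explicit about an assumption the paper's proof uses but the lemma's statement does not record.
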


\begin{proof}
 The condition $\wh{f}_{01}\wh{f}_{10}<0$ implies that the Fourier coefficients have opposite signs.
 Without loss of generality, we may assume that $\wh{f}_{01}>0$ and $\wh{f}_{10}<0$, i.e. (cf.\ Section~\ref{s:Fourier})
 \begin{align*}
  a+c &> b+d \\
  a+b &< c+d.
 \end{align*}
 Otherwise, replace $f(x,y)$ by $f''(x,y):=f(y,x)$.
 Then, by Observation~\ref{obs:Fourier_swap}, $\wh{f''}_{01}=\wh{f}_{10}>0$ and $\wh{f''}_{10}=\wh{f}_{01}<0$ so the replacement function satisfies the desired property.
 
 Let $\upf(x):=\sum_y f(x,y)$ and $\downf(y):=\sum_x f(x,y)$; then $\upf,\downf\in\ang{f}$.
 These functions satisfy:
 \begin{align*}
  \downf(0)=a+c &> b+d=\downf(1) \\
  \upf(0)=a+b &< c+d=\upf(1).
 \end{align*}
 Furthermore, since $f$ is nonnegative and nontrivial, $\upf$ and $\downf$ are permissive.
 Hence, $\upf\in\ang{f}\cap\cB^{<}_1$ and $\downf\in\ang{f}\cap\cB^{>}_1$, as desired.
\end{proof}

Thus, if $f$ satisfies $\wh{f}_{01}\wh{f}_{10}<0$, we can use the complexity classification from Theorem~\ref{thm:up-down}.

\subsection{Log-supermodular functions}

The case of an lsm function whose middle Fourier coefficients have opposite signs is included in Lemma~\ref{lem:make_up_down}.
Thus, it only remains to consider the case of an lsm function where the two middle Fourier coefficients have the same sign (or at least one of them is zero).

\begin{lem}\label{lem:Fourier_non-zero}
 Suppose $f$ is a nontrivial binary lsm function with $\wh{f}_{01}\wh{f}_{10}\geq 0$.
 Then $\wh{f}_{11}\geq 0$. 
\end{lem}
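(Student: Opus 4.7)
The plan is to reduce the claim to an elementary algebraic identity relating the Fourier coefficients to the quantities $(a-d)^2$, $(b-c)^2$, and $ad-bc$.

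First, I would unpack the hypotheses in terms of the matrix entries $a=f(0,0)$, $b=f(0,1)$, $c=f(1,0)$, $d=f(1,1)$. Log-supermodularity of the binary function $f$ is precisely the inequality $ad\geq bc$. The goal $\wh{f}_{11}\geq 0$ is, from the explicit formula in Section~\ref{s:Fourier}, the inequality $a+d\geq b+c$.

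Next I would compute the product $16\wh{f}_{01}\wh{f}_{10}=(a-b+c-d)(a+b-c-d)$ and, by direct expansion and regrouping, recognise it as
\[
  16\wh{f}_{01}\wh{f}_{10} = (a-d)^2 - (b-c)^2.
\]
The assumption $\wh{f}_{01}\wh{f}_{10}\geq 0$ thus translates into $(a-d)^2\geq (b-c)^2$.

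Finally I would compare $(a+d)^2$ and $(b+c)^2$ using the standard identities $(a+d)^2=(a-d)^2+4ad$ and $(b+c)^2=(b-c)^2+4bc$, which yield
\[
  (a+d)^2 - (b+c)^2 \;=\; \bigl[(a-d)^2-(b-c)^2\bigr] \;+\; 4(ad-bc).
\]
Both bracketed quantities on the right are nonnegative (the first by the Fourier hypothesis, the second by lsm), so $(a+d)^2\geq (b+c)^2$. Since $a,b,c,d\geq 0$, taking the (nonnegative) square root gives $a+d\geq b+c$, i.e.\ $\wh{f}_{11}\geq 0$.

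There is no real obstacle here; the nontriviality hypothesis is not even used, which is expected since the claim is a pointwise inequality that degrades gracefully to equality in degenerate cases. The only step that requires a moment's care is the expansion giving $(a-d)^2-(b-c)^2$, which I would present as a single displayed computation.
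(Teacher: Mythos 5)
Your proof is correct, and it is genuinely different from — and considerably simpler than — the one in the paper. The paper's argument proceeds by a three-way case split on the signs of $\wh{f}_{01}$ and $\wh{f}_{10}$, with the hardest case (both nonpositive, not both zero) handled by an induction showing $a^{2^k}+d^{2^k} < b^{2^k}+c^{2^k}$ for all $k$, then deriving a contradiction from the fact that $d$ is strictly the largest entry; the remaining cases reduce to this one via the bit-flip observation. Your argument replaces all of this with a single identity,
\[
 (a+d)^2 - (b+c)^2 \;=\; \bigl[(a-d)^2 - (b-c)^2\bigr] + 4(ad-bc) \;=\; 16\,\wh{f}_{01}\wh{f}_{10} + 4(ad-bc),
\]
whose right-hand side is manifestly nonnegative under the two hypotheses, whence $a+d\geq b+c$ since all entries are nonnegative. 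This buys elegance, locality, and the correct observation that nontriviality (i.e.\ strictness in $ad>bc$) is not actually needed — the paper invokes strictness but, as you note, the conclusion is a non-strict inequality that holds in the limit anyway. The paper's approach, while more laborious, is perhaps more in the spirit of the surrounding material, which repeatedly exploits powers of lsm functions; but for this particular lemma your direct computation is preferable.
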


\begin{proof}
 By Observation~\ref{obs:binary_lsm}, the lsm condition is $ad>bc$, where the inequality is strict as $f$ is nontrivial.
 We distinguish cases according to whether the Fourier coefficients are both zero, both nonpositive or both nonnegative.
 
 \textbf{Case 1}. $\wh{f}_{01} = \wh{f}_{10} = 0$, i.e.
 \begin{align*}
  a+c &= b+d, \\
  a+b &= c+d.
 \end{align*}
 This implies $a=d$ and $b=c$, i.e. $f$ is an Ising function.
 Now,
 \[
  b+c = 2\sqrt{bc} < 2\sqrt{ad} = a+d,
 \]
 where the first step uses $b=c$, the second step the lsm property, and the last step uses $a=d$.
 But $a+d > b+c$ is equivalent to $\wh{f}_{11}>0$, the desired result.

 \textbf{Case 2}. $\wh{f}_{01},\wh{f}_{10}\leq 0$ and $\wh{f}_{01},\wh{f}_{10}$ are not both zero.
 Without loss of generality, assume that $\wh{f}_{01}<0$, the argument is analogous if $\wh{f}_{10}<0$ instead.
 Thus, we have
 \begin{align}
  a+c &< b+d, \label{eq:ac_lt_bd} \\
  a+b &\leq c+d, \label{eq:ab_leq_cd}
 \end{align}
 as well as the lsm condition $ad>bc$.
 We will show the result by contradiction, i.e.\ assume for a contradiction that $\wh{f}_{11}<0$ or, equivalently,
 \begin{equation}
  a+d<b+c. \label{eq:ad_lt_bc}
 \end{equation}
 Adding the inequalities \eqref{eq:ac_lt_bd} and \eqref{eq:ab_leq_cd}, we have $a<d$.
 Adding \eqref{eq:ac_lt_bd} and \eqref{eq:ad_lt_bc} gives $a<b$, and adding up \eqref{eq:ab_leq_cd} and \eqref{eq:ad_lt_bc} we obtain $a<c$, with all of these inequalities being strict.
 Furthermore, the lsm condition $ad>bc$ now implies $b,c<d$.
 
 Next we show that $a^{2^k}+d^{2^k}<b^{2^k}+c^{2^k}$ for any $k$.
 Since $d>a,b,c$, this is a contradiction if $k$ is large enough. 
 We prove the property by induction on $k$.
 The base case, $k=0$, is $\wh{f}_{11}<0$.
 For the induction step, suppose $a^{2^k}+d^{2^k}<b^{2^k}+c^{2^k}$ for some $k\in\zN$.
 By squaring both sides of this inequality we get
 \begin{equation}\label{equ:1}
  a^{2^{k+1}}+2a^{2^k}d^{2^k}+d^{2^{k+1}} < b^{2^{k+1}}+2b^{2^k}c^{2^k}+c^{2^{k+1}}.
 \end{equation}
 Now, $ad>bc$ implies $a^{2^k}d^{2^k}>b^{2^k}c^{2^k}$. 
 Subtracting this from (\ref{equ:1}) we find
 \[
  a^{2^{k+1}}+d^{2^{k+1}} < b^{2^{k+1}}+c^{2^{k+1}},
 \]
 so the property does indeed hold for all $k$.
 Yet, since $d$ is strictly the largest of the four values, this inequality cannot be true for large $k$, a contradiction.
 Thus, the assumption $\wh{f}_{11}<0$ must have been false and in this case, too, we have $\wh{f}_{11}\geq 0$.
 
 \textbf{Case 3}. $\wh{f}_{01},\wh{f}_{10}\geq 0$ and $\wh{f}_{01},\wh{f}_{10}$ are not both zero.
 Let $f'(x,y)=\bar{f}(x,y)$, where we are renaming the function to avoid clumsy notation.
 Note that, if $f$ is nontrivial and lsm, then so is $f'$.
 By Observation~\ref{obs:Fourier_bit_flip}, we have $\wh{f'}_{00}=\wh{f}_{00}$, $\wh{f'}_{01}=-\wh{f}_{01}$, $\wh{f'}_{10}=-\wh{f}_{10}$, and $\wh{f'}_{11}=\wh{f}_{11}$.
 Now, $f'$ is a nontrivial binary lsm function with $\wh{f'}_{01},\wh{f'}_{10}\leq 0$ and $\wh{f'}_{01},\wh{f'}_{10}$ not both zero.
 Thus, by Case~2 of this proof, $\wh{f'}_{11}$ is nonnegative.
 This immediately implies that $\wh{f}_{11}\geq 0$.
 
 Thus, by exhaustive case analysis, we have shown that $\wh{f}_{11}\geq 0$ whenever $\wh{f}_{01}\wh{f}_{10}\geq 0$.
\end{proof}

\subsection{FPRAS for binary functions with nonnegative Fourier coefficients}

Recall from Section~\ref{s:Fourier} that $\cP$ is the set of functions in $\cB$ whose Fourier transform takes nonnegative values.
Recall 
from Section~\ref{s:counting_CSPs}  that, for a finite set of constraint functions $\cF$, the problem $\hol(\cF)$ is the restriction of $\NCSP(\cF)$ to instances where every variable appears exactly twice.

In  this section, we show that for all binary functions $f\in\cP\cap\cB_2$, the problem $\NCSP(f)$ has an FPRAS.
This result, Corollary~\ref{cor:Fourier_FPRAS} below,
arises as a corollary of a stronger statement (Theorem~\ref{thm:Fourier_FPRAS}).

An arity-$k$ function is said to be \emph{self-dual} if $f(x_1,\ldots,x_k) = f(\bar{x}_1,\ldots,\bar{x}_k)$.
Let $\SDP$ be the set of self-dual functions in~$\cP$.
The set $\SDP$,
introduced in~\cite{bulatov_functional_2017}, is a functional clone.
Let $\SDP_3 := \SDP\cap\cB_3$.
We show in  Theorem~\ref{thm:Fourier_FPRAS}
that, for any finite subset $\cF\sse\SDP_3$, the problem $\NCSP(\cF)$ has an FPRAS.
This is a somewhat surprising result because there are functions in $\SDP_3$
that are not log-supermodular. See \cite[Theorem 14]{bulatov_functional_2017}.

To prove  Theorem~\ref{thm:Fourier_FPRAS}, 
in Lemma~\ref{lem:ncsp_Holant}
we transform 
$\NCSP(\cF)$
to a suitable ``even subgraphs'' holant problem. In Lemma~\ref{lem:Holant_Fourier} we
find a bound on the weight of ``near-assignments'' of this holant problem.
In Lemma~\ref{lem:Holant_edge-weighted_graph} we
reduce the holant problem to a perfect matchings problem. In Lemma~\ref{lem:Holant_weight} we
bound the number of nearly perfect matchings in terms of near-assignments. Finally,
we apply a result of Jerrum and Sinclair (Lemma~\ref{lem:poly_approx}, from \cite{jerrum_approximating_1989})
to approximate the solution to the perfect matching problem.

Corollary~\ref{cor:Fourier_FPRAS} follows
from Theorem~\ref{thm:Fourier_FPRAS} via an AP-reduction $\NCSP(f)\leq_{AP} \NCSP(f')$, 
where $f'(x,y,z):=f(x\oplus z,y\oplus z)$ is shown to be in $\SDP_3$ for any $f\in\cP\cap\cB_2$.

Our proofs  will use the following characterisation
of $\SDP$.

\begin{lem}[{\cite[Lemma~38]{bulatov_functional_2017}}]\label{lem:SDP}
 Suppose $f\in\cB$, then $f\in\SDP$ if and only if the Fourier transform of $f$ is nonnegative on inputs of even Hamming weight and is zero on inputs of odd Hamming weight.
\end{lem}

As before, let $\oplus_3$ be the ternary indicator function for inputs of even parity:
\[
 \oplus_3(x,y,z) = \begin{cases} 1 &\text{if } x+y+z \text{ is even,} \\ 0 &\text{otherwise.} \end{cases}
\]
The following observation is well-known and arises frequently in holographic transformations.
See, e.g., \cite{Cai18}.
It also arises as a special case of \cite[Lemma 27]{bulatov_functional_2017}.

\begin{obs}\label{obs:FT_oplus3}
 The Fourier transform of $\oplus_3$ is $\wh{\oplus_3} = \frac{1}{2}\EQ_3$.
\end{obs}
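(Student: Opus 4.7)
The plan is to verify Observation~\ref{obs:FT_oplus3} by direct computation from the definition of the Fourier transform given in Section~\ref{s:Fourier}. The function $\oplus_3$ is the indicator of even-parity triples, so it is supported on exactly the four strings $(0,0,0), (0,1,1), (1,0,1), (1,1,0)$ and takes value $1$ on each of them. Applying the arity-$3$ Fourier formula
\[
 \wh{\oplus_3}(p,q,r) = \frac{1}{8}\sum_{x,y,z\in\{0,1\}}(-1)^{px+qy+rz}\,\oplus_3(x,y,z),
\]
I would replace the sum over all eight inputs by a sum over only these four support strings.

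The four-term sum then splits naturally into cases according to the Hamming weight of $(p,q,r)$. When $(p,q,r) = (0,0,0)$, all four exponents are zero, so the sum equals $4$, giving $\wh{\oplus_3}(0,0,0) = 1/2$. When $(p,q,r) = (1,1,1)$, the exponent $px+qy+rz$ equals $x+y+z$, which is even on the support of $\oplus_3$, so every term is again $+1$ and the value is again $1/2$. When $(p,q,r)$ has Hamming weight $1$ or $2$, a short inspection (or a symmetry argument: the support is invariant under coordinate permutations and bit-flipping any two coordinates) shows that exactly two of the four exponents are even and two are odd, so the contributions cancel and the coefficient is $0$.

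Collecting the cases, $\wh{\oplus_3}(p,q,r) = 1/2$ exactly when $p = q = r$ and vanishes otherwise, which matches $\tfrac{1}{2}\EQ_3(p,q,r)$ by the definition of $\EQ_3$. There is no genuine obstacle here; this is a routine three-line calculation that could also be derived by observing that $\oplus_3$ is (a rescaling of) the indicator of the linear subspace $\{(x,y,z) : x\oplus y\oplus z = 0\}$ and invoking the standard fact that the Fourier transform of the indicator of a linear subspace is a scalar multiple of the indicator of its dual, which is $\{(0,0,0),(1,1,1)\}$ in this case.
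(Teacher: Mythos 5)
The paper does not actually prove this observation---it simply states it and cites \cite{Cai18} and \cite[Lemma~27]{bulatov_functional_2017}, remarking that the fact is ``well-known.'' Your direct computation is a perfectly valid way to fill that gap: the evaluation of $\wh{\oplus_3}(p,q,r)$ on each weight class of $(p,q,r)$ is correct (weight $0$ and weight $3$ give $1/2$; weight $1$ and weight $2$ give cancellation and hence $0$), and this matches $\tfrac{1}{2}\EQ_3$. Your closing remark, that $\oplus_3$ is the indicator of the linear code $\{(x,y,z): x\oplus y\oplus z=0\}$ and its Fourier transform is therefore supported on the dual code $\{(0,0,0),(1,1,1)\}$, is also sound and is closer in spirit to the cited Lemma~27 of \cite{bulatov_functional_2017}, which deals with indicators of affine subspaces in general. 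In short: the paper outsources the verification, while you supply it; both the brute-force sum and the dual-code argument are clean, and there is no gap.
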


\begin{lem}\label{lem:f_SDP}
 Suppose $f\in\cB_2$ and let $f'(x,y,z):=f(x\oplus z, y\oplus z)$.
 Then
 \[
  \wh{f'}(x,y,z)=\wh{f}(x,y)\cdot\oplus_3(x,y,z).
 \]
 Furthermore, if $f\in\cP$, then $f'\in\SDP_3$.
\end{lem}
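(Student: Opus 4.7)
The plan is to compute $\wh{f'}$ directly from the definition of the Fourier transform. Expanding
\[
 \wh{f'}(x,y,z) = \frac{1}{8} \sum_{p,q,r \in \{0,1\}} (-1)^{px+qy+rz} f(p \oplus r,\; q \oplus r),
\]
I would change variables via $u = p \oplus r$ and $v = q \oplus r$, and use the identity $(u \oplus r)x \equiv ux + rx \pmod{2}$ (valid for $x \in \{0,1\}$) to factor the exponent as $(-1)^{ux+vy}\cdot(-1)^{r(x+y+z)}$. The inner sum over $r$ produces $2\cdot \oplus_3(x,y,z)$, while the sum over $u,v$ is exactly $4\,\wh{f}(x,y)$; the prefactor $\tfrac{1}{8}$ absorbs these constants to yield the claimed identity $\wh{f'}(x,y,z) = \wh{f}(x,y) \cdot \oplus_3(x,y,z)$.

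For the second claim, I would verify the two defining properties of $\SDP_3$ separately. Self-duality of $f'$ follows from $\bar{x} \oplus \bar{z} = (1 \oplus x) \oplus (1 \oplus z) = x \oplus z$ (the two complements cancel modulo~2), together with the analogous statement for $\bar{y} \oplus \bar{z}$, giving $f'(\bar{x},\bar{y},\bar{z}) = f(x \oplus z,\; y \oplus z) = f'(x,y,z)$. Nonnegativity of $\wh{f'}$ is immediate from the Fourier identity just proved combined with the hypothesis $\wh{f} \geq 0$, so $f' \in \cP$. Together these two properties place $f'$ in $\SDP_3$. Alternatively, one may invoke the characterisation from \cite[Lemma~38]{bulatov_functional_2017}: the factor $\oplus_3(x,y,z)$ in the formula for $\wh{f'}$ forces it to vanish on inputs of odd Hamming weight and to be nonnegative on inputs of even Hamming weight, which is exactly membership in $\SDP$.

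The whole argument is essentially a routine calculation, so I do not anticipate a significant obstacle. The only step requiring any care is the change of variables together with correctly recognising the parity indicator $\oplus_3(x,y,z)$ from the sum $\sum_r (-1)^{r(x+y+z)}$; once this is in hand, both claims follow in one step each.
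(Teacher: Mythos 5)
Your argument is correct, and it is genuinely different from the paper's.

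For the Fourier identity, you compute $\wh{f'}$ directly from the definition, substitute $u=p\oplus z, v=q\oplus z$, factor the sign, and identify $\sum_r(-1)^{r(x+y+z)}=2\oplus_3(x,y,z)$. This is elementary and self-contained. The paper instead avoids computing $\wh{f'}$ head-on: it sets $g(x,y,z)=\wh{f}(x,y)\,\oplus_3(x,y,z)$, observes that proving $\wh{f'}=g$ is equivalent (via the inversion formula $f'=8\wh{\wh{f'}}$) to showing $\wh g=\tfrac18 f'$, and then evaluates $\wh g$ by the convolution theorem, invoking two auxiliary facts from Bulatov et al.\ (\cite[Lemmas~25, 27]{bulatov_functional_2017}: the Fourier transform under introduction of a fictitious argument, and $\wh{\oplus_3}=\tfrac12\EQ_3$). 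Both routes arrive at the same place; yours avoids the convolution machinery at the cost of the brief change-of-variables bookkeeping, while the paper's leans on results already established in the cited work. Either is a legitimate proof; yours is arguably simpler to verify.

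For the second claim, you give both the ``self-dual $+$ $\cP$'' definition check and the Hamming-weight characterisation from \cite[Lemma~38]{bulatov_functional_2017}. The paper uses only the latter. Your direct self-duality computation ($\bar x\oplus\bar z=x\oplus z$) is a small, valid bonus that the paper omits. No gaps.
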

\begin{proof}
 Let $g(x,y,z) := \wh{f}(x,y)\oplus_3(x,y,z)$, so in the first part of the lemma we are aiming to prove $\wh{f'}=g$.
 Note that by Observation~\ref{obs:Fourier_inverse} and the definition of the Fourier transform, we have
 \begin{equation}\label{eq:double-FT}
  f'(x,y,z) = 8 \wh{\wh{f'}}(x,y,z).
 \end{equation}
 By taking the Fourier transform, the desired equality $\wh{f'}=g$ is equivalent to $\wh{\wh{f'}}=\wh{g}$.
 Substituting into \eqref{eq:double-FT}, this is equivalent to $f'=8\wh{g}$.
 For the first part of the lemma, it thus suffices to show that the Fourier transform of $g$ is equal to $\frac{1}{8}f'$; this is more technically elegant than evaluating the Fourier transform of $f'$ directly.

 It will be useful to first define another function related to $f$.
 Let $h(x,y,z)=\wh{f}(x,y)$ be the function arising from $\wh{f}$ by introduction of a fictitious argument.
 By \cite[Lemma 25]{bulatov_functional_2017}, we have $\wh{h}(x,y,1) = 0$ and
 \[
  \wh{h}(x,y,0) = \wh{\wh{f\hphantom{'}}}(x,y) = \frac{1}{4}\sum_{p,q\in\{0,1\}} (-1)^{px+qy}\wh{f}(p,q) = \frac{1}{4} f(x,y),
 \]
 where the second equality is the definition of the Fourier transform of $\wh{f}$ and the third is by  Observation~\ref{obs:Fourier_inverse}.
 
 It is straightforward to see that $g(x,y,z) = h(x,y,z)\oplus_3(x,y,z)$.
 Thus, the Fourier transform of $g$ can be expressed as a convolution:\footnote{This is a well-known result in the theory of Fourier transforms. For pseudo-Boolean functions it follows from \cite{de_wolf_brief_2008} together with the straightforward-to-derive property that Fourier transforms are involutive up to scalar factor.}
 \[
  \wh{g}(x,y,z) = \sum_{p,q,r\in\{0,1\}} \wh{h}(p,q,r)\wh{\oplus_3}(p\oplus x, q\oplus y, r\oplus z).
 \]
 Substitute for $\wh{h}$, and for $\wh{\op_3}$ using Observation~\ref{obs:FT_oplus3}, then the Fourier transform becomes
 \[
  \wh{g}(x,y,z) = \sum_{p,q\in\{0,1\}} \frac{1}{4} f(p,q) \frac{1}{2} \EQ_3(p\oplus x, q\oplus y, z).
 \]
 Now, the ternary equality function enforces $p\op x = z$ and $q\op y=z$, which is equivalent to $p=x\op z$ and $q=y\op z$.
 Hence, the expression simplifies to
 \[
  \wh{g}(x,y,z) = \frac{1}{8} f(x\op z, y\op z) = \frac{1}{8} f'(x,y,z),
 \]
 the desired result.
 
 For the second part, note that $f\in\cP$ implies that $f$ has a nonnegative Fourier transform.
 Then, $\wh{f'}$ is the product of two nonnegative functions, so it is nonnegative, and therefore $f'\in\cP$.
 Yet $\wh{f'}$ is 0 on inputs of odd Hamming weight because of the factor of $\oplus_3$.
 Thus, by Lemma~\ref{lem:SDP}, $f'\in\SDP_3$.
\end{proof}

\begin{lem}\label{lem:f_SDP_reduction}
 Suppose $f\in\cB_2$ and $f'(x,y,z)=f(x\oplus z, y\oplus z)$, then $\NCSP(f)\leq_{AP}\NCSP(f')$.
\end{lem}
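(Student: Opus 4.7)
The plan is to build a completely elementary AP-reduction that adds only a single fresh variable. Given an instance $\Omega=(V,C)$ of $\NCSP(f)$, construct $\Omega'=(V',C')$ where $V'=V\cup\{z\}$ for a new variable $z\notin V$, and each constraint $c=((v,w),f)\in C$ is replaced by $c'=((v,w,z),f')\in C'$. Crucially, the same $z$ is shared across every constraint. The construction is clearly polynomial-time.

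The next step is to compute $Z(\Omega')$ in terms of $Z(\Omega)$. I would split the outer sum over assignments $\sigma':V'\to\{0,1\}$ according to the value $\sigma'(z)\in\{0,1\}$, writing $\sigma$ for the restriction of $\sigma'$ to $V$. When $\sigma'(z)=0$, each factor $f'(\sigma(v_c),\sigma(w_c),0)$ collapses to $f(\sigma(v_c),\sigma(w_c))$, so the $z=0$ part of the sum equals $Z(\Omega)$. When $\sigma'(z)=1$, each factor equals $f(\overline{\sigma(v_c)},\overline{\sigma(w_c)})$, and reindexing by $\sigma\mapsto\bar\sigma$ (the global bit-flip, as in Observation~\ref{obs:bit_flip_CSP}) shows the $z=1$ part also equals $Z(\Omega)$. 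Hence
\[
 Z(\Omega') = 2\,Z(\Omega).
\]

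With this identity in hand, the AP-reduction is immediate: given an FPRAS $\mathcal{A}'$ for $\NCSP(f')$ and inputs $(\Omega,\epsilon)$, run $\mathcal{A}'$ on $(\Omega',\epsilon)$ to obtain $\tilde{Z}'$ satisfying $e^{-\epsilon}Z(\Omega')\leq\tilde{Z}'\leq e^{\epsilon}Z(\Omega')$ with probability at least $3/4$, and output $\tilde{Z}'/2$, which then approximates $Z(\Omega)$ within the same factor. Since $|V'|=|V|+1$ and $|C'|=|C|$, the size blow-up is negligible, so the running time remains polynomial in the size of $\Omega$ and in $\epsilon^{-1}$.

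There is essentially no obstacle here: the function $f'$ was designed precisely so that summing out a single shared XOR variable reproduces $f$ together with its bit-flipped counterpart, and the bit-flip symmetry of Boolean partition functions collapses the two branches into the same value. The only care needed is ensuring a single global $z$ is used (not a fresh $z$ per constraint, which would yield a different identity), and checking that the constant factor $2$ is absorbed harmlessly by the AP-reduction framework.
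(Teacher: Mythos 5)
Your proof is correct and is essentially identical to the paper's: both constructions add a single fresh global variable shared across all constraints, replacing each binary constraint by the corresponding ternary one, and both derive $Z(\Omega') = 2Z(\Omega)$ by splitting the sum over the new variable and invoking the bit-flip symmetry (Observation~\ref{obs:bit_flip_CSP}). The only difference is cosmetic: the paper writes $Z(\Omega)$ as $\frac{1}{2}Z(\Omega')$ directly, while you compute $Z(\Omega')$ and divide.
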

\begin{proof}
 Consider an instance $\Omega=(V,C)$ of $\NCSP(f)$.
 Suppose $V = \{x_1\zd x_n\}$ and label each constraint $((x_i,x_j),f)\in C$ by the tuple $(i,j)$ as a shorthand.
 This fully specifies the constraint as there is only one constraint function.
 Then
 \[
  Z(\Omega) = \sum_{\bx\in\{0,1\}^{n}} \prod_{(i,j)\in C} f(x_i,x_j).
 \]
 Because of the sum over all $\bx$, the following holds:
 \begin{align*}
  \sum_{\bx\in\{0,1\}^{n}} \prod_{(i,j)\in C} f(x_i,x_j)
  &= \frac{1}{2} \left( \sum_{\bx\in\{0,1\}^{n}} \prod_{(i,j)\in C} f(x_i,x_j) + \sum_{\bx\in\{0,1\}^{n}} \prod_{(i,j)\in C} f(x_i\oplus1,x_j\oplus1) \right) \\
  &= \frac{1}{2} \sum _{y\in\{0,1\}} \sum_{\bx\in\{0,1\}^{n}} \prod_{(i,j)\in C}   f(x_i\oplus y, x_j\oplus y).
 \end{align*}
 Now, the latter formula is $\frac{1}{2}$ times the partition function for an instance of $\NCSP(f')$ with set of variables $V'=\{x_1\zd x_n,y\}$ and set of constraints $C' = \{ ((x_i,x_j,y),f') \mid (i,j)\in C\}$.
 Thus we have $\NCSP(f)\leq_{AP}\NCSP(f')$.
\end{proof}

\begin{lem}\label{lem:ncsp_Holant}
 For any finite subset $\cF\sse\SDP_3$ we have $\NCSP(\cF)\leq_{AP}\hol(\wh{\cF},\op_3)$, where $\wh{\cF}=\{\wh{f}:f\in\cF\}$.
\end{lem}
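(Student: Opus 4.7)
The plan is to combine the standard equivalence between Boolean $\#$CSPs and holant problems with a Hadamard holographic transformation. First, I will use the fact (noted after Definition~\ref{def:Holant}) that $\NCSP(\cF) =_{AP} \hol(\cF, \EQ_3)$, so it suffices to produce an AP-reduction from $\hol(\cF, \EQ_3)$ to $\hol(\wh{\cF}, \op_3)$.

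Next, I will apply Valiant's Holant Theorem (Theorem~\ref{thm:valiant_Holant}) with the Hadamard matrix $H = \binar{1}{1}{1}{-1}$, which satisfies $HH^T = 2I$ with $c=2\in\zQ$. For any $k$-ary $g \in \cB_k$, a direct calculation using the action of $H^{\otimes k}$ on the vector of values of $g$ shows $H \circ g = 2^k \wh{g}$. In particular, $H \circ f = 8 \wh{f}$ for each $f \in \cF$, and $H \circ \EQ_3 = 8 \wh{\EQ_3} = 2 \op_3$, since a short computation gives $\wh{\EQ_3}(x,y,z) = \tfrac{1}{8}(1 + (-1)^{x+y+z}) = \tfrac{1}{4}\op_3(x,y,z)$, which is also consistent with Observation~\ref{obs:FT_oplus3} and Fourier self-inversion up to scale.

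Putting these together gives
\[
 \hol(\cF, \EQ_3) =_{AP} \hol(H \circ \cF, H \circ \EQ_3) = \hol(8 \wh{\cF}, 2 \op_3).
\]
The multiplicative factors $8$ per $\wh{f}$-constraint, $2$ per $\op_3$-constraint, and the $c^{|E|}$ factor from the holographic transformation depend only on the instance and are polynomial-time computable, so the AP-reduction absorbs them. Chaining these reductions yields the desired $\NCSP(\cF) \leq_{AP} \hol(\wh{\cF}, \op_3)$.

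The main subtlety, and the step I expect to need the most care, is that $H$ has negative entries, whereas the framework in this paper is restricted to nonnegative rational-valued functions. Valiant's theorem itself only requires $MM^T = cI$ for some rational $c$, but for the reduction to remain meaningful in the nonnegative framework we must check that $H \circ \cF$ and $H \circ \{\EQ_3\}$ consist of nonnegative functions. The hypothesis $\cF \subseteq \SDP_3 \subseteq \cP$ gives exactly this: it guarantees $\wh{f} \geq 0$ for each $f \in \cF$, so $H \circ f = 8 \wh{f}$ lies in $\cB_3$, while $2 \op_3 \in \cB_3$ is manifestly nonnegative. With this verified, the transformation stays inside the nonnegative rational framework and the lemma follows.
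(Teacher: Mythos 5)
Your proposal is correct and follows essentially the same route as the paper's own proof: pass from $\NCSP(\cF)$ to $\hol(\cF,\EQ_3)$, apply Valiant's Holant Theorem with the Hadamard matrix $M=\binar{1}{1}{1}{-1}$ to obtain $\hol(2^3\wh{\cF}, 2\op_3)$, and absorb the constants into the AP-reduction. The one thing you add that the paper leaves implicit is the explicit observation that $\cF\sse\SDP_3\sse\cP$ guarantees $M\circ\cF$ stays nonnegative, so the transformed instance remains inside $\cB$; this is a reasonable point to flag even though it does not change the argument.
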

\begin{proof}
 We will argue
 \[
  \NCSP(\cF) \leq_{AP} \hol(\cF,\EQ_3) \leq_{AP} \hol(\wh{\cF},\oplus_3).
 \]
 
 The first reduction is exactly a transformation to the holant framework; see Section~\ref{s:counting_CSPs} or \cite[Proposition 1]{cai_Holant_2012}.
 
 For the second reduction, let
 \[
  M= \begin{pmatrix}1&1\\1&-1\end{pmatrix}.
 \]
 Note that $M M^{T} = 2 I$. 
 The entry of the $k$-fold tensor product $M\otimes\ldots\otimes M$ corresponding to row $(\vc{x}{k})$ and column $(\vc{p}{k})$ is $  (-1)^{p_1x_1+\ldots+p_kx_k}$.
 Thus, for any $k$-ary function $h$,
 \[
  (M\circ h)(\vc{x}{k}) =   \sum_{\vc{p}{k}\in\{0,1\}} (-1)^{p_1x_1+\ldots+p_kx_k} h(\vc{p}{k}) = 2^{k}\wh{h}(\vc{x}{k}).
 \]
 Hence if $f$ has arity 3, then $M\circ f  = 2^3 \wh{f}$.
 In particular, $M\circ\EQ_3 = 2^3 \wh{\EQ_3} = 2\oplus_3$ by Observations~\ref{obs:Fourier_inverse} and \ref{obs:FT_oplus3}.
 
 Now by a corollary of Valiant's Holant Theorem, given here as Theorem~\ref{thm:valiant_Holant}, and the above connection between holographic transformations under $M$ and the Fourier transform, we have
 \[
  \hol(\cF,\EQ_3)=_{AP}\hol(M\circ\cF,M\circ\EQ_3) =_{AP} \hol\left(\cF', 2\op_3\right),
 \]
 where $\cF' = \{2^3\wh{f}: f\in\cF\}$.
 Recall that if a constraint function is multiplied by  a constant, this factor can be absorbed into an AP-reduction: hence 
 $\hol\left(\cF', 2 \op_3\right) =_{AP} \hol (\wh{\cF}, \op_3)$.
 This completes the chain of reductions.
\end{proof}

Recall from Section~\ref{s:counting_CSPs} that 
an assignment $\sigma$ of a holant instance $\Omega$ with variables $V$ and constraints $C$ is 
a function $\sigma\colon V \to \{0,1\}$
choosing a spin from $\{0,1\}$ for 
each variable in $V$. The assignment has a weight $w_\sigma$.

\begin{dfn}\label{dfn:near-assignment}
 A \emph{near-assignment} of a holant instance $\Omega=(V,C)$ is defined as follows: take two distinct variables $u,v\in V$.
 Replace the two occurrences of $u$ in constraints by two new variables $u',u''$ and add a new constraint $((u',u''),\NEQ)$.
 Similarly, replace the two occurrences of $v$ with new variables $v',v''$ and add a new constraint $((v',v''),\NEQ)$.
 This gives a new holant instance $\Omega_{u,v}$.
 Any  
 assignment~$\sigma$ of $\Omega_{u,v}$ 
 with $\sigma(u')\neq \sigma(u'')$ and $\sigma(v')\neq \sigma(v'')$  is called a near-assignment of $\Omega$.
\end{dfn}

Recall that $Z(\Omega)$ is the total weight of all assignments of
a holant instance $\Omega$ (see \eqref{eq:bit_flip_partition_function}).
Define
\begin{equation}\label{eq:near_assignments}
 Z'(\Omega) := \sum_{\{u,v\}\sse V} Z(\Omega_{u,v})
\end{equation}
to be the total weight of all near-assignments.

Note that assignments of $\Omega_{u,v}$ which are not near-assignments of $\Omega$ according to Definition~\ref{dfn:near-assignment} do not satisfy both disequality constraints.
The contribution of these assignments to the partition function $Z(\Omega_{u,v})$ is thus 0; hence $Z'(\Omega)$ is indeed the total weight of all near-assignments.

\begin{lem}\label{lem:Holant_Fourier}
 For any $n$-variable holant instance $\Omega$ which uses only functions in $\cP$, we have
 \[
  Z'(\Omega) \leq 2n^2 Z(\Omega).
 \]
\end{lem}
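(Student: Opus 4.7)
The plan is to Fourier-expand each constraint function using Observation~\ref{obs:Fourier_inverse} and exploit that every coefficient $\wh{f_c}(\cdot)$ is nonnegative because $f_c\in\cP$. Writing $f_c(\sigma(\bv_c))=\sum_{\bp_c}(-1)^{\bp_c\cdot\sigma(\bv_c)}\wh{f_c}(\bp_c)$ and interchanging sums in $Z(\Omega)=\sum_{\sigma}\prod_c f_c(\sigma(\bv_c))$, the inner sum over $\sigma(w)\in\{0,1\}$ for each variable~$w$ contributes a factor equal to $2$ if the two Fourier bits appearing at the two occurrences of~$w$ agree, and $0$ otherwise. The surviving Fourier tuples $(\bp_c)_c$ are thus parametrised by a single labelling $\mathbf{q}\colon V\to\{0,1\}$, so that with $W(\mathbf{q}):=\prod_c\wh{f_c}(\mathbf{q}(\bv_c))$ one obtains
\[
Z(\Omega)=2^n\sum_{\mathbf{q}\in\{0,1\}^V}W(\mathbf{q}),
\]
where crucially $W(\mathbf{q})\geq 0$ for every~$\mathbf{q}$.

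Applying the same expansion to $\Omega_{u,v}$, the only change occurs at $w\in\{u,v\}$: the sum over $(\tau(w'),\tau(w''))$ now runs over the two tuples with $\tau(w')\neq\tau(w'')$ and equals $(-1)^{b_1(w)}+(-1)^{b_2(w)}$, where $b_1(w),b_2(w)$ denote the two Fourier bits at~$w$. This still vanishes unless $b_1(w)=b_2(w)$, but when it does not vanish it contributes an extra sign $(-1)^{\mathbf{q}(w)}$ relative to the EQ case. Hence
\[
Z(\Omega_{u,v})=2^n\sum_{\mathbf{q}}W(\mathbf{q})\,(-1)^{\mathbf{q}(u)+\mathbf{q}(v)}.
\]
Summing over unordered pairs and using the identity $\bigl(\sum_{v\in V}(-1)^{\mathbf{q}(v)}\bigr)^2=n+2\sum_{\{u,v\}}(-1)^{\mathbf{q}(u)+\mathbf{q}(v)}$ then yields
\[
Z'(\Omega)=2^{n-1}\sum_{\mathbf{q}}W(\mathbf{q})\left[\bigl(\textstyle\sum_{v\in V}(-1)^{\mathbf{q}(v)}\bigr)^2-n\right].
\]

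To finish, since $W(\mathbf{q})\geq 0$ and $\bigl|\sum_v(-1)^{\mathbf{q}(v)}\bigr|\leq n$, the bracketed factor is at most $n^2$ uniformly in~$\mathbf{q}$, so
\[
Z'(\Omega)\leq\frac{n^2}{2}\cdot 2^n\sum_{\mathbf{q}}W(\mathbf{q})=\frac{n^2}{2}Z(\Omega)\leq 2n^2 Z(\Omega).
\]
The point that needs the most care is the Fourier bookkeeping at each variable: both the implicit EQ identification on a normal holant variable and the explicit NEQ constraint on a split variable force the two Fourier bits at that variable to agree, and the two cases differ only by the sign $(-1)^{\mathbf{q}(w)}$ produced at each NEQ edge. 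Once that is settled, the hypothesis $f_c\in\cP$ enters in precisely one way, namely through $W(\mathbf{q})\geq 0$, which removes the only subtlety (individual brackets can be negative) and lets the coarse bound on the bracket finish the argument.
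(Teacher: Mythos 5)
Your proof is correct, and it takes a genuinely different route from the paper. The paper hides the Fourier positivity inside Lemma~\ref{lem:cP_000}: for each pair $\{u,v\}$ it views $\Omega$ as a pps-formula, frees four variables to get a $4$-ary function $f_{\psi_{u,v}}\in\cP$ (using that $\cP$ is a functional clone), and then bounds $Z(\Omega_{u,v})=\sum_{x,y}f_{\psi_{u,v}}(x,1-x,y,1-y)\leq 4 f_{\psi_{u,v}}(0,0,0,0)\leq 4Z(\Omega)$; summing over $\binom{n}{2}$ pairs gives $2n^2$. You instead perform the Fourier expansion globally across the whole instance, obtaining the exact identities
\[
 Z(\Omega)=2^n\!\!\sum_{\mathbf{q}\in\{0,1\}^V}\!\!W(\mathbf{q}),\qquad
 Z'(\Omega)=2^{n-1}\!\!\sum_{\mathbf{q}}W(\mathbf{q})\Bigl[\bigl(\textstyle\sum_{v}(-1)^{\mathbf{q}(v)}\bigr)^2-n\Bigr],
\]
and then bound the bracket uniformly by $n^2$ using $W(\mathbf{q})\geq 0$. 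Both proofs ultimately rest on the nonnegativity of the Fourier coefficients of functions in $\cP$, but yours exposes the mechanism explicitly and yields a sharper constant ($Z'(\Omega)\leq\tfrac{n^2}{2}Z(\Omega)$, versus the paper's $2n(n-1)$), while the paper's argument is more modular: it isolates what is needed into the clone-closure of $\cP$ plus Lemma~\ref{lem:cP_000} and avoids tracking signs at the $\NEQ$-edges. Your Fourier bookkeeping at the split variables (the extra $(-1)^{\mathbf{q}(u)+\mathbf{q}(v)}$ from each $\NEQ$ once the two Fourier bits are forced to agree) is exactly right, and you correctly rely on the holant property that each variable occurs exactly twice.
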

\begin{proof}
Let $\Omega=(V,C)$
and note that $|V|=n$.
Recall the definition of pps-formula  
(Definition~\ref{def:pps}).
Like any CSP instance, $\Omega$ can be viewed as
a pps-formula $\psi$ with no free variables. 
The atomic formulas of~$\psi$ correspond to the constraints  in~$C$.

Given any pair of  distinct variables $u,v\in V$,
let $\psi_{u,v}$ be the pps-formula 
obtained from~$\psi$  
by removing the (bound) variables $u$ and $v$
and introducing four new free variables, $w_1$, $w_2$, $w_3$ and $w_4$.
The two occurrences of $u$ in atomic formulas are replaced with $w_1$ and $w_2$.
The two occurrences of $v$ in atomic formulas are replaced with $w_3$ and $w_4$.

By construction, the function $f_{\psi_{u,v}}$
represented by the pps-formula $\psi_{u,v}$ 
(see the text below Definition~\ref{def:pps})
is in $\ang{\cP}$.
But $\cP$ is a functional clone
(see   
 Lemma~\ref{lem:Fou_omega}), so
$f_{\psi_{u,v}}(x_1,x_2,x_3,x_4) \in \cP$.

 The construction of $\psi_{u,v}$ guarantees
 that for any pair of distinct vertices $u,v$, we have
  \begin{equation}\label{eq:temptemp}
  Z(\Omega) = \sum_{x,y\in\{0,1\}}f_{\psi_{u,v}}(x,x,y,y).
  \end{equation}  
    Furthermore, 
 $$\sum_{x,y\in\{0,1\}} f_{\psi_{u,v}}(x,1-x,y,1-y)$$ is exactly the value $Z(\Omega_{u,v})$ arising in Definition~\ref{dfn:near-assignment}.
 Thus, by \eqref{eq:near_assignments}, the total weight of near-assignments of $\Omega$ is
 \begin{align*}
  Z'(\Omega) &= \sum_{\{u,v\}\sse V} \sum_{x,y\in\{0,1\}} f_{\psi_{u,v}}(x,1-x,y,1-y) \\
  &\leq \sum_{\{u,v\}\sse V} \sum_{x,y\in\{0,1\}} f_{\psi_{u,v}}(0,0,0,0) \\
  &\leq  \binom{n}{2} \cdot 4 \cdot Z(\Omega) \leq
  2n^2 Z(\Omega).
 \end{align*}
 Here, the  first inequality uses Lemma~\ref{lem:cP_000} and the second inequality uses~\eqref{eq:temptemp}.
\end{proof}

An \emph{edge-weighted multigraph} is a multigraph in which each edge $e$ is assigned a 
  non-negative rational weight 
 $w(e)$.
Given an edge-weighted multigraph $G$, a \emph{matching} of~$G$ is a subset $M\sse E(G)$ such that each vertex is incident to at most one edge in $M$.
The \emph{weight} of $M$, denoted $w(M)$, is the product of the weights of the edges in $M$.
A matching $M$ is \emph{perfect} if every vertex is incident to exactly one edge in $M$.
A matching $M$ is \emph{nearly perfect} if every vertex is incident to exactly one edge in $M$, except for two vertices which are not incident to any edges in $M$; the concept of near-perfect matchings will be important in Lemma~\ref{lem:poly_approx}.
The \emph{total weight of perfect matchings} of the multigraph $G$ is 
defined by $\ZPM(G) = \sum_M w(M)$,
where the sum is over all perfect matchings $M$ of $G$.
 The total weight of near-perfect matchings is  
 similarly defined as $\ZNPM(G) = \sum_M w(M)$, where
 the sum is over all near-perfect matchings $M$ of $G$.

Define $\wte\sse\cB_3$ to be the set of all ternary functions~$f$ with $f(0,0,0)=1$ and $f(0,0,1)=f(0,1,0)=f(1,0,0)=f(1,1,1)=0$.
\begin{obs}\label{obs:WtEven3}
 The set $\wte$ contains $\oplus_3$.
 Also, for every $f\in\SDP_3$, other than the all-zero function, $\wh{f}(0,0,0)>0$.
 Furthermore, by Lemma~\ref{lem:SDP}, $\wh{f}$ is zero on inputs of odd Hamming weight.
 Thus $\wte$ contains a scalar multiple of $\wh{f}$.
\end{obs}

An \emph{even assignment} of a holant instance $\Omega=(V,C)$ is an assignment $\sigma$ with the property that there is an even number of $1$ spins  in the scope of each constraint.
Formally, $\sigma:V\to\{0,1\}$ is an even assignment if, for all constraints $c\in C$, 
\[
 |\sigma(\bv_c)| := \sum_{w\in\bv_c} \sigma(w) \equiv 0 \pmod 2,
\]
where the sum is over all variables~$w$ in the tuple $\bv_c$ with their correct multiplicities.
If all constraint functions in the holant instance are taken from $\wte$, then only even assignments contribute to the partition function, as the contribution of each constraint is zero unless an even number of the variables in its scope are $1$.

Any holant instance using only constraint functions in $\wte$ can be transformed into the problem of counting perfect matchings on an edge-weighted graph with non-negative rational edge weights, as shown in the following definition and lemma.
This result is different from known results reducing the problem of computing holant values to counting perfect matchings on edge-weighted graphs, which generally use negative edge weights, even when the constraint functions satisfy parity conditions \cite[Lemma~2.25 \& Theorem~2.28]{curticapean_simple_2015}.

\begin{dfn}\label{dfn:Holant_edge-weighted_graph}
 Suppose that $\Omega=(V,C)$ is a holant instance using only constraint functions in $\wte$.
 Let $G''=(V'',E'')$ be the edge-weighted multigraph defined as follows:
 \begin{itemize}
  \item $G''$ contains three distinct vertices $c^{(1)}, c^{(2)}, c^{(3)}$ for each constraint $c\in C$.
  \item $G''$ contains the following weighted edges for each $c=(\bv_c,f_c)\in C$:
   \begin{itemize}
    \item an edge $\{c^{(1)}, c^{(2)}\}$ with weight $f_c(1,1,0)$,
    \item an edge $\{c^{(1)}, c^{(3)}\}$ with weight $f_c(1,0,1)$, and
    \item an edge $\{c^{(2)}, c^{(3)}\}$ with weight $f_c(0,1,1)$.
   \end{itemize}
   These edges are called \emph{edges within triangles}.
  \item $G''$ furthermore contains an edge $\{c^{(i)},d^{(j)}\}$ with weight $1$ for any pair of distinct constraints $c,d\in C$ and any $i,j\in\{1,2,3\}$ such that the same variable appears in the $i$-th position of $\bv_c$ and in the $j$-th position of $\bv_d$.
  Finally, $G''$ contains an edge $\{c^{(i)},c^{(j)}\}$ with weight $1$ for any $c\in C$ and any pair $i,j\in\{1,2,3\}$ with $i<j$ such that the same variable appears in both the $i$-th and $j$-th position of $\bv_c$.
  These edges are called \emph{edges between triangles}.
 \end{itemize}
\end{dfn}

Note that $G''$ may not be simple.  For example, if some variable appears in position~$1$ and position~$3$
of a constraint $c\in C$ then $G''$ has the edge $\{c^{(1)},c^{(3)}\}$ with weight $f_c(1,0,1)$
but also the edge $\{c^{(1)},c^{(3)}\}$ with weight~$1$. 

The ``triangle'' terminology is used because the graph $G''$ can be constructed by taking the graph representation $G'$ of $\Omega$ as defined in Observation~\ref{obs:Holant_graph} and then expanding each vertex of $G'$ into a triangle, as shown in Figure~\ref{fig:Holant_triangle}.
 
\begin{figure}
 \centering
  \begin{tikzpicture}
   \path (0,0) node[circle, inner sep=1pt, draw](F) {$f$} -- (0,1) node(x1) {$x_1$} -- (-.8,-.8) node(x2) {$x_2$} -- (.8,-.8) node(x3) {$x_3$};
   \draw (F) -- (x1);
   \draw (F) -- (x2);
   \draw (F) -- (x3);
   \draw[->] (1.5,0) -- (2.5,0);
   \draw (5.25,1) --  node[right] {$1$} ++(0,-0.5) -- node[right] {$f(1,0,1)$} ++(1,-1) -- node[right] {$1$} ++(.4,-.4);
   \filldraw (5.25,0.5) circle (3pt) -- node[left] {$f(1,1,0)$} ++(-1,-1) -- node[left] {$1$} ++(-.4,-.4);
   \filldraw (4.25,-.5) circle (3pt) -- node[below] {$f(0,1,1)$} ++(2,0) circle (3pt);
  \end{tikzpicture}
 \caption{A constraint $((x_1,x_2,x_3),f)$ in a holant problem using only constraint functions from $\wte$ corresponds to a triangle in the edge-weighted graph constructed in Definition~\ref{dfn:Holant_edge-weighted_graph}.}
 \label{fig:Holant_triangle}
\end{figure}
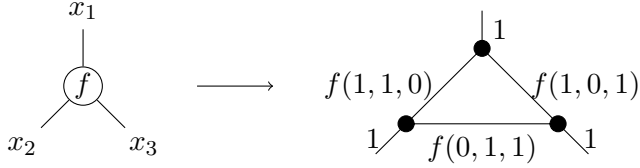

The following lemma relates the total weight of perfect matchings $\ZPM(G'')$
to the partition function $Z(\Omega)$ of the corresponding holant instance.

\begin{lem}\label{lem:Holant_edge-weighted_graph}
 Given any $n$-variable holant instance $\Omega$ using functions in $\wte$, let $G''$ be the corresponding edge-weighted multigraph according to Definition~\ref{dfn:Holant_edge-weighted_graph}.
 Then $\ZPM(G'') =  Z(\Omega)$.
\end{lem}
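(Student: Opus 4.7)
The plan is to establish a weight-preserving bijection between perfect matchings of $G''$ and those assignments $\sigma$ of $\Omega$ whose weight $w_\sigma$ is not forced to be zero by the $\wte$ constraints. The key observation is that each triangle has exactly three vertices, so in any perfect matching exactly one internal edge of the triangle is used (covering two vertices, with the third matched externally), or no internal edge of the triangle is used (all three matched externally). The case ``all three matched internally'' is impossible on three vertices. This discrete dichotomy at each triangle will mirror the even-parity constraint built into $\wte$.

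First, I would define the map $M \mapsto \sigma_M$: for each variable $v \in V$ that appears at position $i$ of constraint $c$ and position $j$ of constraint $d$ (allowing $c = d$ with $i \neq j$), there is a unique external weight-$1$ edge $\{c^{(i)}, d^{(j)}\}$ in $G''$. Set $\sigma_M(v) = 0$ if this edge lies in $M$, and $\sigma_M(v) = 1$ otherwise. I would then verify that $\sigma_M$ is well-defined and consistent: at the vertex $c^{(i)}$, the only external weight-$1$ edge is $\{c^{(i)}, d^{(j)}\}$, so $c^{(i)}$ is matched either by this edge or by an edge within its own triangle; the same dichotomy applies at $d^{(j)}$. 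Thus $c^{(i)}$ is matched internally if and only if $d^{(j)}$ is matched internally, giving a consistent assignment.

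Second, I would analyse the weight contribution of each triangle. Writing $\bv_c = (v_1, v_2, v_3)$ and $\sigma := \sigma_M$, say position $i$ of $c$ is \emph{internal} if the vertex $c^{(i)}$ is matched by an internal edge of the triangle. By the previous paragraph, this occurs iff $\sigma(v_i) = 1$. If no position is internal, then $\sigma(\bv_c) = (0,0,0)$ and the triangle contributes the factor $1 = f_c(0,0,0)$ to $w(M)$. If exactly two positions $i,j$ are internal (so the unique internal edge $\{c^{(i)}, c^{(j)}\}$ is in $M$), then $\sigma$ restricted to $\bv_c$ has $1$'s exactly at positions $i,j$, and the triangle contributes $f_c(1,1,0)$, $f_c(1,0,1)$, or $f_c(0,1,1)$ according to $\{i,j\}$, exactly matching $f_c(\sigma(\bv_c))$. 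The other cases (one or three positions internal) are impossible in a perfect matching on a triangle; these would correspond to odd-parity assignments $\sigma(\bv_c)$, on which $f_c \in \wte$ vanishes anyway, so they contribute $0$ to $Z(\Omega)$.

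Combining these, $w(M) = \prod_{c \in C} f_c(\sigma_M(\bv_c)) = w_{\sigma_M}$. Conversely, given any assignment $\sigma$ with $w_\sigma \neq 0$ (hence even parity at each constraint), the matching is uniquely reconstructed: include the external edge $\{c^{(i)}, d^{(j)}\}$ iff $\sigma(v) = 0$, and within each triangle include the unique internal edge connecting the two positions assigned $1$ (if any). Summing over all perfect matchings therefore yields $\ZPM(G'') = \sum_\sigma w_\sigma = Z(\Omega)$. The main obstacle is bookkeeping: one must handle carefully the multi-edges that arise when a variable appears twice in the same constraint (so that $\{c^{(i)}, c^{(j)}\}$ appears both as a triangle edge of weight $f_c(\cdot)$ and as an external weight-$1$ edge), and one must verify that the ``unique external weight-$1$ edge at each $c^{(i)}$'' assertion holds in these degenerate configurations too.
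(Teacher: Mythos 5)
Your proposal is correct and follows essentially the same approach as the paper: you construct the same weight-preserving bijection between perfect matchings of $G''$ and even assignments of $\Omega$, using the same ``exactly zero or one internal edges per triangle'' observation and the same rule that a between-triangle edge $e_v$ encodes $\sigma(v)=0$. The paper presents the bijection starting from assignments rather than matchings, but the substance is identical; your closing caveat about multi-edges is sound and is implicitly handled because between-triangle and within-triangle edges are distinguished as separate parallel edges in the multigraph (the paper notes this just after Definition~\ref{dfn:Holant_edge-weighted_graph}), so ``the unique between-triangles edge $e_v$'' is well-defined even when it shares endpoints with a within-triangle edge.
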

\begin{proof}
 Let $\Omega=(V,C)$ and $G''=(V'',E'')$.
 Note that all assignments contributing a non-zero weight to the partition function $Z(\Omega)$ are even since the constraint functions have support only on inputs of even Hamming weight.
 
 We define a weight-preserving bijection between the sets
 \begin{align*}
  \cE &= \{\sigma : \sigma \text{ is an even assignment of } \Omega\} \quad\text{and} \\
  \cM &= \{M : M \text{ is a perfect matching of } G''\}.
 \end{align*}
 Consider $\sigma\in\cE$.
 For every variable $v\in V$, there is exactly one between-triangles edge $e_v=\{c^{(i)},d^{(j)}\}\in E''$ where $c,d\in C$ such that $v$ appears in the $i$-th position  of $\bv_c$ and in the $j$-th position  of $\bv_d$.
 Let $S_\sigma = \{e_v:\sigma(v) = 0\}$.
 This set is a matching because each vertex in $V''$ is incident on exactly one between-triangles edge, and $S_\sigma$ is a subset of the edges between triangles.
 Let $T_\sigma$ be the following subset of within-triangle edges:
 \[
  T_\sigma = \left\{ \{c^{(i)},c^{(j)}\} : c\in C \text{ with } \sigma(\bv_c[i])=\sigma(\bv_c[j])=1 \text{ and } 1\leq i<j\leq 3 \right\}.
 \]
 
 A portion of the set $M_\sigma=S_\sigma\cup T_\sigma$
 is illustrated in Figure~\ref{fig:assignment_matching}. We will show that
 $M_\sigma$ is a perfect matching.
 To see this, consider some vertex $c^{(i)}\in V''$, which corresponds to a variable $\bv_c[i]$.
 If $\sigma(\bv_c[i])=0$ then $c^{(i)}$ is matched by an edge in $S_\sigma$, this edge is unique in $S_\sigma$ as there is only one edge between triangles incident on any given vertex.
 Furthermore, $c^{(i)}$ cannot appear in any edges in $T_\sigma$.
 If $\sigma(\bv_c[i])=1$ then $c^{(i)}$ cannot appear in any edges in $S_\sigma$.
 Yet since $\sigma$ is even, there must be a unique $j \in\{1,2,3\}\setminus\{i\}$ such that $\sigma(\bv_c[j])=1$, and thus a unique edge in $T_\sigma$ which is incident on $c^{(i)}$.
 Thus $M_\sigma$ is indeed a perfect matching.

 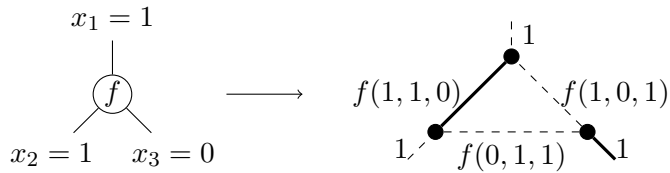
\begin{figure}
  \centering
  \begin{tikzpicture}
   \path (0,0) node[circle, inner sep=1pt, draw](F) {$f$} -- (0,1) node(x1) {$x_1=1$} -- (-.8,-.8) node(x2) {$x_2=1$} -- (.8,-.8) node(x3) {$x_3=0$};
   \draw (F) -- (x1);
   \draw (F) -- (x2);
   \draw (F) -- (x3);
   \draw[->] (1.5,0) -- (2.5,0);
   \path (5.25,1.1) node(e1) {} -- (5.25,.5) node[circle, inner sep=2pt, draw, fill](c1) {} -- (6.25,-.5) node[circle, inner sep=2pt, draw, fill](c3) {} -- (6.75,-1) node(e3) {} -- (3.75,-1) node(e2) {} -- (4.25,-.5) node[circle, inner sep=2pt, draw, fill](c2) {};
   \draw[dashed] (e1) -- node[right] {$1$} (c1);
   \draw[dashed] (e2) -- node[left] {$1$} (c2);
   \draw[very thick] (e3) -- node[right] {$1$} (c3);
   \draw[very thick] (c1) -- node[left] {$f(1,1,0)$} (c2);
   \draw[dashed] (c2) -- node[below] {$f(0,1,1)$} (c3);
   \draw[dashed] (c3) -- node[right] {$f(1,0,1)$} (c1);
  \end{tikzpicture}
  \caption{Suppose that the assignment~$\sigma$ 
  maps  $(x_1,x_2,x_3)\mapsto(1,1,0)$.
  The thickened edge labelled ``$1$'' is in $S_\sigma$ and the other thickened edge is in $T_\sigma$.
  The dashed edges are not in $M_\sigma$.}
  \label{fig:assignment_matching}
 \end{figure} 
  
 Conversely, given $M\in\cM$, define an assignment of $\Omega$ based on whether $e_v$, the unique between-triangles edge corresponding to variable $v$, is in $M$:
 \[
  \sigma_M(v) = \begin{cases} 0 &\text{if } e_v\in M, \\ 1 &\text{otherwise.} \end{cases}
 \]
 This definition fully specifies $\sigma_M$ and each assignment specified in this way is even.
 The latter property arises because for $M$ to be a perfect matching each triangle must satisfy one of the following two properties, cf.\ Figure~\ref{fig:triangle_matchings}:
 \begin{itemize}
  \item either, none of the within-triangle edges and all three adjacent between-triangle edges are in $M$, or
  \item one of the within-triangle edges and one of the adjacent between-triangle edges are in $M$.
 \end{itemize}
 A matching can contain at most one edge of a triangle, so this covers all cases.
 Thus, $\sigma_M\in\cE$.
 
 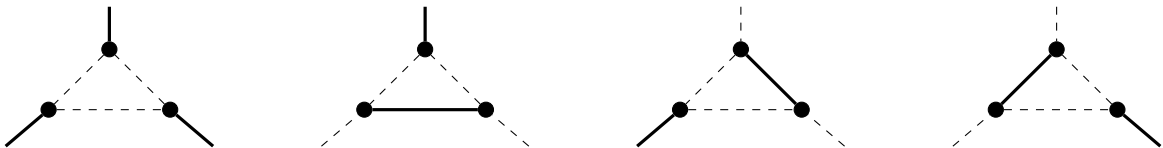
\begin{figure}
  \centering
  \begin{tikzpicture}
   \path (5.25,1.1) node(e1) {} -- (5.25,.4) node[circle, inner sep=2pt, draw, fill](c1) {} -- (6.05,-.4) node[circle, inner sep=2pt, draw, fill](c3) {} -- (6.75,-1) node(e3) {} -- (3.75,-1) node(e2) {} -- (4.45,-.4) node[circle, inner sep=2pt, draw, fill](c2) {};
   \draw[very thick] (e1) -- (c1);
   \draw[very thick] (e2) -- (c2);
   \draw[very thick] (e3) -- (c3);
   \draw[dashed] (c1) -- (c2);
   \draw[dashed] (c2) -- (c3);
   \draw[dashed] (c3) -- (c1);
  \end{tikzpicture}
  \qquad
  \begin{tikzpicture}
   \path (5.25,1.1) node(e1) {} -- (5.25,.4) node[circle, inner sep=2pt, draw, fill](c1) {} -- (6.05,-.4) node[circle, inner sep=2pt, draw, fill](c3) {} -- (6.75,-1) node(e3) {} -- (3.75,-1) node(e2) {} -- (4.45,-.4) node[circle, inner sep=2pt, draw, fill](c2) {};
   \draw[very thick] (e1) -- (c1);
   \draw[dashed] (e2) -- (c2);
   \draw[dashed] (e3) -- (c3);
   \draw[dashed] (c1) -- (c2);
   \draw[very thick] (c2) -- (c3);
   \draw[dashed] (c3) -- (c1);
  \end{tikzpicture}
  \qquad
  \begin{tikzpicture}
   \path (5.25,1.1) node(e1) {} -- (5.25,.4) node[circle, inner sep=2pt, draw, fill](c1) {} -- (6.05,-.4) node[circle, inner sep=2pt, draw, fill](c3) {} -- (6.75,-1) node(e3) {} -- (3.75,-1) node(e2) {} -- (4.45,-.4) node[circle, inner sep=2pt, draw, fill](c2) {};
   \draw[dashed] (e1) -- (c1);
   \draw[very thick] (e2) -- (c2);
   \draw[dashed] (e3) -- (c3);
   \draw[dashed] (c1) -- (c2);
   \draw[dashed] (c2) -- (c3);
   \draw[very thick] (c3) -- (c1);
  \end{tikzpicture}
  \qquad
  \begin{tikzpicture}
   \path (5.25,1.1) node(e1) {} -- (5.25,.4) node[circle, inner sep=2pt, draw, fill](c1) {} -- (6.05,-.4) node[circle, inner sep=2pt, draw, fill](c3) {} -- (6.75,-1) node(e3) {} -- (3.75,-1) node(e2) {} -- (4.45,-.4) node[circle, inner sep=2pt, draw, fill](c2) {};
   \draw[dashed] (e1) -- (c1);
   \draw[dashed] (e2) -- (c2);
   \draw[very thick] (e3) -- (c3);
   \draw[very thick] (c1) -- (c2);
   \draw[dashed] (c2) -- (c3);
   \draw[dashed] (c3) -- (c1);
  \end{tikzpicture}
  \caption{The different matchings for a triangle, where thick lines denote edges in the matching and dashed lines denote edges not in the matching.}
  \label{fig:triangle_matchings}
 \end{figure}
 
 It is straightforward to see that the maps $\sigma\mapsto M_\sigma$ and $M\mapsto\sigma_M$ are inverse to each other.
 Hence they define a bijection between $\cE$ and $\cM$.
 It remains to show this bijection is weight-preserving.
 
 Consider $\sigma\in\cE$, then $w(\sigma)=\prod_{c\in C} f_c(\sigma(\bv_c))$.
 Recall that all edges between triangles have weight 1, so
 \[
  w(M_\sigma) = \prod_{e\in M_\sigma} w(e) = \prod_{e\in T_\sigma} w(e).
 \]
 Observe from Definition~\ref{dfn:Holant_edge-weighted_graph} that for any $\{c^{(i)},c^{(j)}\}\in T_\sigma$, $w(\{c^{(i)},c^{(j)}\})=f_c(\sigma(\bv_c)$.
 Let $C'=\{c\in C: \abs{\sigma(\bv_c)}=2\}$, where $\abs{\cdot}$ denotes the Hamming weight, then $w(M) = \prod_{c\in C'} f_c(\sigma(\bv_c))$.
 But $\sigma$ is an even assignment and $f_c(0,0,0)=1$ for all $c$.
 Thus $w(\sigma)=w(M_\sigma)$.
\end{proof}

\begin{lem}\label{lem:Holant_weight}
 Given any $n$-variable holant instance $\Omega$ using functions in $\wte$, 
 let $G''$ be the corresponding edge-weighted multigraph according to Definition~\ref{dfn:Holant_edge-weighted_graph}.
 Then 
 $\ZNPM(G'') \leq  nZ(\Omega)+Z'(\Omega)$.
\end{lem}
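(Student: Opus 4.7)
I would bound $\ZNPM(G'')$ by classifying each near-perfect matching $M$ of $G''$ according to its pair $\{a, b\}$ of unmatched vertices, and mapping each class either to assignments of $\Omega$ (contributing to $nZ(\Omega)$) or to near-assignments of some $\Omega_{u,v}$ (contributing to $Z'(\Omega)$). Two main cases arise.

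\emph{Case A:} $\{a, b\} = e_w$ is a between-triangle edge for some variable $w \in V$. Then $M \cup \{e_w\}$ is a perfect matching of $G''$, and by Lemma~\ref{lem:Holant_edge-weighted_graph} it corresponds to an even assignment $\sigma$ of $\Omega$ with $\sigma(w) = 0$. The map $M \mapsto (\sigma, w)$ is a bijection between Case~A matchings and pairs $(\sigma, w)$ with $\sigma(w) = 0$, and since $w(e_w) = 1$ it is weight-preserving. Hence the total Case~A contribution is $\sum_\sigma |\{w : \sigma(w) = 0\}|\cdot w(\sigma) \leq n\, Z(\Omega)$.

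\emph{Case B:} $\{a, b\}$ is not a between-triangle edge. Writing $a = c^{(i)}$ and $b = d^{(j)}$, set $u = \bv_c[i]$ and $v = \bv_d[j]$; I would first verify $u \neq v$, since otherwise $\{a, b\} = e_u$. Because $a$ is unmatched and $e_u$ is the unique between-triangle edge incident to $a$, we have $e_u \notin M$, forcing the other occurrence of $u$ to be matched by a within-triangle edge; similarly for $v$. I would then map $M$ to a near-assignment $\tau$ of $\Omega_{u,v}$ by the rule: $\tau(w) = 0$ iff $e_w \in M$ for every non-split variable $w$; and $\tau(u') = 0$, $\tau(u'') = 1$, where $u'$ labels the occurrence of $u$ at the unmatched vertex $a$, so that the $\NEQ$-constraint on $(u', u'')$ is satisfied (analogously for $v$). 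A triangle-by-triangle analysis, mirroring the bijection in Lemma~\ref{lem:Holant_edge-weighted_graph}, shows that at each original constraint $c'$ the value $f_{c'}(\tau(\text{scope}_{c'}))$ equals the product of weights of the (at most one) within-triangle edge of $M$ inside triangle $c'$; since between-triangle edges have weight $1$ and the two $\NEQ$ factors evaluate to $1$, this gives $w(\tau) = w(M)$. For each fixed unordered pair $\{u, v\}$, the map $M \mapsto \tau$ is injective into near-assignments of $\Omega_{u, v}$, so $\sum_{M \in \text{Case B}} w(M) \leq \sum_{\{u,v\} \sse V} Z(\Omega_{u,v}) = Z'(\Omega)$. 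Combining with Case~A yields the claimed bound $\ZNPM(G'') \leq n\,Z(\Omega) + Z'(\Omega)$.

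The main obstacle will be the bookkeeping in Case~B: verifying that $\tau$ assigns consistent values to non-split variables across both of their occurrences (which requires tracing the forced matching structure in the triangles adjacent to $a$ and $b$, and checking each of the triangle states $(*,*,0)$, $(*,1,1)$, $(*,0,0)$ that can occur at a triangle containing an unmatched vertex) and carefully matching local edge weights to constraint-function values. Edge cases where a variable is repeated inside a single scope, making a between-triangle edge coincide with a within-triangle edge of the same triangle, should be handled separately but by the same basic argument.
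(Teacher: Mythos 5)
Your proposal is correct and takes essentially the same route as the paper: both decompose near-perfect matchings by their unmatched pair $\{a,b\}$, map one class of matchings to (assignment, variable) pairs and the other to near-assignments, and derive the bound by summing the two weight-preserving injections. One small point in your favour: your case split on whether $\{a,b\}$ is a \emph{between-triangle} edge $e_w$ is more precise than the paper's split on whether $\{a,b\}$ is an edge of $G''$ at all. The paper's Case~1, read literally, also covers the situation where $a$ and $b$ lie in the same triangle with distinct variables in those two positions, so that $\{a,b\}$ is only a within-triangle edge; there the paper's claim ``exactly one variable is not defined by \eqref{eq:toinvert}'' is false (there are two), and the intended argument would not go through. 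Your partition correctly routes that configuration through Case~B, where it is handled uniformly with the rest, so the proposal is if anything slightly tighter. What you have is still a sketch — the ``triangle-by-triangle analysis'' in Case~B that verifies weight preservation and consistency of $\tau$ on the unsplit variables, and the injectivity of the map back from $(\{u,v\},\tau)$ to $M$, need to be written out — but the plan, case structure and bounding are all sound and match the paper's strategy.
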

\begin{proof}
Let $\Omega=(V,C)$ where $n=|V|$.
We now define three sets.
Let 
\begin{align*}
\cM &= \{M \mid \text{$M$ is a near-perfect matching of $G''$ with $w(M)>0$}\}.\\
A &= \{(v,\sigma) \mid \text{$v\in V$ and $\sigma$  is an even assignment of $\Omega$ with $\sigma(v)=0$}\}, \text{and}\\
B &= \{(u,v,\sigma) \mid \text{$u\in V$ and $v\in V$ are distinct and
$\sigma$ is an assignment of $\Omega_{u,v}$ that is} \\
&\quad \quad \text{even at all constraints
except }   ((u',u''),\NEQ) \text{ and } ((v',v''),\NEQ) \}.
\end{align*}

Since the only assignments of $\Omega$ that contribute to $Z(\Omega)$ are
even assignments, we have $\sum_{(v,\sigma)\in A } w_\sigma \leq n Z(\Omega)$.
Similarly, since the only assignments 
of $\Omega_{u,v}$ that contribute to $Z(\Omega_{u,v})$ are even at all ternary constraints,
$\sum_{(u,v,\sigma)\in B} w_\sigma = Z'(\Omega)$.
We will define a weight-preserving injection~$\tau$ from $\cM$ into $A \cup B$.

Consider $M\in \cM$.
Every vertex of $G''$ is incident to exactly one edge in $M$
except for two vertices, which are not incident to any edges in $M$.
Say that these two vertices are $c_1^{(i_1)}$ and $c_2^{(i_2)}$ for
some $c_1,c_2\in C$ and $i_1,i_2 \in \{1,2,3\}$.
Note that $c_1$ might coincide with $c_2$
and $i_1$ might coincide with $i_2$, but
$c_1^{(i_1)} \neq c_2^{(i_2)}$.

Now, for every variable
$v\in V$ there is exactly one
between-triangle edge, say $\{c_3^{(i_3)},c_4^{(i_4)}\}$,
such that $v$ occurs in position
position~$i_3$ of~$c_3$
and position~$i_4$ of   $c_4$.
The idea will be to define $\sigma_M(v)\in \{0,1\}$
based on whether this edge is in~$M$.

First, if $\{c_3^{(i_3)},c_4^{(i_4)}\}$ 
is disjoint from $\{c_1^{(i_1)},c_2^{(i_2)}\}$,
we make the following definition, which is similar to the construction in the proof of 
Lemma~\ref{lem:Holant_edge-weighted_graph}:
\begin{equation}\label{eq:toinvert}
\sigma_M(v) = 
\begin{cases}
0, & \text{if $\{c_3^{(i_3)},c_4^{(i_4)}\}\in M$,}\\
1, & \text{otherwise.} \end{cases}\end{equation}

Now recall that $c_1^{(i_1)}$ and $c_2^{(i_2)}$
are not matched in $M$.
There are two cases to consider, 
depending on whether 
$\{c_1^{(i_1)},c_2^{(i_2)}\}$ is an edge of $G''$ or not. 
In each case, we define  $\tau(M)$ and argue that it is weight-preserving.
Later, we will argue that $\tau$ is an injection.

\textbf{Case 1}. Suppose that $\{c_1^{(i_1)},c_2^{(i_2)}\}$ is an edge of $G''$.
 In this case, there is exactly one variable $v$ for
 which $\sigma_M(v)$ is not  defined by~\eqref{eq:toinvert}
 and this variable $v$   is in  position~$i_1$ of~$c_1$ and position~$i_2$ of~$c_2$.
So define $\sigma_M(v) = 0$
and define $\tau(M) = (v,\sigma_M)$.
The fact that $w(M) = w_{\sigma_M}$
is similar to the argument that we already made in 
the proof of Lemma~\ref{lem:Holant_edge-weighted_graph}.
Let $M'$ be the perfect matching of $G''$ consisting of
$M$ and the edge $\{ c_1^{(i_1)}, c_2^{(i_2)}\}$.
Since $\{c_1^{(i_1)},c_2^{(i_2)}\}$ is a between-triangle edge (with weight~$1$), we
have $w(M')=w(M)$.
Since $\sigma_M(v)=0$, 
the assignment $\sigma_M$ coincides with the assignment $\sigma_{M'}$
constructed in the proof of Lemma~\ref{lem:Holant_edge-weighted_graph} so
$w_{\sigma_M} = w_{\sigma_{M'}}$.
But we have already argued, in the proof of Lemma~\ref{lem:Holant_edge-weighted_graph}
that $\sigma_{M'} = w(M')$. 
So we have proved that $w(M)=w_{\sigma_M}$. 
Since $w(M)>0$, this ensures that 
 $\sigma_M$ is an even assignment, so 
 $\tau(M)\in A$. 
 
\textbf{Case 2}. Suppose that $\{c_1^{(i_1)},c_2^{(i_2)}\}$ is not an edge of $G''$.
Then there are exactly two variables $u_1$ and $u_2$ for which 
$\sigma_M(u_1)$ and $\sigma_M(u_2)$  are not defined by~\eqref{eq:toinvert}.
Variable $u_1$ appears in position $i_1$ of $c_1$  (and somewhere else). 
Variable $u_2$ appears in position $i_2$ of $c_2$ (and somewhere else). 
Consider the instance $\Omega_{u_1,u_2}$ constructed as in Definition~\ref{dfn:near-assignment}. 
Suppose that the variables replacing $u_1$ in $\Omega_{u_1,u_2}$ are $u_1'$ and $u_1''$
with $u_1'$ appearing in $c_1^{(i_1)}$ 
and
that the  variables replacing $u_2$ in $\Omega_{u_1,u_2}$ are $u_2'$ and $u_2''$ 
with $u_2'$ appearing in $c_2^{(i_2)}$.
Then set $\sigma_{M}(u_1')=\sigma_{M}(u_2')=0$ and $\sigma_{M}(u_1'')=\sigma_{M}(u_2'')=1$.
This gives an assignment for $\Omega_{u_1,u_2}$ which satisfies the disequality constraints and which has an even number of spin-1 variables in each ternary constraint.
So define $\tau(M) =   (u_1,u_2,\sigma_M)$ and note that $\tau(M)\in B$
and   
(using the same arguments as in the proof of Lemma~\ref{lem:Holant_edge-weighted_graph}) 
that
$w(M) = \sigma_M$.
   
To conclude the proof, we must argue that $\tau$ is an injection.
This is straightforward. From $(v,\sigma)\in A$
there is a unique edge 
$\{c_1^{(i_1)},c_2^{(i_2)}\}$ of $G''$ corresponding to~$v$.
Leave this out of $M$ and recover the rest of 
the intersection of $M$ 
and the between-triangle edges of~$G''$
using Equation~\eqref{eq:toinvert}.
There is a unique extension to  the edges within triangles
which gives a near-perfect matching of $G''$ 
where $c_1^{(i_1)}$ and $c_2^{(i_2)}$ are unmatched.
Similarly,
given $(u_1,u_2,\sigma) \in B$ 
it is easy to identify the between-triangle edges of $G''$ corresponding to $u_1$ and $u_2$.
Leave $u_1$ and $u_2$ unmatched in~$M$ and recover the rest of
the between-triangle edges of~$M$ from~\eqref{eq:toinvert}.
Again, there is a unique extension to edges within triangles.
 \end{proof}

\begin{obs}\label{lem:weights_multigraph}
Let $G=(V,E)$ be an edge-weighted multigraph 
in which each edge $e\in E$ has a non-negative rational weight $w(e)$.
Let $d$ be the least common   denominator of the positive edge weights of~$G$.
Let $G'=(V,E')$ be the \emph{unweighted} multigraph 
defined as follows. 
For each $\{u,v\} \subseteq V$, let $\{e_1,\ldots,e_{k_{u,v}}\}$ be the set of edges from $u$ to $v$ in $E$.
The number of edges $\{u,v\}$ in 
$E'$ is defined to be $d \sum_{j=1}^{k_{u,v}} w(e_j)$.
Similarly, for each $v\in V$, let $\{e_1,\ldots,e_{k_v}\}$ be the set of self-loops on $v$ in $E$.
The number of self-loops on $v$ in $E'$ is defined to be 
$d \sum_{j=1}^{k_v} w(e_j)$.
Then   
$\ZPM(G') = d^{\abs{V}/2} \ZPM(G)$
and
$\ZNPM(G') = d^{\abs{V}/2-1} \ZNPM(G)$ 
\end{obs}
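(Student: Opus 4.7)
Proof plan: The statement is essentially a bookkeeping exercise, so the approach is to view $G'$ as obtained from $G$ by a natural blow-up operation and then count matchings via this correspondence.

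First, I would reinterpret the construction of $G'$ as follows: replace each edge $e$ of $G$ with positive weight by $d \cdot w(e)$ unweighted parallel copies, and discard any edge with $w(e)=0$. This is equivalent to the construction in the statement because, for each pair $\{u,v\}$, the total number of edges from $u$ to $v$ produced in this way is $\sum_{j=1}^{k_{u,v}} d \cdot w(e_j) = d \sum_{j=1}^{k_{u,v}} w(e_j)$, and analogously for self-loops.

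Next, consider the natural projection $\pi$ that sends each perfect matching of $G'$ to the corresponding set of underlying edges in $G$. The image of $\pi$ lies in the set of perfect matchings $M^* = \{e_1,\ldots,e_{\abs{V}/2}\}$ of $G$ with $w(M^*) > 0$, since each edge of the image arose as a copy in $G'$ and copies exist only for edges of positive weight. Conversely, given any perfect matching $M^*$ of $G$ of positive weight, the preimages under $\pi$ are obtained by choosing, independently for each $e_i$, one of the $d \cdot w(e_i)$ parallel copies in $G'$; this gives $\prod_{i=1}^{\abs{V}/2} d \cdot w(e_i) = d^{\abs{V}/2}\, w(M^*)$ perfect matchings of $G'$. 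Summing over all perfect matchings of $G$ (those of weight zero contributing nothing to either side) yields
\[
\ZPM(G') = \sum_{M^*} d^{\abs{V}/2}\, w(M^*) = d^{\abs{V}/2}\, \ZPM(G).
\]

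The same reasoning applies to near-perfect matchings, which use $\abs{V}/2 - 1$ edges instead of $\abs{V}/2$, producing the stated factor $d^{\abs{V}/2 - 1}$. There is no real obstacle; the only point requiring care is that zero-weight edges of $G$ have no copies in $G'$, so the correspondence is precisely between matchings of $G'$ and positive-weight matchings of $G$, with multiplicity given by the product above.
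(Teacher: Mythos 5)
Your proof is correct and takes essentially the same approach as the paper. The paper factors the argument through an intermediate weighted graph $H$ (same edges as $G$ but with weights scaled by $d$), notes that $\ZPM(H)=d^{n/2}\ZPM(G)$ since a perfect matching has $n/2$ edges, and then observes that replacing integer edge weights by that many unweighted parallel copies preserves the matching count; you simply combine these two steps into one projection-with-multiplicity argument, which makes the ``easy to see'' claim in the paper explicit but is the same underlying bookkeeping.
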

\begin{proof}
Let $n = \abs{V}$.
Let $H$ be an edge-weighted multigraph with the same
vertices and edges as $G$ 
except that the weight of each edge $e$ is $d w(e)$.
Since each perfect matching of $G$ has $n/2$ edges
and each near-perfect matching has $n/2-1$ edges,
 $\ZPM(H) = d^{ n/2} \ZPM(G)$
and
$\ZNPM(H) = d^{ n/2-1} \ZNPM(G)$. 
Note that all of the edge-weights of $H$ are non-negative integers.
Then it is easy to see that 
$\ZPM(G') = \ZPM(H)$ and $\ZNPM(G') = \ZNPM(H)$.
\end{proof}

Observation~\ref{lem:weights_multigraph} allows us to use the following result, where $M_n(G)$ denotes the number of perfect matchings of a $2n$-vertex multigraph $G$, and $M_{n-1}(G)$ is the number of near-perfect matchings of $G$.
While the result was originally stated for graphs, its proof also applies to   multigraphs.  

\begin{lem}[{\cite[Corollary 3.7]{jerrum_approximating_1989}}]\label{lem:poly_approx}
 Let $q$ be any fixed polynomial.
 There exists an FPRAS for $\abs{M_n(G)}$ 
 when the input is restricted to be a
  $2n$-vertex multigraph   $G$  satisfying $\abs{M_{n-1}(G)} \leq q(n) \abs{M_n(G)}$.
\end{lem}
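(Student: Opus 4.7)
Since this is a quoted result of Jerrum and Sinclair, the plan is to sketch how one would reconstruct their argument rather than invent a new proof. The strategy has two well-known ingredients: first, build a Markov chain that samples perfect matchings almost uniformly whenever $|M_{n-1}(G)|/|M_n(G)|$ is polynomially bounded, and then use self-reducibility to turn the sampler into an approximate counter.

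I would first define a Markov chain on the state space $\Omega_M = \cM_{n} \cup \cM_{n-1}$, where $\cM_n$ is the set of perfect matchings of $G$ and $\cM_{n-1}$ is the set of near-perfect matchings (matchings missing exactly one pair of vertices). For multigraphs, one treats parallel edges as distinguishable. The transitions from a state $M$ are: pick a uniformly random edge $e=\{u,v\}$ of~$G$; if $M\in\cM_n$ and $e\in M$, delete $e$ to move into $\cM_{n-1}$; if $M\in\cM_{n-1}$ with unmatched vertices $x,y$ and $e=\{x,y\}$, add $e$ to move into $\cM_n$; if $M\in\cM_{n-1}$ and $u$ is an unmatched endpoint while $v$ is matched (to $w$, say), then exchange to obtain $M\setminus\{v,w\}\cup\{u,v\}$; otherwise stay put. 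Add a holding probability of $\tfrac12$ at every state to make the chain lazy and aperiodic. A short check shows the chain is symmetric, hence its stationary distribution is uniform on $\Omega_M$.

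The second step is a canonical paths argument à la Jerrum--Sinclair to bound the mixing time by a polynomial in $n$ and $\log|\Omega_M|$. For any pair $(I,F)$ of perfect/near-perfect matchings, the canonical path is built from the symmetric difference $I\oplus F$, which decomposes into alternating paths and cycles; one unwinds these one at a time using the ``exchange'' transitions above. The congestion bound requires an injective encoding of paths through each transition, and this is where the assumption enters: the encoding maps a transition plus extra state information into $\Omega_M$, and the polynomial ratio $|\cM_{n-1}|\le q(n)|\cM_n|$ guarantees that the stationary mass is not concentrated on the near-perfect part, so sampling from the stationary distribution delivers a perfect matching with probability at least $1/(q(n)+1)$. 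This is the main obstacle technically, and the heart of \cite{jerrum_approximating_1989}.

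Finally, to go from (almost) uniform sampling of perfect matchings to approximate counting, I would use the standard self-reducibility reduction: fix an ordering of the edges and write $|M_n(G)| = \prod_i r_i$, where $r_i$ is the ratio of the number of perfect matchings of a subgraph $G_i$ to that of $G_{i-1}$, obtained by either deleting an edge or contracting it. Each ratio is estimated by sampling, and a union bound over $O(|E|)$ ratios, together with Chernoff bounds, yields the FPRAS. One must check that the hypothesis $|M_{n-1}|\le q(n)|M_n|$ is inherited (or can be re-established with a slightly larger polynomial) at each step of the self-reduction; this is routine but needs care because edge deletion can increase the ratio. Overall, the mixing-time analysis via canonical paths is the only step I expect to be genuinely hard; everything else is bookkeeping.
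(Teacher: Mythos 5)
The paper does not prove this lemma; it is quoted directly from Jerrum and Sinclair~\cite{jerrum_approximating_1989} (Corollary~3.7), and the paper's only remark is that the original proof for graphs goes through unchanged for multigraphs. Your sketch is a faithful high-level reconstruction of the Jerrum--Sinclair argument: the lazy symmetric Markov chain on $\cM_n \cup \cM_{n-1}$, the canonical-paths congestion bound via symmetric differences (this, not the stationary-mass ratio, is where $\abs{M_{n-1}(G)}\le q(n)\abs{M_n(G)}$ does its real work), and the self-reducibility reduction from sampling to counting, including the legitimate concern about propagating the ratio bound through the self-reduction. Since the paper supplies no proof of its own, there is nothing to compare beyond noting that your outline matches the cited source; the one caveat worth flagging is that you should make the multigraph adaptation explicit (parallel edges as distinguishable labelled edges), which you do correctly.
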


\begin{thm}\label{thm:Fourier_FPRAS}
 Suppose that $\cF$ is a finite subset of $\SDP_3$.
 Then $\NCSP(\cF)$ has an FPRAS.
\end{thm}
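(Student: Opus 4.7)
The plan is to assemble the lemmas developed so far into a chain of reductions that ends at an instance of the Jerrum--Sinclair FPRAS for perfect matchings.

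First, I would use Lemma~\ref{lem:ncsp_Holant} to reduce $\NCSP(\cF)$ to $\hol(\wh{\cF},\oplus_3)$. Since $\cF\sse\SDP_3\sse\cP$ and the Fourier transform is (up to a constant) self-inverse, each $\wh{f}$ for $f\in\cF$ is nonnegative, so $\wh{\cF}\cup\{\oplus_3\}\sse\cP$; moreover, by the characterisation of $\SDP$, $\wh{f}$ vanishes on odd-weight inputs, and we may discard the trivial case $\wh{f}(0,0,0)=0$ (which forces $f\equiv 0$). Scaling each $\wh{f}$ by $1/\wh{f}(0,0,0)$ yields a function in $\wte$; the resulting scalar factors can be absorbed by any AP-reduction, so it suffices to produce an FPRAS for $\hol(\cG,\oplus_3)$ for some finite $\cG\sse\wte\cap\cP$.

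Next, for an instance $\Omega$ of this holant problem on $n$ variables, I would construct the edge-weighted multigraph $G''$ of Definition~\ref{dfn:Holant_edge-weighted_graph} and invoke Lemma~\ref{lem:Holant_edge-weighted_graph} to obtain $\ZPM(G'')=Z(\Omega)$. Since all constraints now lie in $\cP$, Lemma~\ref{lem:Holant_Fourier} gives $Z'(\Omega)\leq 2n^2 Z(\Omega)$, and then Lemma~\ref{lem:Holant_weight} yields
\[
 \ZNPM(G'') \leq nZ(\Omega) + Z'(\Omega) \leq (n + 2n^2)\,\ZPM(G'').
\]
Thus the near-perfect to perfect matching ratio in $G''$ is bounded by a polynomial in $n$, and hence in the number of vertices of $G''$ (which is $3|C|=\Theta(n)$ because every variable occurs twice).

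Finally, I would convert $G''$ into an unweighted multigraph $G'$ using Observation~\ref{lem:weights_multigraph}; the ratio $\ZNPM(G')/\ZPM(G')$ equals $(1/d)\,\ZNPM(G'')/\ZPM(G'')$ and hence is still polynomially bounded in the vertex count of $G'$. This lets me apply the Jerrum--Sinclair FPRAS of Lemma~\ref{lem:poly_approx} to estimate $\ZPM(G')$, from which $\ZPM(G'')=Z(\Omega)$ is recovered by the scaling factor $d^{|V(G'')|/2}$ of Observation~\ref{lem:weights_multigraph}. Composing all these approximation-preserving steps gives an FPRAS for the original $\NCSP(\cF)$.

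The main step to handle carefully is verifying that the combined reduction is genuinely approximation-preserving: the scalar factors picked up when normalising $\wh{f}$ to lie in $\wte$, the factor $d^{|V|/2}$ coming from the conversion to an unweighted multigraph, and the $Z'(\Omega)\leq 2n^2 Z(\Omega)$ bound all need to fit inside the FPRAS bookkeeping. Everything else is essentially a reassembly of the lemmas, with the Jerrum--Sinclair result providing the one nontrivial algorithmic ingredient; the conceptual content is that self-duality with a nonnegative Fourier transform is exactly what is needed to force the near-perfect/perfect matching ratio in the derived graph to be polynomially bounded.
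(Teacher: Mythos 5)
Your proposal is correct and follows essentially the same chain of reductions as the paper's proof: Lemma~\ref{lem:ncsp_Holant} to pass to $\hol(\wh{\cF},\oplus_3)$, normalising to $\wte$, Definition~\ref{dfn:Holant_edge-weighted_graph} with Lemma~\ref{lem:Holant_edge-weighted_graph}, the ratio bound from Lemmas~\ref{lem:Holant_Fourier} and \ref{lem:Holant_weight}, Observation~\ref{lem:weights_multigraph}, and finally Jerrum--Sinclair via Lemma~\ref{lem:poly_approx}. The only small details you leave implicit that the paper makes explicit are that an instance containing the all-zero function is detected and dispatched up front, and that the least common denominator $d$ in the weighted-to-unweighted conversion is a fixed constant (since $\cF$ is finite) so the blow-up stays polynomial.
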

\begin{proof}
 If $\cF$ contains the all-zero function $f_0$, then any instance $\Omega$ of $\NCSP(\cF)$ which contains a constraint using $f_0$ satisfies $Z(\Omega)=0$.
 The property of whether $\Omega$ contains a constraint using $f_0$ can be checked in polynomial time.
 Thus, $\NCSP(\cF)\leq_{AP}\NCSP(\cF\setminus\{f_0\})$.
 In other words, it suffices to consider sets $\cF$ that do not contain the all-zero function.

 By Lemma~\ref{lem:ncsp_Holant}, we have $\NCSP(\cF)\leq_{AP}\hol(\wh{\cF},\op_3)$, where $\wh{\cF}=\{\wh{f}:f\in\cF\}$.
 For any $f\in\cF$, since $f$ is not the all-zero function and all values are nonnegative,
 \[
  c_f := \wh{f}(0,0,0) = \frac{1}{8}\sum_{x,y,z\in\{0,1\}} f(x,y,z)>0.
 \]
 Let $f''(x,y,z) = c_f^{-1} \wh{f}(x,y,z)$ be a normalised version of the Fourier transform of $f$, and let $\cF''=\{f'': f\in\cF\}$.
 Then $\hol(\wh{\cF},\op_3)=_{AP}\hol(\cF'',\op_3)$.
 Now, $\cF\sse\SDP_3$, so by Observation~\ref{obs:WtEven3}, $\cF''\cup\{\oplus_3\}\sse\wte$. Hence, given any instance $\Omega$ of the problem $\hol(\cF'',\op_3)$ 
 we can 
 use  Definition~\ref{dfn:Holant_edge-weighted_graph} to
 construct an edge-weighted multigraph $G''$ so
 that, by Lemma~\ref{lem:Holant_edge-weighted_graph},
 $Z(\Omega) = \ZPM(G'')$. 
 Let $n$ be the number of variables of $\Omega$.
 To avoid trivialities, we assume $n\geq 1$.
  
 By Lemma~\ref{lem:Holant_weight},
 $\ZNPM(G'') \leq  n Z(\Omega)+Z'(\Omega)$
 and by Lemma~\ref{lem:Holant_Fourier},
 $Z'(\Omega)  \leq 2 n^2 Z(\Omega)$
 so $\ZNPM(G'') \leq   3n^2  \ZPM(G'')$.
 
 Using Observation~\ref{lem:weights_multigraph},
 it is easy to define an unweighted multigraph $G'$ 
 such that 
 $\ZPM(G') = d^{\abs{V(G'')}/2} \ZPM(G'')$ and $\ZNPM(G') = d^{\abs{V(G'')}/2-1} \ZNPM(G'')$, where
 $d$ is the least common denominator of the positive edge weights of $G''$.
 
 The FPRAS for computing $\ZPM(\cdot)$ 
 from Lemma~\ref{lem:poly_approx} 
 can be used to approximate   $\ZPM(G')$  which gives an approximation to 
 $\ZPM(G'')=Z(\Omega)$.
 
 As $\NCSP(\cF)\leq_{AP}\hol(\cF'',\op_3)$, this implies the existence of an FPRAS for $\NCSP(\cF)$.
\end{proof}

\begin{cor}\label{cor:Fourier_FPRAS}
 For all functions $f\in\cP\cap\cB_2$, the problem $\NCSP(f)$ has an FPRAS.
\end{cor}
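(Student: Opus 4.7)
The plan is to combine three previously established results: Lemma~\ref{lem:f_SDP}, Lemma~\ref{lem:f_SDP_reduction}, and Theorem~\ref{thm:Fourier_FPRAS}. The strategy is to lift the binary function $f$ to a ternary function $f'(x,y,z) := f(x \oplus z, y \oplus z)$ where the FPRAS machinery of Theorem~\ref{thm:Fourier_FPRAS} applies, and then reduce $\NCSP(f)$ back to $\NCSP(f')$.

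Concretely, I would first observe that for $f \in \cP \cap \cB_2$, Lemma~\ref{lem:f_SDP} already tells us that the ternary function $f'(x,y,z) = f(x \oplus z, y \oplus z)$ lies in $\SDP_3$: its Fourier transform factors as $\wh{f'}(x,y,z) = \wh{f}(x,y) \cdot \oplus_3(x,y,z)$, which is nonnegative (because $f \in \cP$) and supported on inputs of even Hamming weight (because of the $\oplus_3$ factor). Then Theorem~\ref{thm:Fourier_FPRAS}, applied to the singleton $\cF = \{f'\} \subseteq \SDP_3$, immediately yields an FPRAS for $\NCSP(f')$.

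Second, I would invoke Lemma~\ref{lem:f_SDP_reduction}, which gives an explicit AP-reduction $\NCSP(f) \leq_{AP} \NCSP(f')$: each instance $\Omega$ of $\NCSP(f)$ is transformed into an instance of $\NCSP(f')$ by adding a single auxiliary variable $y$ and replacing each binary constraint $((x_i, x_j), f)$ by the ternary constraint $((x_i, x_j, y), f')$, yielding a partition function equal to $2 Z(\Omega)$. Composing this AP-reduction with the FPRAS for $\NCSP(f')$ gives the desired FPRAS for $\NCSP(f)$.

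There is essentially no obstacle here, since the entire technical content, namely the chain of reductions through holographic transformation to perfect-matching counting and the near-perfect-matching bound via Lemma~\ref{lem:Holant_Fourier}, has already been carried out to prove Theorem~\ref{thm:Fourier_FPRAS}; the corollary is just the specialisation to binary functions by way of the symmetrisation trick. The only thing to check carefully is that no special treatment is needed when $f$ is the all-zero function, but this is the trivial case where $Z(\Omega) = 0$ whenever $\Omega$ contains any constraint and is handled exactly as in the first paragraph of the proof of Theorem~\ref{thm:Fourier_FPRAS}.
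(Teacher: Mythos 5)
Your proposal is correct and takes exactly the same route as the paper: apply Lemma~\ref{lem:f_SDP_reduction} to get $\NCSP(f)\leq_{AP}\NCSP(f')$, use Lemma~\ref{lem:f_SDP} to place $f'$ in $\SDP_3$, and then invoke Theorem~\ref{thm:Fourier_FPRAS}. The extra remark about the all-zero function is unnecessary but harmless, since that case is already absorbed into the proof of Theorem~\ref{thm:Fourier_FPRAS}.
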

\begin{proof}
 Let $f'(x, y, z) := f(x \oplus z, y \oplus z)$, then $\NCSP(f)\leq_{AP}\NCSP(f')$ by Lemma~\ref{lem:f_SDP_reduction}.
 Furthermore, $f'\in\SDP_3$ by Lemma~\ref{lem:f_SDP}, so $\NCSP(f')$ has an FPRAS by Theorem~\ref{thm:Fourier_FPRAS}.
 Therefore, there exists an FPRAS for $\NCSP(f)$.
\end{proof}

\subsection{Non-monotone nontrivial non-lsm functions}

The remaining case in the statement of Theorem~\ref{thm:two-spin} is that of a nontrivial non-lsm non-monotone function.

\begin{lem}\label{lem:nnn_NP}
 If $f\in\cB_2$ is nontrivial and non-lsm, and both $f$ and $\bar{f}$ are non-monotone, then $\NCSP(f)$ does not have an FPRAS unless $\NP=\RP$.
\end{lem}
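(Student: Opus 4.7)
The plan is to show, in every configuration consistent with the hypothesis, that $\NCSP(f)$ either contains as a sub-problem an anti-ferromagnetic symmetric two-spin system known to be hard (so that Theorem~\ref{thm:non-uniqueness_hard} applies), or else admits both a permissive strictly increasing unary and a permissive strictly decreasing unary in $\ang{f}_{\om,p}$, allowing Theorem~\ref{thm:up-down}\eqref{p:non-lsm_hard} to close the argument via Lemma~\ref{lem:clone_csp}. Note that Theorem~\ref{thm:up-down}\eqref{p:non-lsm_hard} applies because $f\notin\LSM$ by hypothesis and $f\notin\cN$ (non-trivial binary functions are not product-type). Write $f=\binom{a\;b}{c\;d}$, so that $ad<bc$ and $b,c>0$.

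If $\wh{f}_{01}\wh{f}_{10}<0$, Lemma~\ref{lem:make_up_down} directly produces both permissive unaries in $\ang{f}$ and the case is done. Otherwise, after possibly replacing $f$ by $\bar f$ via Observation~\ref{obs:bit_flip_CSP}, I may assume $\wh{f}_{01},\wh{f}_{10}\geq 0$. I first establish $\wh{f}_{11}<0$: if instead $\wh{f}_{11}\geq 0$, then $f\in\cP$, and the three Fourier non-negativity inequalities combined with $ad<bc$ force $a\geq\max\{b,c,d\}$ and $b,c\geq d$ (otherwise $ad\geq bc$), so $f$ would be monotone decreasing and $\bar f$ monotone increasing, contradicting the hypothesis. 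From $\wh{f}_{01},\wh{f}_{10}\geq 0$ and $\wh{f}_{11}<0$ I then deduce $d\leq a<b$ and $d\leq a<c$; and $a=d$ forces $b=c$.

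If $b=c$ (which includes the Ising sub-case $a=d$), then $f$ is symmetric and, restricting to simple-graph instances, $\NCSP(f)$ contains as a sub-problem the symmetric anti-ferromagnetic two-spin system with parameters $(\beta,\gamma,\lambda)=(d/b,a/b,1)$ (the trivial field $\lambda^{1-x}\equiv 1$ has no effect on the partition function and is dropped). Since $a<b$ yields $\gamma<1$, Lemma~\ref{lem:uniqueness_conditions}(2) places the parameters in the non-uniqueness region of the infinite $d'$-regular tree for all sufficiently large $d'$, and Theorem~\ref{thm:non-uniqueness_hard} gives no FPRAS unless $\NP=\RP$.

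If instead $b\neq c$, then $a>d$ is strict, so at least one of $\wh{f}_{01},\wh{f}_{10}$ is strictly positive and the corresponding marginal $\sum_y f(x,y)$ or $\sum_y f(y,x)$ is a strictly decreasing permissive unary in $\ang{f}$. For the increasing direction, I consider the pps-formulas $g_k(x):=\sum_y f(y,x)^k$ and $h_k(x):=\sum_y f(x,y)^k$ for $k\geq 1$, with values $(a^k+c^k,b^k+d^k)$ and $(a^k+b^k,c^k+d^k)$ respectively. The inequalities $a<b,c$ and $d<b,c$ make $g_k(0)/g_k(1)\to(c/b)^k$ and $h_k(0)/h_k(1)\to(b/c)^k$ as $k\to\infty$; since $b\neq c$, one of these ratios tends to $0$, so for sufficiently large $k$ the corresponding unary is strictly increasing and permissive, and Theorem~\ref{thm:up-down}\eqref{p:non-lsm_hard} finishes the argument. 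The main obstacle is the symmetric sub-case $b=c$: every pps-formula inherits the variable-swap symmetry (and, when $a=d$, the bit-flip symmetry) of $f$, so no $g_k$-style polynomial trick can supply an increasing unary in that sub-case, and the detour through the two-spin framework—where the inequality $a<b$ precisely delivers $\gamma<1$—is essential.
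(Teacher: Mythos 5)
Your approach diverges genuinely from the paper's. The paper normalizes so that $\wh{f}_{01},\wh{f}_{10}\le 0$, uses non-monotonicity of $f$ (after swapping variables if needed) to rule out $b<a$ via the non-lsm inequality, deduces $d<b$ and $a<c$, then splits into the antiferromagnetic Ising subcase (citing~\cite{goldberg_computational_2003}) versus the strict subcase (a marginal gives $\upf$, pinning with the limit $\delta_1\in\ang{f,\upf}_{\om,p}$ gives $\downf=f(\cdot,1)$). You normalize the other way, prove $\wh{f}_{11}<0$ by a $\cP$/monotonicity contradiction, then split on $b=c$ versus $b\neq c$, using the Li--Lu--Yin non-uniqueness framework in the symmetric case and power marginals $g_k,h_k$ in the asymmetric case. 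The $\wh{f}_{11}<0$ step is a nice observation not in the paper.

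However, there is a concrete gap. The claim that $\wh{f}_{01},\wh{f}_{10}\geq 0$ and $\wh{f}_{11}<0$ already yield $d\leq a<b$ \emph{and} $d\leq a<c$ is false. Take $f=\binar{3}{4}{2}{1}$: it is nontrivial and non-lsm ($ad=3<8=bc$), both $f$ and $\bar f$ are non-monotone (e.g.\ $f(0,1)=4>1=f(1,1)$ and $\bar f(1,0)=4>3=\bar f(1,1)$), and the Fourier signs are $\wh{f}_{01}=0$, $\wh{f}_{10}>0$, $\wh{f}_{11}<0$; yet $a=3>2=c$. What the three Fourier inequalities actually yield is $a\geq d$, $b>d$, $c>d$; the non-monotonicity of $\bar{f}$, which you stop invoking after establishing $\wh{f}_{11}<0$, gives $a<b$ \emph{or} $a<c$, not the conjunction. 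The error propagates to the stated limits: $g_k(0)/g_k(1)\to(c/b)^k$ and $h_k(0)/h_k(1)\to(b/c)^k$ both require $a\leq b$ and $a\leq c$. In the example, $g_k(0)/g_k(1)=(3^k+2^k)/(4^k+1)$ behaves like $(3/4)^k$, not $(c/b)^k=(1/2)^k$. The correct behaviour is $(\max\{a,c\}/b)^k$ and $(\max\{a,b\}/c)^k$.

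The conclusion is salvageable: when $b=c$ the disjunction collapses to $a<b$, so $\gamma<1$ as required; when $b\neq c$, one can check in each branch of the disjunction $a<b$ or $a<c$ (using $b,c>d$) that $\max\{a,c\}<b$ or $\max\{a,b\}<c$ holds, so one of $g_k,h_k$ becomes eventually strictly increasing and permissive. But you must carry the non-monotonicity hypothesis through this part of the argument and rewrite the limit computations accordingly; as written the step is wrong.

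A lesser point: in the symmetric case you apply Theorem~\ref{thm:non-uniqueness_hard}, which requires the parameters to lie in the \emph{interior} of the non-uniqueness region, via Lemma~\ref{lem:uniqueness_conditions}(2), which asserts only that ``uniqueness does not hold.'' The paper itself uses this lemma similarly but with a free parameter $k$ to push $\lambda$ large; you fix $\lambda=1$, so it would be worth verifying against~\cite{Li11:correlation} that $\gamma<1$, $\lambda=1$ is strictly interior for large enough $d$. The paper sidesteps this entirely in the Ising subcase by citing~\cite{goldberg_computational_2003}.
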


\begin{proof}
 Suppose $f$ is nontrivial and non-lsm, then $ad<bc$ by Observation~\ref{obs:binary_lsm} and $a,d$ are not both zero.
 
 If $\wh{f}_{01}\wh{f}_{10}<0$, then both $\ang{f}\cap\cB^{<}_1$ and $\ang{f}\cap\cB^{>}_1$ are non-empty by Lemma~\ref{lem:make_up_down}.
 Hardness then follows by Theorem~\ref{thm:up-down} and Lemma~\ref{lem:clone_csp}, noting that the reduction used in the latter result is actually a simulation (cf.\ the proof of Lemma~\ref{lem:parity_hard}).
 
 Hence, suppose instead that $\wh{f}_{01}\wh{f}_{10}\geq 0$.
 Without loss of generality, assume both Fourier coefficients are nonpositive, i.e.
 \begin{align}
  a+c &\leq b+d, \label{eq:F01}\\
  a+b &\leq c+d. \label{eq:F10}
 \end{align}
 (If instead both Fourier coefficients are nonnegative, replace $f$ with $f':=\bar{f}$, then $\NCSP(f')=_{AP}\NCSP(f)$ by Observation~\ref{obs:bit_flip_CSP} and $\wh{f'}_{01}, \wh{f'}_{10}$ are both nonpositive by Observation~\ref{obs:Fourier_bit_flip}.
 Furthermore, the assumptions of the lemma remain unchanged since both $f$ and $\bar{f}$ are non-monotone and the properties of being nontrivial and non-lsm are invariant under bit-flips.)
 
 Adding \eqref{eq:F01} and \eqref{eq:F10} gives
 \begin{equation}\label{eq:a_leq_d}
  a\leq d.
 \end{equation}
 The function $f$ is monotone if $a\leq b\leq d$ and $a\leq c\leq d$.
 Without loss of generality, it suffices to consider two non-monotone cases: $b<a$ or $d<b$; if only $c$ fails to satisfy monotonicity, replace $f(x,y)$ by $f(y,x)$ and proceed as before.
 By Observation~\ref{obs:Fourier_swap}, swapping the variables swaps the Fourier coefficients $\wh{f}_{01}$ and $\wh{f}_{10}$, so the assumption that both are nonpositive remains valid.

 If $b<a$, then \eqref{eq:F01} implies that $c<d$.
 But multiplying these two inequalities yields $bc < ad$, contradicting the assumption that $f$ is non-lsm.

 Thus the only case to consider is
 \begin{equation}\label{eq:d_lt_b}
  d<b,
 \end{equation}
 which together with \eqref{eq:F10} implies
 \begin{equation}\label{eq:a_lt_c}
  a<c.
 \end{equation}
 We distinguish two subcases.
 \begin{enumerate}
  \item Suppose that equality holds in both \eqref{eq:F01} and \eqref{eq:F10}.
   Then, by adding or subtracting the two equations, we find $a=d$ and $b=c$.
   Since $f$ is nontrivial, $b\neq 0$, so let $f':=\frac{1}{b}\cdot f$.
   As constant factors can be absorbed into AP-reductions, we have $\NCSP(f) =_{AP} \NCSP(f')$.
   Now, $f'$ is an antiferromagnetic Ising function and, by \cite[Theorem 3]{goldberg_computational_2003}, $\NCSP(f')$ does not have an FPRAS unless $\NP=\RP$.
   
  \item We may now assume that at least one of \eqref{eq:F01} and \eqref{eq:F10} is a strict inequality.
   If \eqref{eq:F01} is strict, let $\upf(x) := \sum_y f(y,x)$, which satisfies $\upf(0)=a+c<b+d=\upf(1)$.
   Otherwise, \eqref{eq:F10} is strict, so $\upf(x) := \sum_y f(x,y)$ satisfies $\upf(0)=a+b<c+d=\upf(1)$.
   In either case, $\upf$ is a strictly increasing unary function in $\ang{f}$.
   It is also permissive as \eqref{eq:a_leq_d}, \eqref{eq:d_lt_b} and \eqref{eq:a_lt_c}, together with nonnegativity of all values, imply that $(a+b)$ and $(a+c)$ are both strictly positive.
   By Lemma~\ref{lem:clone_csp}, $\NCSP(f,\upf) \leq_{AP} \NCSP(f)$.
   
   Now, according to Lemma~\ref{lem:normalising}, there exists $\upf'\in\cB^{<,\mathrm{n}}_1$ such that $\NCSP(f,\upf') \leq_{AP} \NCSP(f,\upf) \leq_{AP} \NCSP(f)$.
   It thus suffices to show that $\NCSP(f,\upf')$ does not have an FPRAS unless $\NP=\RP$.
   
   Following from Observation~\ref{obs:pinning}, $\delta_1\in\ang{f,\upf'}_{\om,p}$.
   We can therefore pin inputs to the value 1.
   In particular, $\downf(x):=f(x,1)$ is a unary function in $\ang{f,\upf'}_{\om,p}$; it is strictly decreasing as $f(0,1)=b>d=f(1,1)$ by \eqref{eq:d_lt_b}.
   It remains to check that $\downf$ is permissive.
   To see this, note that $f$ is nontrivial and $0\leq a\leq d$.
   Together, these two properties imply that $d>0$, because $0=a=d$ would make $f$ trivial.
   So $\downf$ is indeed permissive.
   Using Lemma~\ref{lem:clone_csp} again, we find that $\NCSP(f,\upf',\downf) \leq_{AP} \NCSP(f)$.
   But, by Theorem~\ref{thm:up-down}, $\NCSP(f,\upf',\downf)$ does not have an FPRAS unless $\NP=\RP$; the same condition thus applies to $\NCSP(f)$.
 \end{enumerate}  
 By exhaustive analysis of all cases, we have therefore shown that, whenever $f$ is a nontrivial non-lsm function and both $f$ and $\bar{f}$ are non-monotone, $\NCSP(f)$ does not have an FPRAS unless $\NP=\RP$.
\end{proof}

\subsection{Proof of Theorem \ref{thm:two-spin}}
\label{s:two-spin-proof}

We now have all the pieces required to prove the main theorem of this section.

\begin{repthm}{thm:two-spin}
 \statetwospin{}{}{}{}{}
\end{repthm}

The only cases not covered by Theorem~\ref{thm:two-spin} are when either function $f(x,y)$ or function $f(1-x,1-y)$ is monotone. We address these cases to some extent in the next section.

\begin{proof}
 For Property~\ref{p:trivial}, note that any trivial binary function $f$ is in $\cN$, and $\NCSP(f)$ has an FPRAS by Theorem~\ref{thm:complexity_conservative}.
 
 If $f$ is a nontrivial binary lsm function whose Fourier coefficients $\wh{f}_{01}$ and $\wh{f}_{10}$ have opposite signs, then there exists $\upf\in\ang{f}\cap\cB^{<}_1$ and $\downf\in\ang{f}\cap\cB^{>}_1$ by Lemma~\ref{lem:make_up_down}.
 Thus, $\NCSP(f)$ is \BIS-hard by Lemma~\ref{lem:lsm-nontrivial}.
 It is also \BIS-easy by \cite[Theorem 47]{chen_complexity_2015}.
 This establishes Property~\ref{p:lsm_BIS-compl}.
 
 If $f$ is a nontrivial binary lsm function whose Fourier coefficients $\wh{f}_{01}$ and $\wh{f}_{10}$ are both nonnegative or both nonpositive, then by Observations~\ref{obs:Fourier_bit_flip} and \ref{obs:bit_flip_CSP}, it suffices to consider the case $\wh{f}_{01},\wh{f}_{10}\geq 0$: otherwise, replace $\NCSP(f)$ with $\NCSP(\bar{f})$, which is AP-interreducible with the former.
 Now, $\wh{f}_{00}\geq 0$ for all $f\in\cB_2$ by the definition of the Fourier transform, and $\wh{f}_{11}\geq 0$ for the given $f$ by Lemma~\ref{lem:Fourier_non-zero}, so all Fourier coefficients of $f$ are nonnegative.
 The problem $\NCSP(f)$ thus has an FPRAS by Corollary~\ref{cor:Fourier_FPRAS}, which proves Property~\ref{p:lsm_FPRAS}.
 
 If none of the above cases apply, $f$ is nontrivial and non-lsm.
 Now, if both $f$ and $\bar{f}$ are non-monotone, then, by Lemma~\ref{lem:nnn_NP}, $\NCSP(f)$ does not have an FPRAS unless $\NP=\RP$.
 This is Property~\ref{p:non-monotone}.
 
 We have thus established all the desired properties.
\end{proof}

With the current state of knowledge, it is possible to determine the complexity of $\NCSP(f)$ in some cases beyond the ones given in the above theorem. As we noted earlier,
if $f$ is a nontrivial  anti-ferromagnetic monotone \emph{symmetric} function, 
the complexity of $\NCSP(f)$ can be determined, but there is no known closed form that indicates when the counting CSP has an FPRAS and when it cannot have an FPRAS unless $\NP=\RP$, cf.\ Theorems~\ref{thm:uniqueness_FPTAS} and \ref{thm:non-uniqueness_hard} (from \cite{Li11:correlation}).

\section{Permissive unaries from pinning}

In this section, we consider the following question: Given a set of functions $\cF$, when does $\Fzo$ contain both a strictly increasing 
permissive unary function
and a strictly decreasing permissive unary function?
In Section~\ref{sec:4.1}
we prove Theorem~\ref{thm:set_pinning} which
shows that if $\Fzo$ does not
contain these, then 
$\cF$ satisfies (at least) one of three  specified properties.
In Section~\ref{sec:4.2}
the goal is to identify a large functional clone that does not
contain 
both a strictly increasing 
permissive unary function
and a strictly decreasing permissive unary function.
The clone~$\MON$, from \cite{bulatov_functional_2017}
fits the bill, 
but in some sense it is a trivial solution since 
it does not contain both~$\delta_0$ and~$\delta_1$.
Theorem~\ref{thm:mon-pm} identifies a clone that is
strictly larger than~$\MON$ and contains $\delta_0$ and~$\delta_1$
but still does not contain 
both a strictly increasing 
permissive unary function
and a strictly decreasing permissive unary function.

\subsection{Condition for having both kinds of unaries}\label{sec:4.1}

The following theorem shows that, unless each element of a set of functions $\cF$ satisfies certain monotonicity conditions on its support, the clone $\Fzo$ contains both a strictly increasing and a strictly decreasing permissive unary function.
Recall from Section~\ref{s:definitions} that $\bar{f}$ denotes the bit-flip of a function $f$, $\bar{\cF}$ the set containing the bit-flips of all the functions in $\cF$, and a pure function is a constant times a relation. 
Furthermore, recall from Section~\ref{s:definitions} that $f\in\cB$ is monotone on its support if for any $\ba,\bb\in R_f$ such that $\ba\le\bb$ it holds $f(\ba)\le f(\bb)$.

\begin{thm}\label{thm:set_pinning}
 Let $\cF$ be a set of functions in $\cB$.
 Then at least one of the following is true:
 \begin{enumerate}
  \item\label{p:all_pure} Every function in $\cF$ is pure.
  \item\label{p:all_monotone} Every function in $\cF$ has the following properties
   \begin{itemize}
    \item it is monotone on its support, and
    \item its support is closed under $\vee$.
   \end{itemize}
   Furthermore, there is some function $f\in\cF$ such that $\bar{f}$ is not monotone on its support.
  \item\label{p:all_antimonotone} $\bar{\cF}$ satisfies Property~\ref{p:all_monotone}.
  \item\label{p:all_up_down} $\ang{\cF,\delta_0,\delta_1}\cap\cB^{<}_1$ and $\ang{\cF,\delta_0,\delta_1}\cap\cB^{>}_1$ are both non-empty.
 \end{enumerate}
\end{thm}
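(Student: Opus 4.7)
The plan is to argue the contrapositive. Assume property~\ref{p:all_up_down} fails. By the natural bit-flip symmetry---which swaps $\cB^{<}_1\leftrightarrow\cB^{>}_1$, preserves $\{\delta_0,\delta_1\}$ (since $\bar{\delta_0}=\delta_1$), and sends property~\ref{p:all_monotone} for $\cF$ to property~\ref{p:all_antimonotone} for $\bar\cF$---I would assume without loss of generality that $\Fzo\cap\cB^{<}_1=\emptyset$, and aim to show that either property~\ref{p:all_pure} or property~\ref{p:all_antimonotone} holds for $\cF$. If every function in $\cF$ is pure we are in property~\ref{p:all_pure}; otherwise pick a non-pure $f^*\in\cF$ and choose $\ba^*,\bb^*\in R_{f^*}$ with $f^*(\ba^*)>f^*(\bb^*)>0$. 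The target is then to verify the three clauses of property~\ref{p:all_antimonotone}.

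For anti-monotonicity on support, given an arbitrary $f\in\cF$ and $\ba\le\bb$ in $R_f$, I would pin the coordinates on which $\ba$ and $\bb$ agree to their common values and identify (via products with $\EQ$ and summation) the coordinates on which they differ into a single variable $x$. By Lemma~\ref{lem:closure} the resulting unary $g$ lies in $\Fzo$ and satisfies $g(0)=f(\ba)>0$, $g(1)=f(\bb)>0$. Since $\Fzo\cap\cB^{<}_1=\emptyset$, we have $g(0)\ge g(1)$, i.e.\ $f(\ba)\ge f(\bb)$. Applied to $f^*$, this gives that $f^*$ is anti-monotone on its support; combined with non-pureness (a function both monotone and anti-monotone on its support must be pure), this forces $f^*$ not to be monotone on its support, establishing the final clause of property~\ref{p:all_antimonotone}.

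For closure of $R_f$ under $\wedge$, suppose that $\ba,\bb\in R_f$ are incomparable and that $\ba\wedge\bb\notin R_f$. Pin the coordinates with $a_i=b_i=0$ to $0$ and those with $a_i=b_i=1$ to $1$, collapse the coordinates with $a_i=1,b_i=0$ into a variable $y_1$ and those with $a_i=0,b_i=1$ into $y_2$. This produces $h\in\Fzo$ with $h(0,0)=f(\ba\wedge\bb)=0$, $h(1,0)=f(\ba)$, $h(0,1)=f(\bb)$, $h(1,1)=f(\ba\vee\bb)$. Summing $h$ over each of its variables yields two permissive unaries in $\Fzo$; the no-strictly-increasing hypothesis applied to both forces $f(\bb)\ge f(\ba)+f(\ba\vee\bb)$ and $f(\ba)\ge f(\bb)+f(\ba\vee\bb)$, and adding gives $f(\ba\vee\bb)=0$ and then $f(\ba)=f(\bb)=:r>0$, so $h=r\cdot\NEQ$. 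The main obstacle is the final step: this $h$ is symmetric in its two arguments and hence ``direction-neutral'', so on its own it cannot produce a strictly monotone permissive unary, and we need an external source of asymmetry---which is exactly what the non-pure $f^*$ supplies. Applying the same pinning-and-collapsing recipe to $f^*$ at $\ba^*,\bb^*$ yields either a binary $h^*\in\Fzo$ with $h^*(1,0)=f^*(\ba^*)$ and $h^*(0,1)=f^*(\bb^*)$ (when $\ba^*,\bb^*$ are incomparable), or a unary $g^{**}\in\Fzo$ with $g^{**}(0)=f^*(\ba^*)>f^*(\bb^*)=g^{**}(1)$ (when they are comparable; necessarily $\ba^*<\bb^*$ by anti-monotonicity). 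Forming the pointwise product of $h$ with $h^*$ (or with $g^{**}$ after introducing a fictitious argument) and summing over the $\NEQ$-coupled variable effectively transposes the two positive values of $f^*$, yielding a unary in $\Fzo$ taking values $rf^*(\bb^*)$ at $0$ and $rf^*(\ba^*)$ at $1$; since both are positive and $rf^*(\bb^*)<rf^*(\ba^*)$, this unary lies in $\cB^{<}_1\cap\Fzo$, contradicting our standing assumption.
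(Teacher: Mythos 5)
Your overall strategy --- arguing the contrapositive, using the bit-flip symmetry to reduce to the single hypothesis $\Fzo\cap\cB^{<}_1=\emptyset$, and then directly verifying the clauses of property~\ref{p:all_antimonotone} --- is a genuine reorganisation of the paper's argument, which instead runs a four-way case split on which of $\cF$ and $\bar\cF$ contains a function that is not monotone on its support. Your symmetrisation is cleaner, and the technical gadgetry (extracting a unary by pinning and identifying; combining the weighted disequality $h = r\,\NEQ$ with the asymmetric $h^*$ or $g^{**}$ by pointwise product and summation to transpose the two positive values and land in $\cB^{<}_1$) correctly mirrors what the paper does piecemeal in its Cases~2 and~4.

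There is, however, one concrete gap. The parenthetical ``a function both monotone and anti-monotone on its support must be pure'' is false in general: if $R_{f^*}$ is an antichain --- for instance $f^*(0,1)=1$, $f^*(1,0)=2$, $f^*(0,0)=f^*(1,1)=0$ --- then $f^*$ is vacuously both monotone and anti-monotone on its support but is not pure. Consequently the inference ``anti-monotone on support $+$ not pure $\Rightarrow$ not monotone on support'' does not hold as stated, so the final clause of property~\ref{p:all_antimonotone} is not actually established at that point of your argument. The conclusion is nonetheless true, but only once you have also shown (as you do in the \emph{next} paragraph) that $R_{f^*}$ is closed under $\wedge$: then for any $\ba^*,\bb^*\in R_{f^*}$ with $f^*(\ba^*)>f^*(\bb^*)>0$ one has $\ba^*\wedge\bb^*\in R_{f^*}$ with $\ba^*\wedge\bb^*\le\bb^*$, and anti-monotonicity gives $f^*(\ba^*\wedge\bb^*)\ge f^*(\ba^*)>f^*(\bb^*)$, exhibiting the failure of monotonicity on the support. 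So you should reorder the steps and replace the false lemma with this reasoning; with that repair the proposal is sound and, if anything, a little more streamlined than the paper's proof.
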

\begin{proof}
 We distinguish cases according to the monotonicity properties of functions in $\cF$.
 
 \textbf{Case 1}. Suppose there exists a function $f\in\cF$ which is not monotone on its support and a function $g\in\cF$ such that $\bar{g}$ is not monotone on its support (these may be the same function).
 
 As $f$ is not monotone on its support, there must be a pair $\ba,\bb\in R_f$ such that $\ba\leq\bb$ and $f(\ba)>f(\bb)>0$.
 By pinning in all places where $\ba$ and $\bb$ agree, we can obtain a function $f'\in\ang{f,\delta_0,\delta_1}$ satisfying $f'(0\zd 0)>f'(1\zd 1)>0$.
 Then $\downf(x):=f'(x\zd x)\in\ang{f,\delta_0,\delta_1}\cap\cB^{>}_1$.
 Similarly, as $\bar{g}$ is not monotone on its support, there must be a pair $\bc,\bd\in R_g$ such that $\bc\leq\bd$ and $0<g(\bc)<g(\bd)$.
 By pinning, we obtain a function $g'\in\ang{g,\delta_0,\delta_1}$ satisfying $0<g'(0\zd 0)<g'(1\zd 1)$, and thus a unary function $\upf(x):=g'(x\zd x)$, which is in $\ang{g,\delta_0,\delta_1}\cap\cB^{<}_1$.
 Hence, $\cF$ satisfies Property~\ref{p:all_up_down}.
 
 \textbf{Case 2}. Suppose all functions in $\cF$ are monotone on their support and there exists a function $f\in\cF$ such that $\bar{f}$ is not monotone on its support.
 
 One possibility is that the supports of all functions in $\cF$ are all closed under $\vee$. In this case, $\cF$ satisfies Property~\ref{p:all_monotone}.
 Otherwise, there exists a function $g\in\cF$ whose support is not closed under $\vee$, that is, there are $\ba,\bb\in R_g$ such that $\ba\vee\bb\notin R_g$.
 Again, $g$ may be the same as $f$, or they may be distinct.
 
 Since $\bar{f}$ is not monotone on its support, there exist $\bc,\bd\in R_f$ such that $\bc\leq\bd$ and $0<f(\bc)<f(\bd)$.
 By pinning and identification of variables, we can therefore realise a permissive unary strictly increasing function $\upf\in\ang{f,\delta_0,\delta_1}$ as in Case 1.
 
 We will now show that we can also realise a permissive unary strictly decreasing function.
 
 By pinning in all places where $\ba$ and $\bb$ agree, we may assume that $\ba\vee\bb=(1\zd 1)$ and $\ba\wedge\bb=(0\zd 0)$.
 Furthermore, by identifying all pairs $i,j$ of variables such that $\ba[i]=\ba[j]$ (and thus $\bb[i]=\bb[j]$), we obtain a function $h(x,y)\in\ang{g,\delta_0,\delta_1}$ such that $h(0,1),h(1,0)\neq 0$ but $h(1,1)=0$.
 Let $a=h(0,0)$, $b=h(0,1)$, $c=h(1,0)$, where $b,c>0$ and $a\geq 0$, and define
 \[
  h'(x,y) := h(x,y) h(y,x) = \begin{pmatrix}a^2 & bc \\ bc & 0 \end{pmatrix}.
 \]
 Let $u_0:=\upf(0)$ and $u_1:=\upf(1)$; these values satisfy $0<u_0<u_1$.
 Now,
 \[
  \sum_{y\in\{0,1\}} h'(0,y) \upf(y) = a^2u_0+bcu_1 > bcu_0 = \sum_{y\in\{0,1\}} h'(1,y) \upf(y),
 \]
 so $\downf(x) := \sum_y h'(x,y) \upf(y)$ is strictly decreasing.
 It is also permissive since $bc\neq 0$ and $u_0>0$, hence $\cF$ satisfies Property~\ref{p:all_up_down}.

 \textbf{Case 3}.
 Suppose that for all functions $g\in \cF$, $\bar{g}$ is monotone on its support and there exists a function $f\in \cF$ such that $f$ is not monotone on its support.

 The first step in Case~3 is to note the following equivalent formulation: 
  All functions in $\bar{\cF}$ are monotone on their support and there exists a function $f\in\bar{\cF}$ such that $\bar{f}$ is not monotone on its support. 
 
 Then, we can apply the argument from Case~2 to $\bar{\cF}$ to find that either $\cF$ satisfies Property~\ref{p:all_antimonotone}, or $\ang{\bar{\cF},\delta_0,\delta_1}\cap\cB^{<}_1$ and $\ang{\bar{\cF},\delta_0,\delta_1}\cap\cB^{>}_1$ are both non-empty.
 Note that the bit-flip operation is  its own inverse and that furthermore $\bar{\delta_0}=\delta_1$ and $\bar{\delta_1}=\delta_0$.
 Thus, if there is a permissive strictly increasing function $\upf\in\ang{\bar{\cF},\delta_0,\delta_1}\cap\cB^{<}_1$, then $\overline{\upf}\in\ang{{\cF},\delta_0,\delta_1}$.
 But $0<\upf(0)<\upf(1)$ implies $0<\overline{\upf}(1)<\overline{\upf}(0)$, so $\overline{\upf}\in\cB^{>}_1$.
 Similarly, if $\downf\in\ang{\bar{\cF},\delta_0,\delta_1}\cap\cB^{>}_1$ then $\overline{\downf}\in\ang{{\cF},\delta_0,\delta_1}\cap\cB^{<}_1$.
 Hence, if $\ang{\bar{\cF},\delta_0,\delta_1}\cap\cB^{<}_1$ and $\ang{\bar{\cF},\delta_0,\delta_1}\cap\cB^{>}_1$ are both non-empty, then $\cF$ satisfies Property~\ref{p:all_up_down}.
 
 \textbf{Case 4}. For all functions $f\in\cF$, both $f$ and $\bar{f}$ are monotone on their support.
 
 One possibility is that  all functions in $\cF$ are pure. In this case, Property~\ref{p:all_pure} is satisfied.
 Otherwise, there exists a function $g\in\cF$ whose range contains at least two non-zero values.
  In this case, we show that there are both a unary permissive strictly increasing and a unary permissive strictly decreasing function in $\ang{g,\delta_0,\delta_1}$.
 
 As $g$ is not pure, we can find $\ba,\bb\in R_g$ such that $0<g(\ba)<g(\bb)$.
 These two tuples $\ba,\bb$ must then be incomparable, otherwise $g$ and $\bar{g}$ could not both be monotone on their support.
 Pin in all places where $\ba$ and $\bb$ agree, then identify all pairs of inputs $i$, $j$ such that $\ba[i]=\ba[j]$ and $\bb[i]=\bb[j]$ to obtain a binary function $h(x,y)$.
 Without loss of generality, we may suppose this function satisfies $h(0,0)=g(\ba\wedge\bb)$, $h(0,1)=g(\ba)$, $h(1,0)=g(\bb)$, and $h(1,1)=g(\ba\vee\bb)$ (otherwise replace $h(x,y)$ with $h(y,x)$).
 
 Now, $g$ and $\bar{g}$ both being monotone on their support also requires that there be no $\bc\in R_g$ which satisfies $\bc\leq\ba\wedge\bb$ or $\bc\geq\ba\vee\bb$.
 In particular, $\ba\wedge\bb$ and $\ba\vee\bb$ cannot be in the support of $g$.
 Hence $h(0,0)=h(1,1)=0$ and $g$ is a weighted disequality function.
 We assumed $0<g(\ba)<g(\bb)$, which implies $0<h(0,1)<h(1,0)$.
 
 Let $\downf(x) := \sum_y h(x,y)h^2(y,x)$ and $\upf(x) := \sum_y h^2(x,y)h(y,x)$, then 
 \begin{align*}
  \downf(0) = h(0,1)h^2(1,0) = g(\ba)g^2(\bb) &> g(\bb)g^2(\ba) = h(1,0)h^2(0,1) = \downf(1) \\
  \upf(0) = h^2(0,1)h(1,0) = g^2(\ba)g(\bb) &< g^2(\bb)g(\ba) = h^2(1,0)h(0,1) = \upf(1).
 \end{align*}
 Both functions are permissive and in $\ang{g,\delta_0,\delta_1}$.
 Therefore, Property~\ref{p:all_up_down} holds.
 
 This concludes the analysis of all cases.
\end{proof}

\begin{rem}
 Properties~\ref{p:all_pure}, \ref{p:all_monotone}, and \ref{p:all_antimonotone} of Theorem~\ref{thm:set_pinning} are disjoint.
 
 Properties~\ref{p:all_monotone} and \ref{p:all_antimonotone} being disjoint is immediate.
 For Properties~\ref{p:all_pure} and \ref{p:all_monotone}, note that if $f$ is a pure function, then both $f$ and $\bar{f}$ are monotone on their support.
 Thus, Property~\ref{p:all_pure} is disjoint from Property~\ref{p:all_monotone}.
 If all functions in $\cF$ are pure, then all functions in $\bar{\cF}$ are pure, so an analogous argument applies for Property~\ref{p:all_pure} and Property~\ref{p:all_antimonotone}.
\end{rem}

\begin{rem}
 If $\cF$ satisfies Property~\ref{p:all_pure} of Theorem~\ref{thm:set_pinning}, then the complexity of $\NCSP(\cF,\delta_0,\delta_1)$ can be determined  
 via the approximation trichotomy for Boolean \#CSP in Theorem~\ref{thm:complexity_Boolean} from \cite{Dyer10:approximation}.
 If $\cF$ satisfies Property~\ref{p:all_up_down}, then the complexity of $\NCSP(\cF,\delta_0,\delta_1)$ is determined by Lemma~\ref{lem:clone_csp} and Theorem~\ref{thm:up-down}.
 So if the complexity of $\NCSP(\cF,\delta_0,\delta_1)$
 is unresolved 
 (up to the granularity of Theorem~\ref{thm:up-down})
 then $\cF$ or $\bar{\cF}$ satisfies Property~2.
\end{rem}

\subsection{Condition for the absence of permissive strictly decreasing functions}
\label{sec:4.2}

In  this section
the goal is to identify a large functional clone that does not
contain both types of
permissive unary functions (the two types being strictly increasing and strictly decreasing).

We start by defining the clone $\MON$,  
using definitions from \cite[Section~10]{bulatov_functional_2017}.
Given a $k$-ary function $f$, let $\sim_f$ be the equivalence relation on $[k]$ given by $i\sim_f j$ if, for every 
$\ba\in\{0,1\}^k$, $f(\ba)\neq 0$ implies $\ba[i]=\ba[j]$.
If $\sim_f$ is the equality relation, $f$ is said to be \emph{irredundant}.
By identifying variables in the equivalence classes of $\sim_f$, any function $f$ can be transformed into an irredundant function $f^\dagger$.
Note that any subset of variables of a function $f$ can be identified using the closure operations of a functional clone (cf.\ Lemma~\ref{lem:closure}), whether or not they are in the same equivalence class.

Recall from Section~\ref{s:definitions} that
a function $g\in\cB_k$ is \emph{monotone} if for any $\ba,\bb\in\{0,1\}^k$ with $\ba\leq\bb$ we have $g(\ba)\leq g(\bb)$.
The function $f$ is in $\MON$ if $f^{\dagger}$ is monotone.  
Bulatov et al.~\cite[Theorem 62]{bulatov_functional_2017} 
have shown that $\MON$ is a functional clone.\footnote{Bulatov et
al.~\cite{bulatov_functional_2017} define $\MON$ in a slightly different but
equivalent way.}
 It does not contain a strictly decreasing permissive unary function, so in some sense
it is a solution to the problem stated in the previous paragraph.
However, it also does not contain $\delta_0$,
so, in some sense, it is a trivial solution.

Theorem~\ref{thm:mon-pm} below identifies a clone that is
strictly larger than~$\MON$ and contains $\delta_0$ and~$\delta_1$
but still does not contain 
both types of
permissive unary functions.

\newcommand{\statemonpm}{
 There is a functional clone $\PM$ containing $\delta_0$ and $\delta_1$ such that $\MON$ is a strict subset of $\PM$ and $\PM$ contains no strictly decreasing permissive unary functions.
}
\begin{thm}\label{thm:mon-pm}
 \statemonpm
\end{thm}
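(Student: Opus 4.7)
The plan is to take $\PM := \ang{\MON, \delta_0, \delta_1}$, the functional clone generated by $\MON$ together with the two pinning functions. Since $\delta_0 \in \PM$ but $\delta_0 \notin \MON$ (indeed $\delta_0$ is irredundant and $\delta_0(0) > \delta_0(1)$), the strict inclusion $\MON \subsetneq \PM$ is immediate. The substance of the argument will be to show that $\PM$ contains no strictly decreasing permissive unary function.

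The key technical step will be a \emph{pinning characterization} of $\PM$: every $f \in \PM$ of arity $k$ can be written in the form $f(x_1, \ldots, x_k) = g(v_1, \ldots, v_{m+\ell})$, where $g \in \MON$ and each argument slot of $g$ is either set to one of the $x_i$ (possibly with repetitions) or pinned to a constant coming from some $\bc \in \{0,1\}^\ell$. This follows because $\MON$ is itself a functional clone (Theorem~62 of~\cite{bulatov_functional_2017}) and pinning commutes with each of the clone operations. For example, combining $f_1(\bx) = g_1(\bx, \bc_1)$ and $f_2(\bx) = g_2(\bx, \bc_2)$ by product is done by first introducing fictitious arguments, forming $g_1 \cdot g_2 \in \MON$, and then pinning $\bc_1$ and $\bc_2$ together; summation over a free variable commutes directly with pinning; and the identifications implicit in pps-formulas can be absorbed by $\EQ$-atomic formulas, which stay inside $\MON$. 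Noting $\delta_0(x) = \EQ(x, 0)$ and $\delta_1(x) = \EQ(x, 1)$ then also shows that $\PM$ contains the pinning functions.

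With the characterization in hand, let $f \in \PM$ be unary and write $f(x) = g(v_1, \ldots, v_{m+\ell})$ as above, with $g \in \MON$, the positions set to $x$ forming a set $I$, and the remaining positions pinned according to $\bc$. Let $\sim_g$ be the equivalence relation on $g$'s arguments used in the definition of $\MON$. \textbf{Case 1:} some $\sim_g$-class $C$ contains both a position in $I$ and a pinned position $j$. Then any tuple in $R_g$ must be constant on $C$, so $g$ can be nonzero only when $x = c_j$; hence $f(1 - c_j) = 0$ and $f$ is not permissive. \textbf{Case 2:} every $\sim_g$-class is entirely inside $I$ or entirely inside the pinned positions. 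If some pinned class has non-constant values in $\bc$, then $g$ vanishes on all extensions of our assignment and $f \equiv 0$. Otherwise $\bc$ descends to a pinning $\bc'$ of the $\sim_g$-quotient, giving $f(x) = g^\dagger(x, \ldots, x, \bc')$, where $g^\dagger$ is monotone by the definition of $\MON$. Substituting the same value $x$ into each quotient-variable coming from $I$ produces a weakly increasing function of $x$, so $f(0) \leq f(1)$ and $f$ is not strictly decreasing. Either way, $f$ fails to be a strictly decreasing permissive unary function.

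The main obstacle is the pinning characterization: one must carefully track how repeated variables and multiple pinnings interact with the clone operations of $\MON$ to be sure that every member of $\PM$ really does arise from a single outer $\MON$-function with pinnings on some of its slots. Once that reduction is in place, the Case~1/Case~2 analysis is essentially a direct unpacking of the definition of $\MON$, using only the monotonicity of $g^\dagger$ and the constancy of $R_g$-tuples on $\sim_g$-classes.
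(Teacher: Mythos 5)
Your argument is correct, but it takes a genuinely different route from the paper.  The paper introduces the clone $\PM$ \emph{extensionally}: it gives an explicit local condition (``pin-monotone'', Definition~\ref{dfn:pin-monotone}) and then verifies by hand, in Lemma~\ref{lem:PM_clone}, that the set of functions satisfying that condition is closed under each of the clone operations.  You instead define $\PM := \ang{\MON, \delta_0, \delta_1}$, so the clone axioms and the membership of $\delta_0,\delta_1$ come for free, and the real work is your ``pinning characterization'': every element of $\ang{\MON,\delta_0,\delta_1}$ is a single outer function $g\in\MON$ with each slot filled by a (possibly repeated) free variable or a constant.  This characterization is correct; the proof is the standard check that the class of such functions is closed under fictitious arguments, permutation, product, and summation (where summation over a bound variable $y$ requires first identifying the slots filled with $y$ via $\EQ\in\MON$, then summing one slot --- you gesture at this with the ``identifications implicit in pps-formulas'' remark, and the gesture is sound).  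The case analysis (a $\sim_g$-class straddling free and pinned slots kills permissivity; a pinned class with inconsistent constants kills the function; otherwise $g^\dagger$'s monotonicity forces $f(0)\leq f(1)$) is a clean unpacking of the definition of $\MON$.  Since the theorem is existential --- it only asks for \emph{some} clone with the stated properties --- it does not matter whether your $\ang{\MON,\delta_0,\delta_1}$ coincides with the paper's set of pin-monotone functions.  (In fact they do coincide, essentially because a pin-monotone $f^\dagger$ factors as a product of $\delta_0$'s on the ``killing'' coordinates and a monotone function on the rest, but neither you nor the paper needs to prove this.)  What the paper's approach buys is an explicit intrinsic description of the clone; what your approach buys is that you never have to guess the right local condition or verify clone closure for it from scratch.
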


 In order to define $\PM$, we use the following definition.

\begin{dfn}\label{dfn:pin-monotone}
 A function $f\in\cB$ is \emph{pin-monotone} if the following condition holds for $f^\dagger$.
 Let $k=\ari(f^\dagger)$ and suppose that $\ba,\bb\in\{0,1\}^k$ satisfy
 \begin{itemize}
  \item $\ba[j]=0$ and $\bb[j]=1$ for some $j\in[k]$,
  \item $\ba[i]=\bb[i]$ for all $i\in[k]\setminus\{j\}$, and
  \item $f^\dagger(\ba)>f^\dagger(\bb)$,
 \end{itemize}
 then
 \begin{enumerate}
  \item $f^\dagger(\bb)=0$, and furthermore
  \item $f^\dagger(\bc)=0$ for every $\bc\in\{0,1\}^k$ such that $\bc[j]=1$.
 \end{enumerate}
 Let $\PM$ be the set of all pin-monotone functions.
\end{dfn}

In other words, $f$ is pin-monotone if, for every pair of bit strings $\ba$ and $\bb$ that differ only in one bit, with $\ba\leq\bb$ and $f^\dagger(\ba)>f^\dagger(\bb)$, this implies that $f^\dagger(\bb)=0$ and furthermore implies that $f^\dagger(\bc)=0$ for all input bit strings $\bc$ that agree with $\bb$ on the bit where it differs from $\ba$.

\begin{lem}\label{lem:pinning_PM}
 The pinning functions $\delta_0$ and $\delta_1$ are pin-monotone.
\end{lem}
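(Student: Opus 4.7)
The plan is to verify Definition~\ref{dfn:pin-monotone} directly for each of the two functions. Since both $\delta_0$ and $\delta_1$ are unary, they are already irredundant, so $\delta_0^\dagger = \delta_0$ and $\delta_1^\dagger = \delta_1$, and we may take $k=1$. Thus the only candidate pair $(\ba,\bb) \in \{0,1\}^k \times \{0,1\}^k$ that differs in exactly one coordinate (with $\ba[j]=0$, $\bb[j]=1$) is $\ba = 0$, $\bb = 1$, with $j = 1$.

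First I would handle $\delta_0$. Here $\delta_0(\ba) = \delta_0(0) = 1 > 0 = \delta_0(1) = \delta_0(\bb)$, so the hypothesis of Definition~\ref{dfn:pin-monotone} is satisfied. It then suffices to check the two conclusions. For (1), $\delta_0(\bb) = \delta_0(1) = 0$. For (2), the only $\bc \in \{0,1\}^1$ with $\bc[1]=1$ is $\bc = 1$, and $\delta_0(1) = 0$. Hence $\delta_0 \in \PM$.

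Next I would handle $\delta_1$. Here $\delta_1(\ba) = \delta_1(0) = 0$ and $\delta_1(\bb) = \delta_1(1) = 1$, so $\delta_1(\ba) > \delta_1(\bb)$ fails. The hypothesis of Definition~\ref{dfn:pin-monotone} is therefore vacuous, and the implication holds trivially, so $\delta_1 \in \PM$.

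There is essentially no obstacle here: the statement is a direct unpacking of the definition on unary functions whose only nonzero value sits at one of the two possible inputs. The verification is a two-line case analysis, and no auxiliary machinery from earlier in the paper is required.
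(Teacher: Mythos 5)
Your proof is correct and matches the paper's argument essentially verbatim: both check irredundancy, then verify Definition~\ref{dfn:pin-monotone} directly — showing the hypothesis is vacuous for $\delta_1$ and that the two conclusions hold trivially (via $\delta_0(1)=0$) for $\delta_0$.
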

\begin{proof}
Since they have arity~$1$,  both $\delta_0$ and $\delta_1$ are irredundant.
 Then, $\delta_1$ is pin-monotone because for any $\ba,\bb\in\{0,1\}$,  $\ba\leq\bb$ implies $\delta_1(\ba)\leq\delta_1(\bb)$.
 
 For $\delta_0$, there are bit strings $\ba,\bb$ satisfying the three conditions of Definition~\ref{dfn:pin-monotone}, namely $\ba=0$ and $\bb=1$.
 But then the two implications required by that definition are trivially satisfied, since $\delta_0(1)=0$.
 Thus, $\delta_0$ is also pin-monotone.
\end{proof}

\begin{lem}\label{lem:mon_ssneq_PM}
$\MON\subsetneq\PM$. 
\end{lem}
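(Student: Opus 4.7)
The plan is to establish the two parts of the claim separately: first that $\MON \subseteq \PM$, and then that the containment is strict, exhibited by a concrete function.

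For the inclusion $\MON \subseteq \PM$, I would take any $f \in \MON$, so that $f^\dagger$ is monotone with some arity~$k$. Suppose $\ba,\bb \in \{0,1\}^k$ satisfy the three bullet conditions of Definition~\ref{dfn:pin-monotone}: they agree on all coordinates except coordinate~$j$, where $\ba[j]=0$ and $\bb[j]=1$. Then $\ba \leq \bb$ coordinate-wise, so monotonicity of $f^\dagger$ forces $f^\dagger(\ba) \leq f^\dagger(\bb)$. This directly contradicts the third bullet $f^\dagger(\ba) > f^\dagger(\bb)$. Hence no such pair $(\ba,\bb)$ exists, and the two numbered conclusions of Definition~\ref{dfn:pin-monotone} hold vacuously. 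This shows $f \in \PM$.

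For the strictness $\MON \neq \PM$, I would simply invoke the pinning function $\delta_0$. By Lemma~\ref{lem:pinning_PM}, $\delta_0 \in \PM$. On the other hand, $\delta_0$ is already irredundant (it is unary), so $\delta_0^\dagger = \delta_0$, and $\delta_0(0) = 1 > 0 = \delta_0(1)$ witnesses that $\delta_0^\dagger$ is not monotone. Therefore $\delta_0 \notin \MON$, which establishes $\MON \subsetneq \PM$.

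There is no real obstacle here: the inclusion is essentially by definition (monotonicity rules out the hypotheses of the pin-monotone implication), and the strictness is witnessed by a function already shown to be pin-monotone in the preceding lemma. The only point requiring a moment of care is noting that $\delta_0$ is irredundant, so that the condition ``$f^\dagger$ is monotone'' for membership in $\MON$ genuinely fails on $\delta_0$ itself.
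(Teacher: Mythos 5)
Your proof is correct and takes essentially the same approach as the paper: the inclusion $\MON\subseteq\PM$ follows by noting that monotonicity of $f^\dagger$ makes the hypothesis $f^\dagger(\ba)>f^\dagger(\bb)$ in Definition~\ref{dfn:pin-monotone} unsatisfiable, so the implication holds vacuously, and strictness is witnessed by $\delta_0$, which is pin-monotone by Lemma~\ref{lem:pinning_PM} but not monotone (hence, being irredundant, not in $\MON$).
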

\begin{proof}

To show $\MON\subseteq \PM$,
consider $f\in\MON$ and let $k=\ari(f^\dagger)$.
By the definition of~$\MON$, $f^\dagger$ is monotone.
We will show that $f^\dagger$ is pin-monotone, which implies
that $f$ is also pin-monotone.
 
 To this end, suppose that there are $\ba,\bb\in\{0,1\}^k$ such that $\ba[j]=0$ and $\bb[j]=1$ for some $j\in[k]$, 
 and $\ba[i]=\bb[i]$ for all $i\in[k]\setminus\{j\}$.
 Then by monotonicity of $f^\dagger$, $f^\dagger(\ba)\leq f^\dagger(\bb)$.
 Hence $f^\dagger$ is trivially pin-monotone. We conclude that $f$ is pin-monotone,
 so, since $f$ was an arbitrary function in~$\MON$,
  we conclude that $\MON\sse\PM$.
 
 To show that the inclusion is strict,  note that $\delta_0$ is not monotone because $\delta_0(0)>\delta_0(1)$, yet it is pin-monotone by Lemma~\ref{lem:pinning_PM}.
 Therefore, $\MON\subsetneq\PM$.
\end{proof}

\begin{lem}\label{lem:PM_clone}
 The set $\PM$ is a functional clone.
\end{lem}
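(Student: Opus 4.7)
The plan is to apply Lemma~\ref{lem:closure}: to show $\PM$ is a functional clone it suffices to verify that $\EQ\in\PM$ and that $\PM$ is closed under introduction of fictitious arguments, products, permutations of arguments, and summation. The containment $\EQ\in\PM$ and closure under permutations and fictitious arguments are handled by direct inspection: $\sim_{\EQ}$ identifies the two coordinates, so $\EQ^\dagger$ is the constant-$1$ unary function, which is vacuously pin-monotone; permuting arguments commutes with taking $(\cdot)^\dagger$ and preserves the pin-monotone condition; and introducing a fictitious argument to $f$ gives an $h$ with $h^\dagger$ equal to $f^\dagger$ with an extra fictitious variable adjoined, along which $h^\dagger$ never changes value (so pin-monotonicity is inherited).

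For products and summation I would first record the following structural view of $\PM$. Pin-monotonicity of an irredundant $g\in\cB_m$ is equivalent to a partition of $[m]$ into a set $P$ of ``pinned'' coordinates, on which $g$ vanishes as soon as any such coordinate equals $1$, and a complementary set $M$ of ``monotone'' coordinates in which (once the $P$-coordinates are set to $0$) $g$ is weakly monotone. Lifting this through $\sim_f$, every $f\in\PM$ is described by: the relation $\sim_f$, a designation of each class as pinned or monotone, and a monotone nonnegative function $g$ on the monotone classes, with $f(\ba)$ equal to $g$ of the $\sim_f$-quotient when $\ba$ is constant on each class and vanishes on each pinned class, and $f(\ba)=0$ otherwise. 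The key sub-claim is that if $h\in\cB_k$ admits such a description with respect to \emph{any} equivalence relation $\sim$ satisfying $\sim\sse\sim_h$, then $h^\dagger$ is pin-monotone; this follows because each $\sim_h$-class is a union of $\sim$-classes, and either contains a pinned $\sim$-subclass (forcing it to be pinned in $h^\dagger$) or consists entirely of monotone $\sim$-subclasses (in which case monotonicity of $g$ lifts to monotonicity of $h^\dagger$ in that coordinate).

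Armed with the sub-claim, closure under products and summation is routine. For $h=f_1f_2$ with $f_1,f_2\in\PM$ of the same arity, one takes $\sim$ to be the join of $\sim_{f_1}$ and $\sim_{f_2}$, declares a $\sim$-class pinned if it contains any pinned variable of either $f_1$ or $f_2$, and takes the function on the monotone classes to be the pointwise product of the corresponding restrictions of $f_1^\dagger$ and $f_2^\dagger$: this product is monotone as a product of monotone nonnegative functions. For summation $h(\vc{x}{k-1})=\sum_{x_k}f(\vc{x}{k})$, one case-splits on the $\sim_f$-class $C$ containing the summed variable $k$: if $C$ is pinned or if $|C|>1$, then $h$ inherits the structure of $f$ restricted to $[k-1]$; and if $C=\{k\}$ is monotone, then summing over $x_k$ collapses that coordinate of $g$, and a sum of monotone nonnegative functions remains monotone. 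The main obstacle is proving the sub-claim: products and summations can generate equivalences in $\sim_h$ that are not present in the relation $\sim$ constructed directly from the data of $f_1,f_2$ or of $f$ (most transparently when the supports of $f_1$ and $f_2$ meet in a highly constrained set), so pin-monotonicity must be established at the level of the genuine, possibly coarser relation $\sim_h$ rather than at the level of the constructed $\sim$.
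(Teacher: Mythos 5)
Your proposal is correct and it takes a genuinely different route from the paper's. Both start from Lemma~\ref{lem:closure} and dispatch $\EQ$, permutations, and fictitious arguments in the same direct way. The difference lies in products and summation. The paper argues computationally: it writes $h^\dagger$ as a pointwise product $f'g'$ of suitably identified versions of $f^\dagger$ and $g^\dagger$ and then chases the pin-monotonicity condition through a hand-built sequence of single-bit flips. You instead extract a structural reformulation of pin-monotonicity for an irredundant function---a partition of its coordinates into a pinned set $P$ (the function vanishes as soon as any $P$-coordinate is $1$) and a complementary monotone set $M$ (the function is weakly monotone in the $M$-coordinates once $P$ is set to $0$)---and isolate a sub-claim: a description of this form with respect to any equivalence relation $\sim\sse\sim_h$ certifies pin-monotonicity of $h^\dagger$. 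Your sketch of the sub-claim (each $\sim_h$-class either contains a pinned $\sim$-subclass, hence is pinned for $h^\dagger$, or consists entirely of monotone $\sim$-subclasses, hence inherits monotonicity since several coordinates of a monotone $g$ are flipped simultaneously) is sound, and with it products and summation reduce to bookkeeping with the join $\sim_{f_1}\vee\sim_{f_2}$ and with $\sim_f$ restricted, respectively. A concrete payoff of your route is that the sub-claim explicitly addresses the case where $\sim_h$ is strictly coarser than the constructed $\sim$ (for instance when $R_{f_1}\cap R_{f_2}$ is highly constrained), a point the paper's asserted identity $h^\dagger(\bx)=f'(\bx)g'(\bx)$ passes over rather loosely since the two sides may then have different arities; the paper's more elementary verification never names the $P/M$ partition and so must re-derive its consequences inside each case.
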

\begin{proof}
 The irredundant version of $\EQ$ is the constant function $\EQ^\dagger(x)=1$ for $x\in\{0,1\}$.
 Hence $\EQ$ is pin-monotone, as there is no pair of bit strings $\ba,\bb$ differing in one bit such that $\EQ^\dagger(\ba)>\EQ^\dagger(\bb)$.
 
 To show that $\PM$ is a functional clone, it remains to show that the set is closed under permutation of arguments, introduction of fictitious arguments, product, and summation (see Lemma~\ref{lem:closure}).
 \begin{itemize}
  \item It is straightforward to see that permuting arguments does not destroy the property of being pin-monotone, so $\PM$ is closed under permutation of arguments.
  
  \item Consider the effect of introducing a fictitious argument: let $h(\bx,y)=f(\bx)$, where $f\in\PM$.
   The fictitious argument is in an equivalence class of its own since $h(\bx,0)=h(\bx,1)$ for all $\bx\in\{0,1\}^{\ari(f)}$.
   Hence $h^\dagger(\bx,y)=f^\dagger(\bx)$.
   Let $k=\ari(f^\dagger)$.
   We look at input bit strings to $h^\dagger$ that differ in exactly one bit and distinguish cases according to whether that single bit is the fictitious bit or not.
   
   \textbf{Case 1}. Suppose there exist $\ba,\bb\in\{0,1\}^{k+1}$ such that $\ba[j]=0$ and $\bb[j]=1$ for some $j\in[k]$, $\ba[i]=\bb[i]$ for all $i\in[k+1]\setminus\{j\}$, and $h^\dagger(\ba)>h^\dagger(\bb)$.
   Let $\ba',\bb'$ be the first $k$ bits of $\ba,\bb$, respectively.
   Then $f^\dagger(\ba')>f^\dagger(\bb')$.
   Thus, by pin-monotonicity of $f$, $f^\dagger(\bb')=0$ and $f^\dagger(\bc')=0$ for all $\bc'\in\{0,1\}^k$ such that $\bc'[j]=1$.
   This implies $h^\dagger(\bb)=0$ and $h^\dagger(\bc)=0$ for all $\bc\in\{0,1\}^{k+1}$ such that $\bc[j]=1$.
   
   \textbf{Case 2}. If $\ba,\bb\in\{0,1\}^{k+1}$ such that $\ba[k+1]=0$ and $\bb[k+1]=1$ and $\ba[i]=\bb[i]$ for all $i\in[k]$, then $h^\dagger(\ba)=h^\dagger(\bb)$ by definition.
   
   Hence $h$ is pin-monotone, which implies that the set of pin-monotone functions is closed under introduction of fictitious arguments.
   
  \item To show closure under taking products, let $h(\bx)=f(\bx)g(\bx)$, where $f,g\in\PM$.

   If there exist $i,j\in[\ari(f)]$ with $i\neq j$ such that $f(\bx)\neq 0$  implies $\bx[i]=\bx[j]$, then $h(\bx)\neq0$  implies $\bx[i]=\bx[j]$, so $i$ and $j$ need to be identified to make the irredundant function $h^\dagger$.
   The same holds for any pair of variables that are in the same equivalence class for $g$.
   Hence, $h^\dagger(\bx)=f'(\bx)g'(\bx)$, where $f'$ arises from $f^\dagger$ by identifying some subset of the variables (namely, variables that are in the same equivalence class for $g$ but not $f$), and similarly for $g'$ and $g^\dagger$.

   Let $k=\ari(h^\dagger)$ and suppose $\ba,\bb\in\{0,1\}^k$ satisfy $\ba[j]=0$ and $\bb[j]=1$ for some $j\in[k]$, $\ba[i]=\bb[i]$ for all $i\in[k]\setminus\{j\}$, and $h^\dagger(\ba)>h^\dagger(\bb)$.
   Then, $f'(\ba)>f'(\bb)$ or $g'(\ba)>g'(\bb)$ (or both).
   Assume the former; in the other case, the argument is analogous with $g$ in place of $f$.
   
   The function $f'$ arises from $f^\dagger$ by identifying some variables.
   In particular, suppose the $j$-th variable of $f'$ results from identifying variables $\vc{\ell}{m}$ of $f^\dagger$ for some positive integer $m$.
   Let $\ba'$ be the extension of $\ba$ that satisfies $\ba'[p]=\ba'[q]$ for all pairs $p,q\in[\ari(f^\dagger)]$ such that variable $x_p$ was identified with variable $x_q$ in going from $f^\dagger$ to $f'$, and let $\bb'$ be the analogous extension of $\bb$.
   Then $f^\dagger(\ba')=f'(\ba)>f'(\bb)=f^\dagger(\bb')$.
   Furthermore, for all $s\in [m]$, $\ba'[\ell_s]=0$ and $\bb'[\ell_s]=1$.
   Finally, for all $i\in[\ari(f^\dagger)]\setminus\{\vc{\ell}{m}\}$, $\ba'[i]=\bb'[i]$.
   
   To use the pin-monotonicity property, we need two input bit strings for $f^\dagger$ that differ in exactly one bit.
   If $m=1$, then $\ba'$ and $\bb'$ are such a pair.
   Otherwise, we will identify two suitable bit strings from a sequence of $(m+1)$ bit strings, whose first member is $\ba'$, whose last member is $\bb'$, and where neighbouring members differ in exactly one bit.
   Formally, define the sequence $\ba_0,\vc{\ba}{m}$ as follows.
   Let $\ba_0=\ba'$ and for $s\in [m]$, let $\ba_s$ be the bit string that agrees with $\ba'$ except on the bits with indices $\vc{\ell}{s}$.
   Then, $\ba_m=\bb'$ and furthermore, for each $s\in [m]$, $\ba_{s-1}$ and $\ba_s$ differ in exactly the bit with index $\ell_s$.
   In other words, $\ba_{s-1}[\ell_{s}]=0$, $\ba_s[\ell_{s}]=1$ and $\ba_{s-1}[i]=\ba_s[i]$ for $i\in[\ari(f^\dagger)]\setminus\{\ell_s\}$.
   
   Since $f^\dagger(\ba_0) > f^\dagger(\ba_m)$, there is an integer $t\in [m]$ such that $f^\dagger(\ba_{t-1})>f^\dagger(\ba_t)$.
   Then, by pin-monotonicity of $f$, we have $f^\dagger(\ba_t)=0$ and $f^\dagger(\bc)=0$ for all $\bc\in\{0,1\}^{\ari(f^\dagger)}$ with $\bc[\ell_t]=1$.
   In particular, $\bb'[\ell_t]=1$, so we have $f^\dagger(\bb')=0$ and thus $f'(\bb)=0$ and $h^\dagger(\bb)=f'(\bb)g'(\bb)=0$.
   
   It remains to show that $h^\dagger(\bd)=0$ for all $\bd\in\{0,1\}^k$ that agree with $\bb$ on the $j$-th bit.
   Suppose $\bd\in\{0,1\}^k$ with $\bd[j]=1$.
   Let $\bd'$ be the extension of $\bd$ that satisfies $\bd'[p]=\bd'[q]$ for all pairs $p,q\in[\ari(f^\dagger)]$ such that variable $x_p$ was identified with variable $x_q$ in going from $f^\dagger$ to $f'$.
   Then $\bd'[\ell_t]=1$ since the variables with indices $\vc{\ell}{m}$ are identified to form $x_j$ in going from $f^\dagger$ to $f'$.
   Thus $f^\dagger(\bd')=0$, which implies that $f'(\bd)=0$ and $h^\dagger(\bd)=f'(\bd)g'(\bd)=0$.
   Hence, $h$ is pin-monotone and $\PM$ is closed under products.
  
  \item Finally, consider the effect of summation over a variable: let $h(\bx)=f(\bx,0)+f(\bx,1)$, where $f\in\PM$.
  We distinguish cases according to whether the final input variable of $f$ is equivalent to another input variable.
  
  \textbf{Case 1}. The final input variable of $f$ is in an equivalence class of its own.
  Then $h^\dagger(\bx)=f^\dagger(\bx,0)+f^\dagger(\bx,1)$.
  Let $k=\ari(h^\dagger)$ and suppose there exist $\ba,\bb\in\{0,1\}^{k}$ such that $\ba[j]=0$ and $\bb[j]=1$ for some $j\in[k]$, $\ba[i]=\bb[i]$ for all $i\in[k]\setminus\{j\}$, and $h^\dagger(\ba)>h^\dagger(\bb)$.
  Then either $f^\dagger(\ba,0)>f^\dagger(\bb,0)$ or $f^\dagger(\ba,1)>f^\dagger(\bb,1)$ (or both).
  In either case by pin-monotonicity of $f$ we have $f^\dagger(\bc,d)=0$ for all $d\in\{0,1\}$ and $\bc\in\{0,1\}^{k}$ with $\bc[j]=1$.
  Therefore, $h^\dagger(\bc)=f^\dagger(\bc,0)+f^\dagger(\bc,1)=0$ for any such $\bc$: that is, $h$ is pin-monotone.
  
  \textbf{Case 2}. The final input variable of $f$ is equivalent to the $\ell$-th input variable of $f$, where $\ell\in[\ari(f)-1]$.
  Then $h(\bx)=f(\bx,0)+f(\bx,1)=f(\bx,\bx[\ell])$ for all $\bx\in\{0,1\}^{\ari(h)}$, since $f(\bx,y)$ is 0 unless $y=\bx[\ell]$.
  Furthermore, $h^\dagger(\bx)=f'(\bx,\bx[\ell])$, where $f'$ arises from $f$ by identifying all variables in the same equivalence classes, except the $\ell$-th and the final variable.
  But $f'(\bx,\bx[\ell])=f^\dagger(\bx)$ since identification of two variables that are always equal does not change the value of the function.
  Therefore $h^\dagger(\bx)=f^\dagger(\bx)$, and pin-monotonicity of $f$ immediately implies pin-monotonicity of $h$.
  
  Hence, $\PM$ is closed under summation.
 \end{itemize}

 This concludes the proof.
\end{proof}

We can now prove the central theorem of this section.

\begin{repthm}{thm:mon-pm}
 \statemonpm
\end{repthm}
\begin{proof}
 Recall that $\PM$  is the set of all pin-monotone functions.
 This is shown to be a functional clone in Lemma~\ref{lem:PM_clone}.
 The pinning functions $\delta_0,\delta_1$ are in $\PM$ by Lemma~\ref{lem:pinning_PM}.
 Furthermore, by Lemma~\ref{lem:mon_ssneq_PM}, we have $\MON\subsetneq\PM$.
 
 Now let $f$ be an arbitrary strictly decreasing permissive unary function, i.e.\ $f(0)>f(1)>0$; then $f$ is irredundant.
 The bit strings $\ba=0$, $\bb=1$ satisfy the three conditions of Definition~\ref{dfn:pin-monotone} for $f$.
 Yet $f(\bb)\neq 0$, so $f$ is not pin-monotone.
 Since $f$ was arbitrary, this establishes that $\PM$ does not contain any strictly decreasing permissive unary functions.
\end{proof}

As the pinning functions are contained in $\PM$, we also have the following corollary.

\begin{cor}
 For any $\cF\sse\PM$, $\Fzo$ does not contain any strictly decreasing permissive unary functions.
\end{cor}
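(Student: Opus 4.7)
The strategy is to argue by contradiction, exploiting the fact that being a strictly decreasing permissive unary function is an open condition. The key insight is that we do not need $\PM$ to be closed under $\omega$-clone limits; the fact that $\PM$ is a functional clone containing $\delta_0$ and $\delta_1$ (Theorem~\ref{thm:mon-pm}) together with the openness of the target property will suffice.

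The first step is to note that, since $\PM$ is a functional clone and $\cF\cup\{\delta_0,\delta_1\}\sse\PM$ by hypothesis and Theorem~\ref{thm:mon-pm}, we have $\ang{\cF,\delta_0,\delta_1}\sse\PM$. In particular, any function pps-definable over a finite subset of $\cF\cup\{\delta_0,\delta_1\}\cup\{\EQ\}$ lies in $\PM$ (recall that $\EQ$ is automatically adjoined in the definition of a functional clone, and $\EQ\in\PM$ since its two arguments are $\sim_{\EQ}$-equivalent, so $\EQ^\dagger$ is the constant unary function~$1$, which is vacuously pin-monotone).

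The second step invokes Definition~\ref{def:ompclone}: every $f\in\Fzo$ admits, for each $\epsilon>0$, an $\|\cdot\|_\infty$-approximation $f'$ of the same arity that is pps-definable over a finite subset of $\cF\cup\{\delta_0,\delta_1\}\cup\{\EQ\}$. Combined with the previous step, this gives $f'\in\PM$.

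Finally, suppose for contradiction that some $f\in\Fzo$ is a strictly decreasing permissive unary function, so $f(0)>f(1)>0$. Choosing $\epsilon:=\tfrac{1}{2}\min(f(0)-f(1),\,f(1))>0$ and applying the approximation property yields a unary $f'\in\PM$ with $f'(0)>f'(1)>0$, contradicting the last clause of Theorem~\ref{thm:mon-pm}. The only genuine ``obstacle'' is conceptual: recognising that functional-clone closure of $\PM$ together with the openness of the target condition is enough, so there is no need to promote $\PM$ to an $\omega$-clone. Everything else is an elementary $\epsilon$-perturbation argument.
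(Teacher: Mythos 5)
Your proof is correct, but it rests on a misreading of the notation. In this paper the symbol $\Fzo$ denotes the \emph{plain functional clone} $\ang{\cF,\delta_0,\delta_1}$, not the $(\om,p)$-clone $\ang{\cF,\delta_0,\delta_1}_{\om,p}$. Consequently, your second and third steps are superfluous for the statement as written: once you observe, as you do in your first step, that $\PM$ is a functional clone (Lemma~\ref{lem:PM_clone}) containing $\cF$, $\delta_0$ and $\delta_1$ (Lemma~\ref{lem:pinning_PM}), you immediately get $\Fzo\sse\PM$, and Theorem~\ref{thm:mon-pm} then rules out any strictly decreasing permissive unary function in $\Fzo$. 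No $\omega$-limits enter the picture, so the ``obstacle'' you flag --- that $\PM$ is not known to be an $\omega$-clone --- simply does not arise; this one-line containment argument is exactly what the paper intends.

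That said, your $\epsilon$-perturbation argument is itself sound and establishes a genuinely stronger conclusion which the paper's terse justification does not: since being a strictly decreasing permissive unary function is an open condition with respect to the $\|\cdot\|_\infty$ metric on $\cB_1$, and every element of $\ang{\cF,\delta_0,\delta_1}_{\om,p}$ is uniformly approximable by pps-definable functions lying in $\PM$, the $(\om,p)$-clone $\ang{\cF,\delta_0,\delta_1}_{\om,p}$ also contains no strictly decreasing permissive unary functions. This strengthening is worth noting, since it is the $(\om,p)$-clone, not the functional clone, that governs the AP-reductions in Lemma~\ref{lem:clone_csp}; but you should be clear that the stated corollary is weaker and needs only your first step.
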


\bibliographystyle{plain}
\bibliography{\jobname}

\end{document}